\newcommand\norm[1]{\left\lVert#1\right\rVert}
\newcommand{\nextverbatimspread}[1]{%
  \def\verbatim@font{%
    \linespread{#1}\normalfont\ttfamily
    \gdef\verbatim@font{\normalfont\ttfamily}}
}
\titleformat*{\section}{\large \bfseries}
\titleformat*{\subsection}{\normalsize \bfseries}
\titleformat*{\subsubsection}{\small \bfseries}
\providecommand{\U}[1]{\protect\rule{.1in}{.1in}}
\newtheorem{theorem}{Theorem}
\newtheorem{corollary}{Corollary}
\newtheorem{example}{Example}
\newtheorem{lemma}{Lemma}
\newtheorem{proposition}{Proposition}
\newtheorem{remark}{Remark}
\newtheorem{assumption}{Assumption}
\DeclarePairedDelimiter{\floor}{\lfloor}{\rfloor}
\numberwithin{equation}{section}
\newif\ifshow 
\begin{document}

\pagenumbering{roman}





\title{ {\normalsize \textbf{BOOTSTRAPPING NONSTATIONARY AUTOREGRESSIVE PROCESSES  \\ \vspace{-0.5\baselineskip} WITH PREDICTIVE REGRESSION MODELS\thanks{Article history: November 2020, February 2022, July 2023.
\ 
\textcolor{blue}{\textit{MSC2020 classifications:}} 62F05, 62F12, 62M10.   
\textit{Keywords:} nonstationary autoregression, IVX filtration, bootstrap asymptotics, mixed gaussian distribution.   
}
}
}
\\
\author{ \small BY CHRISTIS KATSOURIS\footnote{Lecturer in Economics, Department of Economics, University of Exeter Business School, Exeter EX4 4PU, United Kingdom. \textit{E-mail Address}: \textcolor{blue}{c.katsouris@exeter.ac.uk}. }  
\\
\textit{University of Southampton and University of Exeter} 
      }
}

\date{}

\maketitle


\begin{abstract}
\vspace{-2.9\baselineskip}
We establish the asymptotic validity of the bootstrap-based IVX estimator proposed by \cite{phillipsmagdal2009econometric} for the  predictive regression model parameter based on a local-to-unity specification of the autoregressive coefficient which covers both nearly nonstationary and nearly stationary processes. A mixed Gaussian limit distribution is obtained for the bootstrap-based IVX estimator. The statistical validity of the theoretical results are illustrated by Monte Carlo experiments for various statistical inference problems.   



\end{abstract}



\setcounter{page}{1}
\pagenumbering{arabic}

\section{Introduction.}  

Consider the first-order autoregressive process $\left\{ X_t \right\}_{t=1}^{+ \infty}$, defined by the following recursive process
\begin{align}
\label{zero.model}
X_t = \vartheta_n X_{t-1} + \varepsilon_t, \ \ \ \ X_0 = 0, 
\end{align}   
where $\left\{ \varepsilon_t \right\}$ are independent $\mathcal{N} (0,1)$ random sequences. The least squares estimator $\hat{ \vartheta}_n$ of $\vartheta$, based on a sample of $n$ observations $\left\{ X_1,...,X_n \right\}$ is given by the following expression
\begin{align}
\hat{\vartheta}_n = \left( \sum_{t=1}^n X^2_{t-1} \right)^{-1} \left( \sum_{t=1}^n X_{t-1} X_t \right).
\end{align}
A well known asymptotic result is that $\hat{\vartheta}_n$ is a consistent estimator for $\vartheta_n$ for all values of $\vartheta_n \in \left( - \infty, + \infty  \right)$. Specifically, the asymptotic distribution of $\hat{\vartheta}_n$ depends on restrictions imposed on the admissible parameter space of $\vartheta$. In particular, in the stable case, which implies that the parameter space of $\vartheta$ takes values within the unit circle, that is, $| \vartheta | < 1$, various seminal studies such as  \cite{mann1943statistical} and \cite{anderson1959asymptotic} among others, have shown that
\begin{align}
\sqrt{n} \left( \hat{\vartheta}_n -  \vartheta_n \right) \overset{ d }{ \to } \mathcal{N} \big( 0, 1 - \vartheta^2 \big) \ \ \ \text{as} \ \ n \to \infty.
\end{align}   
When the true parameter lies on the boundary of the parameter space, such that $| \vartheta | = 1$, it has been shown by  \cite{white1958limiting} and \cite{phillips1987towards} that the following limiting distribution holds 
\begin{align}
n \left( \hat{\vartheta}_n -  \vartheta_n \right) \overset{ d }{ \to }  \left( \int_0^1 W(r)^2 ds \right)^{-1} \left( \int_0^1 W(r) d W(r) \right) \ \ \ \text{as} \ \ n \to \infty,
\end{align}

\newpage 

where $W(r)$ for some $0 \leq r \leq 1$ is a standard Wiener process within the probability space $\left( \Omega, \mathbb{P}, \mathcal{F}_t \right)$. On the other hand, when the true parameter of the autoregressive process given by expression \eqref{zero.model} is outside the unit circle (explosive parameter region), such that $| \vartheta | > 1$, it has been shown by \cite{white1958limiting} and \cite{anderson1959asymptotic} that the following asymptotic result holds 
\begin{align}
\sqrt{n} \left( \hat{\vartheta}_n -  \vartheta_n \right) \overset{ d }{ \to } \text{Cauchy} \left( 0, \vartheta^2 - 1 \right) \ \ \ \text{as} \ \ n \to \infty.
\end{align}
The particular limiting distribution discontinuity with respect to the parameter space of the autoregressive model of order one has been extensively discussed in the literature (see, \cite{bercu2001large}, \cite{jeganathan1991asymptotic},  \cite{phillips2007limit}, \cite{proia2020moderate}, \cite{benke2021nearly}). Without loss of generality, considering for each $n \in \mathbb{N}$, the statistical experiment $\mathcal{E}_n$ corresponding to the observations $\left\{ X_0, ..., X_n \right\}$, then the sequence $\left( \mathcal{E}_n \right)_{ n \in \mathbb{N} }$ is locally asymptotically normal (\textit{LAN}) when $| \vartheta | < 1$, it is locally asymptotically Brownian functional (\textit{LABF}) when $| \vartheta | = 1$, and it is locally asymptotically  mixed normal (\textit{LAMN}) when $| \vartheta | > 1$ (see, \cite{le1986asymptotic}, \cite{le2000asymptotics} and \cite{phillips1989partially}). Thus, statistical inference procedures can crucially depend on the true value of the population parameter $\theta_n$ in certain regions of the parameter space.   

Motivated by these observations various studies have focus on the development of estimation and inference methodologies for nonstationary time series models (see, \cite{rao1978asymptotic}, \cite{dickey1979distribution}, \cite{chan1987asymptotic1, chan1988limiting},  \cite{phillips1987towards, phillips1987time, phillips1988regression, phillips1988weak}, \cite{phillips1988testing}, \cite{abadir1993limiting},  \cite{larsson1995asymptotic}, \cite{buchmann2007asymptotic} and \cite{phillips2007limit} and \cite{horvath2003bootstrap}). Specifically, the class of nearly nonstationary time series models as proposed by the seminal studies of \cite{phillips1987towards, phillips1987time} who developed asymptotics for near-integrated time series provide a unified approach to inference. In practise, expressing the autoregressive coefficient with the local-unit-root specification, such that, $\vartheta_n = \big( 1 + c / n \big)$, where $n$ the sample size and $c$ denotes the nuisance parameter of persistence, allows to encompass such moderate deviations from the unit boundary (e.g., unit root, near-integrated or explosive, see \cite{stoykov2019least}). Our main research interest is to investigate the finite and large sample properties of bootstrap-based estimators and test statistics (e.g., \cite{moreno2012unit}) for predictive regression models when regressors are generated as nonstationary autoregressive processes.

Predictive regression models are commonly used for modelling macroeconomic data which exhibit high persistence (see, \cite{zhu2014predictive}). Furthermore, the nonstandard nature of the inference problem in predictive regressions has been previously discussed in the literature such as by the studies of \cite{Phillips1990statistical},  \cite{cavanagh1995inference} and \cite{jansson2006optimal}, with the main challenge being the robustness of inference methodologies to the nuisance parameter of persistence (see, \cite{astill2023bonferroni})). A unified framework for robust estimation and inference regardless of the abstract degree of persistence has been proposed by \cite{phillipsmagdal2009econometric} and \cite{kostakis2015Robust} (see, also \cite{chen2013uniform}, \cite{breitung2015instrumental}, \cite{li2017uniform, liu2019unified}, \cite{yang2021unified}  and \cite{jayetileke2021predictive}). We argue that the implementation of resampling methodologies for nonstationary time series models provides a way for conducting inference, when non-pivotal quantities are involved, which can ensure accurate asymptotic approximations (see more recently \cite{demetrescu2022extensions}). Thus, establishing the asymptotic validity (in the spirit of \cite{cavaliere2020inference}) of these bootstrapped approximations is crucial in other testing problems as well such as when considering structural break (see, \cite{georgiev2018testing}, \cite{katsouris2022partial, katsouris2023testing, katsouris2023structural} and \cite{fei2023ivx}).

\newpage 

A commonly used resampling methodology used to study the distributional properties of a statistic of interest is based on the well known Efron's bootstrap method as in \cite{efron1986bootstrap}. Various studies consider the asymptotic validity of the bootstrap-based estimator in \eqref{zero.model}. More specifically, the validity of explosive autoregressive processes is studied by \cite{basawa1989bootstrapping, basawa1991bootstrapping} while in general asymptotics for the bootstrap in stationary models is predominately concerned with independent observations (see, \cite{freedman1981bootstrapping, freedman1984bootstrapping}). Although it has been initially documented in the literature that the bootstrap cannot work for dependent processes (see, \cite{bose1988edgeworth}) it was anticipated that it would work if the dependence is taken care of while resampling (see, \cite{politis1994stationary} and \cite{paparoditis2003residual, paparoditis2005bootstrapping}). Therefore, the dependence structure of innovations found in predictive regression models  has to be correctly implemented when considering a bootstrap procedure. 

Our framework considers that the dependence structure is preserved due to the joint Gaussianity assumption imposed on the error sequences of the predictive regression model along with the nonstationary autoregressive process. In particular, motivated by the limit theory of cointegrating regressions the bootstrap implementation is adjusted to capture the dependence induced by the linear process representation of the innovation sequence (see, \cite{li1997bootstrapping}, \cite{li2001bootstrapping}, \cite{psaradakis2001bootstrap}, \cite{psaradakis2001bootstrapURs} and \cite{inoue2002bootstrapping}). Furthermore, the optimization method employed can affect the limiting distribution of related test statistics. The main quantity of interest is the risk of the normalized error, such that, $r_n \left( \hat{\theta}_n - \theta \right)$ of an estimator $\hat{\theta}_n$ with a suitable slow varying rate function $r_n$ where $r_n \to \infty$ as $n \to \infty$. The exact form of the particular rate function depends on the properties of the estimator under consideration; such as $\sqrt{n}-$consistency is only ensured for the OLS estimator while a different rate of convergence holds for the IVX estimator. Consequently, deriving probability bounds for the specific risk quantity using the Berry-Essen approach, as well as for its bootstrapped counterpart, in order to assess the efficiency of finite-sample approximations depend on the chosen estimation methodology\footnote{Related statistical theory is discussed in the books of  \cite{tanaka2017time} and \cite{hamilton2020time} as well as in the studies of  \cite{hall2014martingale} and  \cite{billingsley2013convergence}. Moreover, \cite{bickel1981some} presents some asymptotic theory aspects for the bootstrap while \cite{mykland1992asymptotic} proposes relevant asymptotic expansions.}.

In general, by the classical CLT, the distribution of a pivotal quantity 
\begin{align}
Q_n = \sqrt{n} \big( \mu_n - \mu \big) / s_n
\end{align}
tends weakly to $\mathcal{N}(0,1)$. Let $F_n$ be the empirical distribution of $X_1,..., X_n$, putting mass $1/n$ on each $X_t$. When one considers the bootstrap resampling, the $n$ data points $X_1,..., X_n$ are treated as a population, with distribution function $F_n$ and mean $\mu_n$, while $\mu_n^*$ is considered as an estimator of $\mu_n$. In practise, the idea is that the behaviour of the quantity $Q_n^*$ mimics that of $Q_n$. Thus, the distribution of $Q_n^*$ could be computed from the data and used to approximate the unknown sampling distribution of $Q_n$ (see, \cite{praestgaard1993exchangeably}). 

This paper develops limit theory for the asymptotic validity of the bootstrapped based IVX estimator in predictive regression models that satisfies the regularity conditions of valid instrumental based estimators such as relevance and orthogonality while preserving the robust properties to the nuisance parameter of persistence. To the best of our knowledge this is a novel implementation and further establishes the robust properties of the IVX estimator for bootstrap-based inference.

\newpage 

\subsection{Preliminary Theory.}

Concurrent with our work, and independently, \cite{georgiev2021extensions} also develop asymptotic results for the bootstrap test based on the IVX estimator who focus on predictability tests based on the sub-sample approach. Another key point here is that a comparison of the asymptotic behaviour of various bootstrapping procedures especially for the case of predictive regression models with regressors of abstract degree of persistence will be essential to obtain insights regarding the accuracy of bootstrap approximations. Notice that the IVX estimator is a consistent estimator of the predictive regression model coefficient although it has a different convergence rate in comparison to the OLS estimator. Therefore, the bootstrap which is well known to deliver asymptotic refinements over first-order asymptotic approximations as well as bias corrections can be applied to solve the asymptotic bias problem (see \cite{cavaliere2022bootstrap} for the counterargument). On the other hand, any additional bias term that appears in the asymptotic expansion reduces the precision of the test. 

Consider the partial-sum process below:  
\begin{align}
W_n(t) = \frac{1}{ \sqrt{n} } \sum_{i=1}^{ \floor{nt} } Z_i^n,    
\end{align}
then it holds that 
\begin{align}
X_n(t) = X_n(0) + \int_0^{ \floor{nt} / n } b \big( X_n(s) \big) ds + \int_0^t \sqrt{ a \big( X_n(s - ) \big) } dW_n(s) + error.   
\end{align}
where $W$ is the standard Brownian motion. 

Moreover this convergence suggests that $X_n$ should converge to a solution of the limiting stochastic differential equation. Thus, a key step in the application of the stochastic differential equation approach is to show that the sequence of stochastic integrals in the approximating equation converges to the corresponding stochastic integral in the limit\footnote{A specific example is the case in which one considers the random weighted bootstrap. In this case it is more interesting to study the Marcinkiewicz-Zygmund strong law for randomly weighted sums of dependent random variables under dependent random designs. }. Let $D$ be the space of cadlag functions $f: [0,1] \to \mathbb{R}$ equipped with the Skorokhod topology. As a measurable structure on $D$ we consider the corresponding Borel $\sigma-$algebra $\mathcal{B}(D)$. We write $X_n \overset{d}{\to} X$ whenever $X_n$, X are random variables taking values in $\big( D, \mathcal{B} (D) \big)$ such that $X_n$ converges weakly to $X$ in $D$ as $n \to \infty$, that is, $\mathsf{lim}_{ n \to \infty } \mathbb{E} \big[ f (X_n) \big] \to \mathbb{E} \big[ f(X) \big]$ for all bounded continuous functions $f: D \to \mathbb{R}$.  

We have that 
\begin{align}
\frac{1}{a_n} \sum_{k=1}^{ \floor{n . } } \varepsilon_k \overset{d}{\to} J, \ \ \ \text{as} \ \ n \to \infty.    
\end{align}
Therefore, it holds that
\begin{align}
\mathbb{P} \big( J(t) = 0 \ \forall t \in [0,1] \big) = 0.    
\end{align}

\newpage 

We define the Ornstein-Uhlenbeck process $J_c = \big( J_c(t) \big)_{ t \in [0,1] }$ driven by $J$ by setting the following stochastic expression 
\begin{align}
J_c(t) = J(t) - c \int_0^t e^{ - c (t-s) } J(s) ds, \ \ \ t \in [0,1], c \in \mathbb{R}.   
\end{align}



The usefulness of the bootstrap resampling method is exactly to provide a way to ensure uniform size that is robust to the presence of effects (such as conditional heteroscedasticity) that can affect sufficient control of the type of errors when conducting hypothesis testing. Moreover, exact calculation of the critical value of the test statistic is approximated based on Monte Carlo simulations. Due to the computational burden of the bootstrap procedure we use relative small sample sizes. However, by taking a sufficiently large number of Monte Carlo samples the approximation error can be made arbitrary small thus ensuring that the bootstrap approximation will work well given regularity conditions hold that we discuss in more details below.  

Consider a test statistic $\mathcal{T}_n$ of a parameter estimator such that $\mathcal{T}_n = \mathsf{\eta}(n) \left( \hat{\theta}_n - \theta_0 \right)$. Let $\mathcal{T}^{*}_n$ denote the bootstrap version of $\mathcal{T}_n$ such that $\mathcal{T}^{*}_n = \mathsf{\eta}(n) \left( \hat{\theta}^{*}_n - \theta_0 \right)$. Let $\hat{F}_n (u) := \mathbb{P}^{*} \left( \mathcal{T}^{*}_n \leq u \right), u \in \mathbb{R}$ denote the distribution function conditional on the original pair of data.The bootstrap $\mathsf{p-value}$ is defined as below: 
\begin{align}
\hat{p}_n := \mathbb{P}^{*} \left( \mathcal{T}^{*}_n \leq \mathcal{T}_n \right)  = \hat{F}_n ( \mathcal{T}_n )  
\end{align}
First-order asymptotic validity of $\hat{p}^{*}_n$ requires that $\hat{p}_n$ converges in distribution to a standard uniform distribution such that $\hat{p}_n \overset{d}{\to} \mathsf{Unif} [0,1]$.

\paragraph{Minimax Optimality} A complementary aim of this paper is to show that the sufficient conditions for the error bounds are indeed necessary in some applications. Let us define a test $\phi$, which is a Borel measurable map, $\phi: \mathcal{X}_n : \mapsto \left\{ 0, 1 \right\}$. Then, for a class of null distributions $\mathcal{P}_0$, we denote the set of all level $\alpha$ tests by  
\begin{align}
\Phi_{n, \alpha} := \left\{  \phi: \underset{ P \in \mathcal{P}_0 }{ \mathsf{sup} } \ \mathbb{P}_P^{(n)} \left( \phi = 1 \right) \leq \alpha \right\}.   
\end{align}

We focus on the minimax framework to evaluate the performance of test statistics and model estimators which enables us to study the fundamental limits of hypothesis testing while ensuring a (strong) uniform guarantee over a large class of (null and alternative) distributions. The notion of minimax performance is widely used to quantify the difficulty of a statistical problem. More specifically, in the hypothesis testing literature, it is also common to study the performance of tests against fixed or directional alternatives such as cases in which $\beta_n = \beta_0 + \frac{\delta}{\sqrt{n}}$ for some $\delta > 0$. For example, the problem of severe empirical size distortions when $\rho$ is large it has been also documented in the case of structural break tests in dynamic econometric models (see, \cite{kuan1994implementing}). Therefore, this implies that we are likely to obtain false rejection of constant coefficients in dynamic models when variables have high persistence and endogeneity. 

\newpage 

\subsection{Outline of the paper.}

The rest of the paper is organized as follows. In Section \ref{Section2}, we introduce the predictive regression model and model assumptions. We motivate the implementation of the IV based estimator which is robust to the nuisance parameters of persistence. In Section \ref{Section3}, we prove the asymptotic validity of the bootstrap IVX estimator and verify that we obtain an identical limit distribution as in the standard case. Section \ref{Section4}, provides a Monte Carlo Simulation study. In particular, the asymptotic properties of the different bootstrap proposals under the null and under the alternative are investigated and their relative power performance is analyzed. 
Section \ref{Section5}, concludes. 

\subsection{Notation.}

Throughout the paper, all limits are taken as $n \to \infty$, where $n$ is the sample size. Then, the symbol $"\Rightarrow"$ is used to denote the weak convergence of the associated probability measures as $n \to \infty$ (as defined in \cite{billingsley2013convergence}) which implies convergence of sequences of random vectors of continuous c\`adl\`ag functions on $\mathcal{D} \left( [0,1] \right)$ equipped with the Skorokhod topology. The symbol $\overset{d}{\to}$ denotes convergence in distribution and $\overset{p}{\to}$ denotes convergence in probability within a suitable probability space $( \Omega, \mathcal{F}_t, \mathbb{P} )$. Finally, $\mathcal{O}(a_n)$ denotes the rate of convergence of estimators, such that if $\theta_n = \mathcal{O}(a_n)$, implies that, $\theta_n / a_n$ is bounded. Also we denote with $\mathbb{P}^{*}$, $\mathbb{E}^{*}$ and $\mathsf{Var}^{*}$ the probability measure, expected value and variance operators induced by the bootstrap resampling procedure conditional on the time series observations of the original sample.

\section{The model, main assumptions and estimators.}
\label{Section2}

In this section we introduce the model under consideration in this paper, the main modelling assumptions as well as two different estimation methodologies proposed in the literature. 

\subsection{Predictive regression model.}

Consider the predictive regression model given by 
\begin{align}
\label{predictive1}
Y_t &= \beta_n X_{t-1} + u_{t}, \ \ t \in \left\{ 1,...,n \right\},  \\
\label{predictive2}
X_t &= \rho_n X_{t-1} + v_t, \ \ \text{with} \ \ X_0 = 0.
\end{align}

Then, the regressor $X_t$ of the predictive regression model is assumed to be generated via the following process
\begin{align}
X_t = \varrho_n X_{t-1} + v_t, \ \ \ X_0 = 0, \ \ \varrho_n = \left( 1 + \frac{ c }{ n^{\gamma} } \right), \ \ \ t \in \left\{ 1,...,n \right\},
\end{align} 
where $\upbeta_n \in \mathbb{R}$ is the (model) parameter of interest while the autoregressive coefficient $\varrho_n  \in \mathbb{R}$ is defined such that $\varrho_n = \big( 1 +  \frac{ \displaystyle c }{ \displaystyle n^{\gamma} } \big)$ with $c  \in ( - \infty , + \infty)$ and $\gamma \in (0,1]$.

\newpage 

Notice that $v_t$ is assumed to be a zero mean, stationary and ergodic process with finite autocovariances given by $\tilde{\sigma} := \mathbb{E} \left( v_t v_{t-j} \right)$ and $\omega^2 = \sum_{j = - \infty}^{ \infty } \tilde{\sigma}$ is finite and nonzero. Furthermore, the nuisance coefficients $c$ and $\gamma$ are considered as tuning parameters which determine the degree of persistence for the regressors and therefore without loss of generality the particular novel feature proposed by \cite{phillips1987towards, phillips1987time} provides a tractable way to consider the nature of possible nonstatioary regressors when estimating predictive regressions. In terms of distributional assumptions, we assume that the predictive regression model is generated from a bivariate Gaussian distribution, such that $e_t = \left( u_t, v_t \right)^{\prime} \overset{ \textit{i.i.d} }{ \sim } \ \mathcal{N} \big( 0, \boldsymbol{\Sigma} \big)$ with $t \in \left\{ 1,...,n \right\}$. Further related theoretical results which apply to the innovation sequence of the system can be found in \cite{phillips1992asymptotics}. 

\begin{assumption}
\label{assumption1}
Let $\left\{ \epsilon_t \right\}_{ t \in \mathbb{Z} }$ be a sequence of \textit{i.i.d} continuous random variables with $\mathbb{E} \left( \epsilon_1 \right) = 0$ and $\mathbb{E} \left( \epsilon^2_1 \right) = 1$, and with a characteristic function $\phi(\tau)$. Then, the regressor $X_t$ of the predictive regression model is assumed to be generated via the following process
\begin{align}
X_t = \varrho_n X_{t-1} + v_t, \ \ \ X_0 = 0, \ \ \varrho_n = \left( 1 + \frac{ c }{ n^{\gamma} } \right), \ \ \ t \in \left\{ 1,...,n \right\},
\end{align} 
with $\gamma = 1$, where $\kappa$ is a real constant such that $c \in ( - \infty, 0)$  and $v_t = \sum_{j=0}^{\infty} c_j \epsilon_{t-j}$, with $\sum_{j=0}^{ \infty } c_j \neq 0$ and the                                          summability condition $\sum_{j=0}^{ \infty } j^{1+ \delta} | c_j | < \infty$ holds for some $\delta > 0$.
\end{assumption}  

\begin{assumption}
\label{assumption2}
\textit{(i)} Let $\left\{ u_t, \mathcal{F}_{t} \right\}_{ t \geq 1 }$, where $\mathcal{F}_{t}$ is a sequence of increasing $\sigma-$fields which is independent of $\epsilon_j$, $j > t $, forms a martingale difference sequence satisfying  $\mathbb{E} \big( u^2_{t} | \mathcal{F}_{t-1} \big) \to \sigma^2 > 0$, as $t \to \infty$ and $\underset{ t \geq 1 }{ \mathsf{sup} } \ \mathbb{E} \big( | u_{t}|^4 \big| \mathcal{F}_{t-1} \big) < \infty$.

\textit{(ii)} $X_t$ is adapted to $\mathcal{F}_t$, and there exists a correlated Brownian motion $\big( W, V \big)$ such that the following joint weakly convergence holds
\begin{align}
\left( \frac{1}{\sqrt{n} } \sum_{j=1}^{[nr]} \epsilon_{j-1} , \frac{1}{\sqrt{n} } \sum_{j=1}^{ [nr] } u_{j}   \right) \Rightarrow \big( W(r), V(r) \big)
\end{align}
on $\mathcal{D}_{ \mathbb{R}^2 } \left( [0,1] \right)^2$ as $n \to \infty$.
\end{assumption}
Define the partial sum process $V_n(r)$ and $U_n(r)$ as below
\begin{align}
\big( V_n(r), U_n(r) \big) = \frac{1}{ \sqrt{n} } \sum_{t=1}^{ \floor{nr} } \big( v_t, u_t \big), \ \ \ \ r \in [0,1].
\end{align}

\begin{remark}
Assumption \ref{assumption1} and \ref{assumption2} provide necessary conditions for the development of the asymptotic theory (similarly in \cite{Wang2012specification}) for the corresponding bootstrap IV based estimator we are aiming to examine its statistical validity. In terms of the estimation procedure for the parameter of the predictive regression model we consider both the least squares estimation as well as the endogenous  instrumentation method proposed by \cite{phillipsmagdal2009econometric}.     
\end{remark}

\newpage 

\subsection{Main Results}

We are interested to examine the asymptotic behaviour for the $\hat{\upbeta}_n$ estimator, in the case of both least square estimation as well as the two-step endogenous instrumental variable estimation approach followed for constructing the IVX filtration. To derive the asymptotic distribution of both estimators we consider the Ornstein-Uhlenbeck (OU) process 
\begin{align}
\label{OUprocess}
J_{c}(r) = \int_{0}^r e^{(r-s)c} dW(s), \ \ \ \ \ \text{with} \  0 \leq r \leq 1 \ \text{and} \ c \in (0,+\infty). 
\end{align}
which satisfies the stochastic differential equation $d J_{c}(r) = c J_{c}(r) dr + dW(r)$. Then, the following asymptotic results hold, based on the local to unity limit law proposed by \cite{phillips1987time,  phillips1987towards} such that $\frac{X_{[nr]}}{\sqrt{n} } \Rightarrow J_{c}(r)$, where $J_{c}(r)$ is the OU process defined in expression \eqref{OUprocess}.
\begin{align}
\frac{1}{n \sqrt{n} } \sum_{t=1}^{[nr] } X_t    & \Rightarrow \int_0^r J_{c}(s) ds, \ \ \ \ \ \ \ \ \ \text{with} \ 0 \leq s \leq r, \\
\frac{1}{n } \sum_{t=1}^{[nr] } X_t u_t  & \Rightarrow \int_0^r J_{c}(s) dW(s), \ \text{with} \ 0 \leq s \leq r. 
\end{align}
\begin{remark}
The convergence of the sample moments to their stochastic integral counterparts are proved in the seminal study of  \cite{phillips1988regression} who also derived invariance principles for the multivariate model. In our study we consider the corresponding univariate limit results since we consider the univariate predictive regression model with persistent regressors. Furthermore, we consider the bootstrap invariance principle to obtain the weak convergence of the sample moments for the bootstrapped-based test statistics. 
\end{remark}
The ordinary least squares estimate of $\upbeta_n$ denoted with $\hat{\upbeta}_n^{ols}$ is given by 
\begin{align}
\hat{\upbeta}_n^{ols} = \left( \sum_{t=1}^n X^2_{t-1} \right)^{-1} \left( \sum_{t=1}^n X_{t-1} Y_t \right)
\end{align} 
Therefore, since $\sqrt{n} \left( \hat{\upbeta}_n^{ols} - \upbeta  \right) = \mathcal{O}_p( \frac{1}{\sqrt{n}} )$ then it converges in distribution which implies 
\begin{align}
\label{limit1}
n \left( \hat{\upbeta}_n^{ols} - \upbeta  \right) 
&= \left( n^{-2} \sum_{t=1}^n X_{t-1}^2  \right)^{-1}  \left( n^{-1} \sum_{t=1}^n X_{t-1} u_t \right)  
\nonumber
\\
&\Rightarrow \left( \int_{0}^1 J^2_{c}(s) ds \right)^{-1} \left(  \int_{0}^1 J_{c}(s) dW(s) \right)
\end{align}

\medskip

\begin{remark}
The limit distribution given by \eqref{limit1} shows that the existence of the nuisance parameter $c$ in the underline stochastic process, when the parameter $\upbeta_n$ is obtained based on the least squares estimation weakly convergence to a nonstandard limit. On the other hand this is not the case for the corresponding IVX estimator as we demonstrate in the next section. 
\end{remark}

\newpage 

\subsection{IVX estimator}

In this section we discuss the main methodology employed for the estimating the nonstationary time series model. Specifically, the IVX instrumentation proposed by \cite{phillipsmagdal2009econometric} provides important  advantages rather than using the conventional differencing approach prior to fitting the model. Although employing the difference operator is a common approach for ensuring covariance stationarity in time series observations due to the fact that it can remove trends and seasonality it can also contribute to loss of information such as the long-memory properties\footnote{In particular, \cite{kasparis2015nonparametric} consider how the interplay of the long-memory properties of innovations and the persistence properties of regressors in a nonparametric predictive regression affect the limiting distributions of test statistics and corresponding model estimators.} of these observations which are eliminated when differencing is applied. Therefore, the IVX estimator in contrast to its OLS counterpart is found to be robust to the persistence properties of the regressors as captured by the autoregressive specification without altering the persistence properties of regressors. The particular estimation approach can accommodate both nearly nonstationary and nearly stationary processes, that is, generalizing this way the integration order of regressors for the predictive regression. 

Consider the instrumental variable $Z_{tn}$ which is based on the IVX methodology of \cite{phillipsmagdal2009econometric}. Then, the IVX instrumental variable is constructed  
\begin{align}
Z_{tn} = \sum_{j=0}^{t-1} \left( 1 + \frac{c_z}{n^{\upgamma_z}}  \right)^j \big( X_{t-j} - X_{t-j-1} \big), \ \ \ \ \text{where} \ c_z \in (- \infty, 0), \upgamma_z \in (0,1).
\end{align}  
Denote with $\hat{\upbeta}_n^{ivx}$ the corresponding IVX based estimator for the parameter of the predictive regression given by the system of equations in expressions \eqref{predictive1} and \eqref{predictive2}. Then, standard IV estimation arguments provide an expression for the IVX estimator 
\begin{align}
\hat{\upbeta}_n^{ivx} =  \left( \sum_{t=1}^n Z_{t-1} X_{t-1} \right)^{-1} \left( \sum_{t=1}^n Z_{t-1} Y_t  \right)
\end{align}
Thus, we consider the limit distribution of the random variable $\psi_n$ as defined below 
\begin{align}
\psi_n :=  n^{ \frac{ 1 + \upgamma_z }{2} } \left( \hat{\upbeta}_n^{ivx} - \upbeta_n \right) \equiv \left( \frac{1}{  n^{ 1 + \upgamma_x } } \sum_{t=1}^n Z_{t-1} X_{t-1} \right)^{-1} \left( \frac{1}{  n^{ \frac{1 + \upgamma_x }{2}  } }  \sum_{t=1}^n Z_{t-1} u_t  \right)
\end{align}

\begin{lemma}
\label{lemma1.main}
Under Assumption \ref{assumption1} and \ref{assumption2}, the following invariance principle holds
\begin{align}
\sum_{t=1}^{ [nr] } Z_{t-1} X_{t-1} \Rightarrow - \frac{1}{c_z} \left( r \omega_{xx}^2 +  \int_0^r J_{c}(s) dJ_{c}(s) \right), \ \ \text{with} \ \ r \in [0,1]. 
\end{align}
\end{lemma}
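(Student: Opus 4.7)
The natural route is a summation-by-parts built on the AR(1) recursion satisfied by the IVX instrument. Writing $\rho_{nz} := 1 + c_z/n^{\upgamma_z}$, the defining expression for $Z_{tn}$ yields the first-difference identity $Z_{tn} - Z_{t-1,n} = (c_z/n^{\upgamma_z})\, Z_{t-1,n} + \Delta X_t$ with $Z_{0,n}=0$, where $\Delta X_t := X_t - X_{t-1}$. Combining this with the discrete product rule $Z_{tn}X_t - Z_{t-1,n}X_{t-1} = (Z_{tn}-Z_{t-1,n})X_{t-1} + Z_{tn}\Delta X_t$ and telescoping over $t = 1, \dots, \lfloor nr\rfloor$ delivers the decomposition
\begin{align*}
\tfrac{c_z}{n^{\upgamma_z}} \sum_{t=1}^{\lfloor nr\rfloor} Z_{t-1,n}\,X_{t-1} \;=\; Z_{\lfloor nr\rfloor,n}\, X_{\lfloor nr\rfloor} \;-\; \sum_{t=1}^{\lfloor nr\rfloor} X_{t-1}\,\Delta X_t \;-\; \sum_{t=1}^{\lfloor nr\rfloor} Z_{tn}\,\Delta X_t,
\end{align*}
so that after solving for the target sum and normalising by $n^{1+\upgamma_z}$ the problem reduces to identifying the limits of the three right-hand terms divided by $n$.

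Next I would handle each piece in turn. The boundary contribution $Z_{\lfloor nr\rfloor,n}\,X_{\lfloor nr\rfloor}/n$ is asymptotically negligible because Phillips-Magdalinos mild-integration theory gives $Z_{\lfloor nr\rfloor,n} = \mathcal{O}_p(n^{\upgamma_z/2})$ while $X_{\lfloor nr\rfloor} = \mathcal{O}_p(n^{1/2})$, so the product is of order $n^{(1+\upgamma_z)/2} = o(n)$ whenever $\upgamma_z < 1$. For the second sum, the polarisation identity $2X_{t-1}\Delta X_t = X_t^2 - X_{t-1}^2 - (\Delta X_t)^2$ combined with the partial-sum functional convergence in Assumption \ref{assumption2} and the continuous mapping theorem yields $n^{-1}\sum_{t=1}^{\lfloor nr\rfloor} X_{t-1}\Delta X_t \Rightarrow \int_0^r J_c(s)\,dJ_c(s)$, where the It\^o correction absorbs the limit of $n^{-1}\sum (\Delta X_t)^2$. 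Finally, the third sum is the source of the bias component $r\,\omega_{xx}^2$: the recursion gives $Z_{tn}\Delta X_t = \rho_{nz}\,Z_{t-1,n}\Delta X_t + (\Delta X_t)^2$, and the diagonal piece $n^{-1}\sum(\Delta X_t)^2 \to \omega_{xx}^2\, r$ by a law of large numbers for the innovations of $X_t$, while the off-diagonal cross-products $\rho_{nz}^j \Delta X_{t-j}\Delta X_t$ are shown to contribute only lower-order terms.

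Assembling the three limits and multiplying through by $n^{\upgamma_z}/c_z$ produces the stated limit $-c_z^{-1}\bigl(r\,\omega_{xx}^2 + \int_0^r J_c(s)\,dJ_c(s)\bigr)$. The main technical obstacle is the last step above: killing the off-diagonal part of $\sum Z_{tn}\Delta X_t$ requires exploiting the fact that the geometric weights $\rho_{nz}^j$ concentrate on a window of width $\mathcal{O}(n^{\upgamma_z})\ll n$, so that the Ces\`aro-type averaging over $t$ suppresses the cross-covariance contributions. A clean argument computes the $L^2$ norm of the off-diagonal double sum using the summability condition $\sum_{j\geq 0} j^{1+\delta}|c_j| < \infty$ in Assumption \ref{assumption1} together with the Phillips-Magdalinos moment bounds for partial sums of mildly integrated arrays; tightness in the Skorokhod topology then upgrades the pointwise-in-$r$ convergence to functional weak convergence on $\mathcal{D}([0,1])$, which is what the invariance principle in the statement requires.
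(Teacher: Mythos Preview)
Your summation-by-parts route differs from the paper's argument, which (following \cite{georgiev2021extensions}) instead invokes the Phillips--Magdalinos approximation for $Z_{t-1}$ up front and decomposes $n^{-(1+\upgamma_z)}\sum Z_{t-1}X_{t-1}$ directly into pieces of the form $n^{-1}\sum v_{t-1}X_{t-1}$ and $n^{-2}\sum X_{t-2}X_{t-1}$, then applies near-integration asymptotics together with the semimartingale identity $J_c^2(r)=2\int_0^r J_c\,dJ_c+[J_c]_r$. Your decomposition is exact from the start, which is attractive, but it pushes the burden onto a careful accounting of the one-sided long-run covariance of $v_t$, and that is where your sketch has a real gap.

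Under Assumption~\ref{assumption1} the innovations $v_t$ form a serially correlated linear process, so $n^{-1}\sum_{t\le\lfloor nr\rfloor}(\Delta X_t)^2$ converges to $r\,\mathrm{Var}(v_t)$, the \emph{short}-run variance, not to $r\,\omega_{xx}^2$; your ``law of large numbers'' claim for the diagonal of the third term is therefore incorrect. Equally, the off-diagonal cross-products $\rho_{nz}^{\,j}\Delta X_{t-j}\Delta X_t$ are \emph{not} lower order: since $\rho_{nz}^{\,j}\to 1$ for each fixed $j$ and the autocovariances $\gamma_v(j)$ are summable, their Ces\`aro contribution converges to $r\lambda$ with $\lambda=\sum_{j\ge 1}\gamma_v(j)$. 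The same mismatch appears in your second term: the It\^o correction in the limit of $\tfrac12 n^{-1}X_{\lfloor nr\rfloor}^2$ is the quadratic variation $\tfrac12\omega_{xx}^2 r$, which does not cancel $\tfrac12 n^{-1}\sum(\Delta X_t)^2\to\tfrac12\mathrm{Var}(v_t)\,r$, leaving a residual $\lambda r$. These two $\lambda r$ discrepancies happen to cancel when you assemble the pieces, so your final limit is correct, but as written neither of the two intermediate limits is; the argument needs to carry the one-sided long-run covariance through explicitly rather than asserting it away.
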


\medskip

\begin{remark}
Related weakly convergence arguments which can be employed for the proof of Lemma \ref{lemma1.main} can be found in \cite{phillipsmagdal2009econometric}. The proof of the particular result consists of a key limit result for deriving the asymptotic distribution of the bootstrap based IVX estimator which is the main goal of this paper. 
\end{remark}

\newpage 

A simple implementation of the limit given by Lemma \ref{lemma1.main} after appropriate normalization shows that the following nuisance-free limiting distribution holds
\begin{align}
\psi_n :=  n^{ ( 1 + \upgamma_z ) / 2 } \left( \hat{\upbeta}_n^{ivx} - \upbeta_n \right) \Rightarrow \psi \equiv \mathcal{MN} \left( 0, \tilde{ \boldsymbol{\Sigma} } \right)
\end{align}
where $\psi$ follows a mixed Gaussian distribution. The mixed Gaussian property of the $\psi_n$ random variate, is instrumental in developing inference procedures robust to abstract degree of persistence (see, \cite{phillips2013predictive, phillips2016robust}). 

Therefore, it remains to determine the components of the stochastic covariance matrix $\tilde{ \boldsymbol{\Sigma} }$. More precisely, to determine the covariance matrix of the random variable $\psi$, we consider the martingale difference sequence $\xi_{nt} :=  \left( \frac{1}{ n^{ ( 1 + \upgamma_z ) / 2 } } Z_{t-1} u_t , \frac{1}{ \sqrt{n} } v_t  \right)^{ \prime }$. Then the martingale conditional variance of $\xi_{nt}$ has the following structure 
\begin{align}
\label{mds.covariance1}
\sum_{t=1}^n \mathbb{E} \big( \xi_{nt} \xi_{nt}^{\prime} | \mathcal{F}_{nt-1} \big) =
\begin{bmatrix}
\displaystyle \left( \frac{1}{ n^{ 1 + \upgamma_z } } \sum_{t=1}^n Z_{t-1}^2 \right) \sigma^2_{uu} \ \ & \ \ \displaystyle \left( \frac{1}{ n^{1 + \frac{\upgamma_z}{2}}} \sum_{t=1}^n Z_{t-1} \right) \sigma_{uv} 
\\
\\
\displaystyle \left( \frac{1}{ n^{1 + \frac{\upgamma_z}{2}}} \sum_{t=1}^n Z_{t-1} \right) \sigma_{vu} \ \  & \ \ \displaystyle \sigma^2_{vv}
\end{bmatrix}
\end{align}

\begin{remark}
Note that as shown by \cite{phillips2007limit} $n^{ - \left( 1 + \upgamma_z / 2 \right)} \sum_{t=1}^n z_{t-1} = o_p(1)$, which implies that the martingale conditional covariance \eqref{mds.covariance1} has off-diagonal elements which converge in probability to zero. Furthermore, note that without loss of generality, we operate under the assumption that $\upgamma_x = 1$ and $c \in (- \infty, 0)$. Relaxing the parameter space of these two quantities allows to consider regressors generated from general nonstationary autoregressive processes, covering both near nonstationary and near stationary processes. Nevertheless, the robust property of the IVX estimator implies that regardless of the persistence properties of regressors asymptotically converges to a pivotal mixed Gaussian distribution. Therefore, the particular large sample property result to conventional statistical inference when considering Wald type statistics after suitable normalization (see,  \cite{phillips2013predictive, phillips2016robust}).     
\end{remark} 
Notice that it has been shown that when the residual process $u_t$ is correlated with the regressor $x_t$, the limiting distribution of the $t-$ratio is no longer normal (see, \cite{phillips1986multiple}).  As a remedy to the particular bias, the fully modified regression proposed by \cite{Phillips1990statistical} as an alternative way of constructing the test statistic as we present later. Overall a key component of our framework is that it establishes the invariance principle which concerns the weak convergence (within the Shorokhod topology) of the bootstrap partial sum process to Brownian motion. Then, based on the continuous mapping theorem, it can be used to obtain asymptotic distributions of various bootstrapped statistics without making parametric assumptions on the underlying model. In particular this approach is build upon the strong approximation and the Beveridge-Nelson representation of linear processes. The invariance principles for the $\textit{i.i.d}$ innovation and the bootstrapped counterpart are first developed using the strong approximation of the partial sum process by the standard Brownian motion, and subsequently the invariance principles for the general linear process with $\textit{i.i.d}$ innovations and the corresponding bootstrapped process are established by their Beveridge-Nelson representations as in \cite{phillips1992asymptotics}.   

\newpage 

Let the partial sum process of $( \varepsilon_t )$ be defined by 
\begin{align}
W_n(t) = \frac{1}{ \sigma \sqrt{n} } \sum_{j=1}^{ \floor{nt} } \varepsilon_j.
\end{align}
Then, we have that $W_n \overset{ d }{ \to } W$ in the space of $\mathcal{D} [0,1]$ of cadlag functions, where $W$ is the standard Brownian motion. The space $\mathcal{D} [0,1]$ is equipped with the Skorokhod topology. 

In particular for the bootstrap, we first obtain or estimate $\left\{ \varepsilon_t \right\}_{t=1}^n$ from the sample of size $n$ and get $\left\{ \hat{\varepsilon}_t \right\}_{t=1}^n$ . then, we resample from the empirical distribution of $\left\{ \hat{\varepsilon}_t \right\}_{t=1}^n$, that is, the distribution with point probability mass $1 /n$ based on the time series observations of size $n$, to get the bootstrap sample $\left\{ \varepsilon_t^{*} \right\}_{t=1}^n$ as the \textit{i.i.d} samples from the empirical distribution of $\left\{ \hat{\varepsilon}_t \right\}_{t=1}^n$. Both $( \hat{\varepsilon}_t  )$ and $(\varepsilon_t^{*})$ are dependent upon the sample size $n$, and we may more precisely denote them as triangular arrays $( \hat{\varepsilon}_{nt} )$ and $(\varepsilon_{nt}^{*})$.  

Furthermore, certain studies have previously suggested the inconsistency of the bootstrap estimator when the parameter is on the  boundary of the parameter space (see, \cite{andrews2000inconsistency}). However, notice within our setting the particular consideration has a different interpretation as such a phenomenon occurs via the local unit root process. In other words, the local-to-unity specification provides a natural bound on these type of processes when modelling dependent time series and therefore the asymptotic behaviour of estimators based on predictive regression models are analytically tractable. Therefore, when considering bootstrapping nonstationary autoregressive processes using predictive regression models, in practise we are mainly concern about deriving an identical weakly convergence argument similar to the non-bootstrapped  conventional limit result. 

Our main goal is to develop a consistent bootstrap estimator of the distribution of the stochastic quantity $\psi_n :=  n^{ \frac{ 1 + \upgamma_z }{2} } \left( \hat{\upbeta}_n^{ivx} - \upbeta_n \right)$, regardless of the persistence properties of the regressors of the predictive regression model as captured by the autoregressive process with a LUR coefficient. More precisely, under the assumption that the regressors are generated by the nonstationary process $X_t = \varrho_n X_{t-1} + v_t$ with $\varrho_n = \left( 1 - \frac{ \kappa }{ n^{\upgamma_x} } \right)$ in Assumption \ref{assumption1}, then we aim to investigate whether $\psi_n$ and $\psi_n^{\star} :=  n^{ \frac{ 1 + \upgamma_z }{2} } \left(  \hat{\upbeta}_n^{\star ivx} - \upbeta_n \right)$ converge to the same distribution limit within a suitable probability space. To do this, we consider all features of the distribution of $\psi_n$ such as the covariance matrix in \eqref{mds.covariance1}.

\section{The Bootstrap and Hypothesis testing.}
\label{Section3}

A commonly used approach to evaluate the limiting distributions in nonstationary time series models based on asymptotic approximations relies on finite samples bootstrapped samples. The exact type of the bootstrap procedure follwed to replicate the dependence structure of the model can affect both the validity of the results as well as the accuracy of those approximations. In this section, we study the asymptotic behaviour of the bootstrap based IVX estimator for different bootstrap schemes. To do this, we impose additional assumptions for the associated partial sum processes. Let $\mathcal{D}$ be the space of cadlag functions, $f : [0,1] \to \mathbb{R}$ equipped with the Shorokhod topology (e.g., see \cite{csorgHo2003donsker}).

\newpage

\paragraph{Notation} Denote with $\mathcal{Z}_{nt}^{*}$ a sequence of bootstrap statistics, then the following modes of convergence hold (see related definitions in  \cite{goncalves2015Bootstrap})
\begin{itemize}
\item $\mathcal{Z}_{t}^{*} = o_{ P^{*} }(1)$ in probability, or $\mathcal{Z}_{t}^{*} \overset{ P^{*} }{ \to } 0$ in probability, if for any 
\begin{align}
\epsilon > 0, \delta > 0 \ \ \underset{ n \to \infty }{ \mathsf{lim} } \mathbb{P} \big[ \mathbb{P}^{*} \big( \left| \mathcal{Z}_{n}^{*} \right| > \delta \big) > \epsilon \big] = 0.
\end{align}
\item $\mathcal{Z}_{t}^{*} = \mathcal{O}_{ P^{*} }(1)$ in probability, if for all $\epsilon > 0$ there exists a $M_{\epsilon} < \infty$ such that 
\begin{align}
\underset{ n \to \infty }{ \mathsf{lim} } \mathbb{P} \big[ \mathbb{P}^{*} \big( \left| \mathcal{Z}_{n}^{*} \right| > M_{\epsilon} \big) > \epsilon \big] = 0.
\end{align} 
\item $\mathcal{Z}_{n}^{*} \overset{ d^{*} }{ \to } \mathcal{Z}$ in probability if, conditional on the sample, $\mathcal{Z}_{n}^{*}$, weakly converges to $\mathcal{Z}$ under $P^{*}$, for all samples contained in a set with probability converging to one. Specifically, we write $\mathcal{Z}_{n}^{*} \overset{ d^{*} }{ \to } \mathcal{Z}$ in probability if and only if $\mathbb{E}^{*} \big( f \left( \mathcal{Z}_{n}^{*} \right) \big) \to \mathbb{E} \left( f(\mathcal{Z}) \right)$ in probability for any bounded and uniformly continuous function $f$.  
\end{itemize}
Therefore, for the asymptotic validity of the bootstrap algorithm one needs to consider similar arguments as when deriving the limiting distribution of nonstationary time series models. In particular, \cite{muller2011efficient} discusses various aspects of asymptotic efficiency with respect of the weak convergence of functionals to their brownian motion counterparts. 

\subsection{Bootstrap Procedure.}

Similar investigation in the literature with respect to the bootstrap procedure include the study of \cite{davidson2008wild} who consider a framework for the Wild bootstrap although not in a nonstationary time series regression environment. In this paper we apply the Wild bootstrap algorithm using the OLS as well as the IVX residuals. Another example include the study of \cite{phillips1990asymptotic} who develop asymptotic theory for residual-based tests of cointegration. Next, we describe the estimation procedure for the bootstrap based IVX estimator, which can help us shed light on the specific asymptotic expressions and properties we need to consider in order to derive its limiting distribution. 

Consider again the predictive regression with multiple predictors. Simulate the following DGP 
\begin{align}
Y_t &= \upbeta_n X_{t-1} + u_{t}, \ \ t \in \left\{ 1,...,n \right\},  
\\
X_t &= \varrho_n X_{t-1} + v_t, \ \ \text{with} \ \ X_0 = 0.
\end{align}
where the parameters of interest are $\upbeta_n$ and $\varrho_n$.  Denote with $u_t = \left( u_{t},  v_{t} \right)^{\prime}$ the  martingale difference sequence vector with variance-covariance matrix given by 
\begin{align}
\boldsymbol{\Sigma}  := \text{Var}( u_t ) 
= 
\begin{bmatrix}
\sigma^2_u  & \ \sigma_{uv}
\\
\sigma_{vu}  & \ \sigma_{vv}
\end{bmatrix}
\end{align}  
where $\boldsymbol{\Sigma}$ is a known positive-definite matrix.

\newpage 

Specifically, we are motivated in establishing the asymptotic validity of the bootstrap for the IVX estimator since its is employed in statistical problems for which the limiting distribution is non-standard such as when testing for structural breaks in predictive regression models and the bootstrap is necessary for deriving critical values. While for instance, we focus in the case of IVX-based predictability tests, the limit results we derive in this paper can be generalized for the supremum IVX-Wald statistic when testing for parameter instability in predictive regression models. The first Bootstrap algorithm we consider is described below. 

\medskip

\textbf{Bootstrap Algorithm}

\begin{itemize}
\item[\textbf{Step 1.}] Estimation Step
\begin{itemize}
\item[1.1.] Fit the predictive regression to the sample data $\left( Y_t, X_{t-1} \right)^{\prime}$ to obtain the OLS residuals $\hat{u}_t$, for $t \in \left\{ 1,..., n \right\}$. Obtain the estimate of $\hat{\upbeta}_n$.

\item[1.2.] Fit by OLS the AR(1) model with the LUR coefficient, to the regressor $X_t$, to obtain the OLS residuals $\hat{v}_t$, for $t \in \left\{ 1,..., n \right\}$. Set $\hat{v}_t = 0$. Obtain $\hat{\varrho}$. 
\end{itemize}
\end{itemize}

In this section, we assume that the autoregressive coefficient is given by $\rho_n = \left( 1 - \frac{\kappa}{n} \right)$, which allows to examine the limit theory of the  estimate of the autoregressive coefficient for moderate deviations from unity. Note that the estimate of $\varrho_n$ is obtained by ordinary least squares estimation which is expressed as 
\begin{align}
\hat{ \varrho}^{ols}_n = \left( \sum_{t=1}^n X^{2}_{t-1} \right)^{-1} \left( \sum_{t=1}^n X_t X_{t-1} \right)
\end{align}

\begin{itemize}

\item[\textbf{Step 2.}] Generate the bootstrap sample $\left\{ Y_t^{*}, X_t^{*}, t = 1,...,n \right\}$.
\begin{itemize}
\item[2.1.] Generate the bootstrap innovations $w_t^{*} = \left( u_{t}^{*},  v_{t}^{*\prime} \right)^{\prime} := \left( e_t \otimes \hat{u}_t, e_t \otimes \hat{v}_t \right)$, where the symbol $\otimes$ denotes element-wise vector multiplication, such that $e_t$ for $t \in \left\{ 1,..., n \right\}$, is a scalar sequence independent from the data such that $e_t \overset{ \textit{i.i.d} }{ \sim } \mathcal{N}(0,1)$.   

\item[2.2.] Generate $X_t^{*}$ such that 
\begin{align}
X_{t_b}^{*} = \hat{\varrho} X_{t_b -1}^{*} + u_{t_b}^{*}, \ \ \ \ \ \text{for}  \ t_b \in \left\{ 1,...,n \right\},
\end{align}
with initial conditions $X_0^{*} = 0$ and $b \in \left\{ 1,...,B \right\}$. 

\item[2.3.] Generate the associated bootstrap IVX instrument $z_t^{*}$ as below
\begin{align}
Z_0^{*} = 0 \ \ \ \text{and} \ \ \ Z_t^{*} = \sum_{j=0}^{t_b-1} \left( 1 + \frac{\kappa_z}{n} \right)^j  \Delta X_{t_b-j}^{*} , \ \ \text{for} \ \ t \in \left\{ 1,...,n \right\},
\end{align}
where $\kappa_z$ is the coefficient matrix for the persistence of the bootstrap IVX instruments that contains the same values as the original IVX instruments. 

\item[2.4.] Generate $Y_{ t_b }^{*}$ such that $y_{ t_b }^{*} =    \hat{\upbeta} X_{ t_b -1 }^{*}  + u_{ t_b }^{*}$. 
Set $Y_1^{*} = Y_1$.

\item[2.5.] Repeat Steps 2.1 to 2.4 $B$ times (e.g., $B = 1000$).  
\end{itemize}
 
\end{itemize}

\newpage 

\begin{remark}
Notice that the asymptotic validity of the bootstrap IVX estimator is an important property of general interest with applications in various inference problems. For example, \cite{katsouris2022partial} considers a framework for structural break testing in predictive regression models with persistent regressors. In the particular framework the bootstrap validity of the IVX estimator is necessary to ensure the consistent estimation of critical values from the bootstrap limiting distribution of the corresponding structural break test statistics. Furthermore, \cite{georgiev2021extensions} study the implementation of both the residual wild bootstrap as well as the fixed regressor wild bootstrap for the IVX estimator which is useful when constructing predictability tests (e.g., t-tests or Wald type statistics) based on the predictive regression model. 
\end{remark} 
Then, for the estimation step one can consider estimating $\beta_n$ either via the classical least squares estimation, denoted with $\hat{\beta}_n^{LS}$, or via the IVX instrumentation methodology proposed by \cite{phillipsmagdal2009econometric}, denoted with $\hat{\beta}_n^{IVX}$. The least squares estimate of $\beta_n$ is  
\begin{align}
\hat{\beta}_n^{LS} = \left( \sum_{t=1}^n X^2_{t-1} \right)^{-1} \left( \sum_{t=1}^n X_{t-1} Y_t \right)
\end{align}
The corresponding IVX estimate of $\beta_n$ is
\begin{align}
\hat{\beta}_n^{IVX} =  \left( \sum_{t=1}^n Z_{t-1} X_{t-1} \right)^{-1} \left( \sum_{t=1}^n Z_{t-1} Y_t  \right)
\end{align}
where 
\begin{align}
Z_{t} = \sum_{j=0}^{t-1} \left( 1 - \frac{\kappa_z}{n^{\gamma_z}}  \right)^j \left( X_{t-j} - X_{t-j-1} \right), \ \ \ \ \text{with} \ \kappa_z \in (0, + \infty), \gamma_z \in (0,1).
\end{align}  
We use for example, $\kappa_z = 1$ and $\gamma_z = 0.95$. 
\ 

From Step1A and Step1B, we can obtain the residual sequences, $ \left\{ \hat{u}_t^{IVX}, \hat{v}_t^{LS} \right\}_{t=1}^n$. To simplify the notation, for the remaining of this Section, we simply denote with $\hat{u}_t$ the corresponding residual sequence obtained via the IVX instrumentation procedure and $\hat{v}_t$ the residual sequence obtained via Least squares estimation. That is, 
\begin{align}
\hat{u}_t = \left( Y_t - \hat{ \beta}_n^{IVX} X_{t-1} \right)
\end{align}

Next, we define the centered residuals for both sequences, such that 
\begin{align}
\tilde{u}_t = \hat{u}_t- \frac{1}{n} \sum_{t=1}^n \hat{u}_t 
\\
\tilde{v}_t = \hat{v}_t- \frac{1}{n} \sum_{t=1}^n \hat{v}_t 
\end{align}   
Denote with $\tilde{F}_n$ the empirical distribution function based on $\left\{ \tilde{e}_t \right\}_{t=1}^n$ where $\tilde{e}_t = \left( \tilde{u}_t, \tilde{v}_t \right)^{\prime}$. Thus, $\tilde{F}_n$ associates mass $n^{-1}$ to each of $\tilde{e}_t, t =1,....,n$. Now, assuming that $\tilde{F}_n$ is the true distribution, draw a random sample $\left\{ e^{*}_t, t = 1,...,n  \right\}$ from $\tilde{F}_n$. Therefore, conditionally on $\left( X_1,..., X_{n-1}  \right)$, the random variables $\left\{ e^{*}_t, t = 1,...,n  \right\}$ are i.i.d with distribution function $\tilde{F}_n$.

\newpage

\underline{Step 2:} (Bootstrap Step)

We can now construct the bootstrap sample $\left\{ X_t^{*}, t = 1,...,n  \right\}$ recursively via the following expression 
\begin{align}
X_t^{*} = \hat{\rho}_n X_{t-1}^{*} + v^{*}_t, \ \ \ \ t = 1,...,n, 
\end{align} 
with $X_0 = 0$, where $\hat{\rho}_n$ it has been estimated in Step 1 by Least squares estimation.  

Next, we construct the bootstrap sample $\left\{ Y_t^{*}, t = 1,...,n  \right\}$ recursively via the following expression
\begin{align}
Y_t^{*} = \hat{\beta}_n X_{t-1}^{*} + u^{*}_t, \ \ \ \ t = 1,...,n, 
\end{align}
where $\hat{\beta}_n$ it has been estimated in Step 1 by the IVX instrumentation methodology. 

Next, we re-estimate the model parameter of interest, denoted with  $\hat{\hat{\beta}}_n$, by applying again the IVX instrumentation on the generated sequence of $\left\{ Y_t^{*}, X_t^{*} \right\}$, from Step 1. That is,   
\begin{align}
\hat{ \hat{\beta}}_n^{*} =\left( \sum_{t=1}^n Z^{*}_{t-1} X^{*}_{t-1} \right)^{-1} \left( \sum_{t=1}^n Z^{*}_{t-1} Y^{*}_t  \right) 
\end{align}
where 
\begin{align}
Z^{*}_{t} = \sum_{j=0}^{t-1} \left( 1 - \frac{\kappa_z}{n^{\gamma_z}}  \right)^j \left( X^{*}_{t-j} - X^{*}_{t-j-1} \right), \ \ \ \ \text{with} \ \kappa_z \in (0, + \infty), \gamma_z \in (0,1).
\end{align}  
We use for example, the same values as in Step 1, that is, $\kappa_z = 1$ and $\gamma_z = 0.95$.
\

\medskip

\underline{Step 3:} (and onwards)
\

For the remaining bootstrap steps, that is, $b = \left\{ 1,...,B \right\}$ we generate a sequence of IVX estimates that is $\underline{\beta}^{*IVX} = \left( \beta^{*IVX}_{n1}, ...., \beta^{*IVX}_{nB} \right)$.
 
\medskip

The main aim of this Section, is to derive the limit distribution of $\underline{\beta}^{*IVX}$ given $\left( X_1, ... X_n \right)$ and to show that it is the same as the original limit distribution of the IVX estimator. Notice that in practise the bootstrap procedure should be able to capture the time series dependence structure. It is worth mentioning that we need to employ some autoregressive robust covariance matrix estimation methodology when applying our proposed bootstrap algorithm to ensure that the dependence structure of the nonstationary time series model is preserved which safeguards the consistent estimation of the Wald-type statistic (\cite{jansson2004error} and \cite{jansson2012nearly}).

\newpage

\subsection{Asymptotic distribution result.}

We consider a triangular array of paired random variables such that $\left\{ X_{k,n}, Y_{k,n} \right\}$ with $n \geq 1$ and $k \geq 1$ satisfying
\begin{align}
\label{system1}
Y_{k,n} &= \upbeta_n X_{k-1,n} + u_{k}, 
\\
\label{system2}
X_{k,n} &= \varrho_n X_{k-1,n} + v_{k}, \ \ \text{with} \ \ X_0 = 0.
\end{align}
where  $e_k = \left( u_k, v_k \right)^{\prime} \sim \ \mathcal{N} \big( 0, \boldsymbol{\Sigma} \big)$. Then, we construct the following triangular array using endogenous information generated by the triangular array $\left\{ X_{k,n} \right\}_{n, k \geq 1}$ given by 
\begin{align}
\label{instrument}
Z_{k,n} = \sum_{j=0}^{t-1} \left( 1 - \frac{\kappa_z}{n^{\upgamma_z}}  \right)^j \big( X_{k-j, n} - X_{k-j-1,n} \big), \ \ \ \ \text{with} \ \kappa_z \in (0, + \infty), \upgamma_z \in (0,1).
\end{align} 
We define the following random sequence
\begin{align}
\mathcal{J}_{n} :=  \left( \sum_{t=1}^n Z_{t-1} X_{t-1} \right)^2 \left( \hat{\upbeta}_n^{ivx} - \upbeta_n \right) 
\end{align} 
where $\hat{\upbeta}_n^{ivx}$ is the IVX estimator of $\upbeta_n$. Therefore, the random measure $\mathcal{J}_{n}$ can be shown to have a limiting distribution which is given by the following expression 
\begin{align}
\label{to.show}
\mathcal{J}_{n} \overset{d}{\to} \mathcal{J} :=  - \kappa_z \ \mathcal{U} \left( . \right) \left\{ \omega_{xx}^2 +  \int_0^1 J_{\kappa}(s) dJ_{\kappa}(s) \right\} 
\end{align}
where $\mathcal{U} \left( . \right)$ is Brownian motion with variance $\sigma^2_{uu} \tilde{V}$ with $\tilde{V} \equiv - \omega^2_{xx} \big/ 2 \kappa_z$, for $\upgamma_x > \upgamma_z $ since $\upgamma_z \in (0,1)$. Then, the bootstrap estimator $\hat{\upbeta}_n^{*ivx}$ of $\upbeta_n$ is obtained from the corresponding bootstrap sequence $\left\{ Y_t^{*}, X_t^{*} \right\}$ which implies the following random sequence
\begin{align}
\mathcal{J}_{n}^{*} =  \left( \sum_{t=1}^n Z^{*}_{t-1} X^{*}_{t-1} \right)^2 \left( \hat{\upbeta}_n^{*ivx} - \upbeta_n \right) 
\end{align} 
where 
\begin{align}
Z^{*}_{t} = \sum_{j=0}^{t-1} \left( 1 - \frac{\kappa_z}{n^{\upgamma_z}}  \right)^j \left( X^{*}_{t-j} - X^{*}_{t-j-1} \right), \ \ \ \ \text{with} \ \kappa_z \in (0, + \infty), \upgamma_z \in (0,1).
\end{align}  
Therefore, in practise we need to show that $\mathcal{J}_{n}$ and $\mathcal{J}_{n}^{*}$ have the same limit distribution (we define this step as Condition 1). To do this, we consider the random measure $\mathcal{\zeta}_n$ that corresponds to the paired triangular array \eqref{system1}-\eqref{system2} and the generated instrument is given in expression \eqref{instrument}. \cite{sweeting1989conditional} proposed a framework for evaluating conditional weak convergence. Specifically, when considering the asymptotic validity of the bootstrap algorithm it is necessary to consider the conditional weak convergence based on the bootstrapped invariance principles. On the other hand, the dependence structure of the predictive regression model remains fixed therefore the invariance principles based for the probability space that corresponds to the bootstrapped observations is valid due to similar conditions.

\newpage

Notice also that although the practitioner needs to choose the nuisance parameter of persistence for the instrumental variable. In other words, the statistics of interest such as the model estimator as well as the Wald test are calculated based on the following expression:  
\begin{align}
\mathcal{\zeta}_n :=  \left( \sum_{k=1}^n Z_{k-1,n} X_{k-1,n} \right)^2 \left\{ \frac{ \displaystyle \sum_{k=1}^n Z_{k-1,n} Y_{k,n}  }{ \displaystyle  \sum_{k=1}^n Z_{k-1,n} X_{k-1,n} } - \upbeta_n  \right\}
\end{align}
Following a similar approach in the literature (to add references), we need to specify an auxiliary functional of the form $\Psi(.)$ such that 
\begin{align}
\label{condition}
\mathcal{H} ( . , x ) = \mathbb{P} \big( \mathcal{J}_{n} \leq x \big), 
\end{align}
so that we can derive an associated result that shows a probability limit of the form 
\begin{align}
\underset{ n \to \infty }{ \mathsf{lim} } \mathbb{P}_{ \upbeta_n } \big( \mathcal{\zeta}_n \leq x \big) = \mathcal{H} ( . , x )  
\end{align}
To do that, we can define for example with 
\begin{align}
\label{condition22}
\mathcal{H}_n \left(  \hat{\upbeta}^{ivx}_n, x \right) = \mathbb{P} \bigg( \mathcal{J}_{n}^{*} \leq x   \big| X_1,..., X_n \bigg)
\end{align} 
which is taken to be a regular conditional probability distribution function.

\begin{remark}
The above statements imply that to show that the bootstrap approximation is valid, then we shall show that along almost all paths of $H_n$ given by expression \eqref{condition22} converges in distribution to the distribution of $\mathcal{J}$ given by expression \eqref{to.show} (we define this step as Condition 2). Therefore, to prove the result we are aiming we can try to prove that a corresponding weakly convergence result holds for the quantity $\mathcal{J}_n^{*}$.
\end{remark}

Notice that in the literature for bootstrap based inference of time series models (see, \cite{basawa1991bootstrapping}) the authors prove that the bootstrap estimator of the autoregressive coefficient in an AR(1) model in the explosive case will not have the same limiting distribution as the original estimator, within our framework this might not be affecting the result which we are aiming to prove. In other words, due to the structure of the predictive regression system, in each bootstrap step, to generate the pair $\left\{ X_t^{*}, Y_t^{*} \right\}$ we use the residuals $u_t^{*}$ obtained from the first step regression and then  construct the observations $Y_t^{*}$ in the second step regression. Therefore, in practise since we are not interested whether or not the bootstrap estimator of the autoregressive coefficient (under various nonstationary processes) is valid or not, we can still use the particular residual sequences. Therefore, after fitting the predictive regression we can obtain the bootstrap IVX estimator conditional on the nonstationary process. More precisely, the statistical validity of the bootstrap residuals that correspond to the autoregressive process is proved to hold by \cite{paparoditis2003residual} (see, Appendix \ref{FCLT.bootstrap.Appendix}). 

\newpage 

Therefore, we can consider the covariance matrix of the random variable $\psi^{*}$, where 
\begin{align}
\psi_n^{*} := n^{ ( 1 + \upgamma_z ) / 2 } \left( \hat{\upbeta}_n^{*IVX} - \upbeta_n \right) \Rightarrow \psi^{*}  := \mathcal{MN} \left( 0, \tilde{ \boldsymbol{\Sigma} }^{*} \right)
\end{align}

Therefore, we consider the martingale difference sequence $\xi^{*}_{nt} :=  \left( \frac{1}{ n^{ ( 1 + \upgamma_z ) / 2 } } Z^{*}_{t-1} u^{*}_t , \frac{1}{ \sqrt{n} } v^{*}_t  \right)^{ \prime }$. 
\

Then the martingale conditional variance has the following structure 
\begin{align}
\label{mds.covariance}
\sum_{t=1}^n \mathbb{E} \big( \xi^{*}_{nt} \xi_{nt}^{*\prime} | \mathcal{F}_{nt-1} \big) =
\begin{bmatrix}
\displaystyle \left( \frac{1}{ n^{ 1 + \gamma_z } } \sum_{t=1}^n Z^{*2}_{t-1} \right) \sigma^{*2}_{uu} \ \ & \ \ \displaystyle \left( \frac{1}{ n^{1 + \frac{\gamma_z}{2}}} \sum_{t=1}^n Z^{*}_{t-1} \right) \sigma^{*}_{uv} 
\\
\\
\displaystyle \left( \frac{1}{ n^{1 + \frac{\gamma_z}{2}}} \sum_{t=1}^n Z^{*}_{t-1} \right) \sigma^{*}_{vu} \ \  & \ \ \displaystyle \sigma^{*2}_{vv}
\end{bmatrix}
\end{align}
\begin{remark}
In this paper, in contrary to some related studies in the literature such as \cite{georgiev2018testing}, \cite{cavaliere2020inference} and \cite{georgiev2021extensions} we do not consider the implementation of the fixed regressor bootstrap. The reason for this is simple, since we study estimators for the predictive regression model based on the paired data $\left\{ Y_t, X_t \right\}_{t=1}^n$, then keeping the regressors fixed in each bootstrap iteration asymptotically might not preserve the limiting distribution of the original pair.   
\end{remark}


\section{Monte Carlo Simulations.}
\label{Section4}

\begin{example}
We consider the following simple data-generating process (DGP):
\begin{align}
y_t &= \beta_0 x_t + u_t, \ \ \ t = 1, 2,..., n
\\
x_t &= \left( 1 - \frac{c}{n} \right) x_{t-1} + v_t, \ \ \ \text{for some} \ c > 0.
\end{align}
where $v_t \sim \mathcal{N} \left( 0, \sigma_v^2 \right)$ and $u_t = \rho u_{t-1} + \epsilon_t$ where $\epsilon_t \sim \mathcal{N} \left( 0, \sigma_{\epsilon}^2 \right)$. In practise, when $\rho = 0$ then $u_t = \epsilon_t$ is the \textit{i.i.d} residual process. 
\end{example}
The particular example can demonstrate our asymptotic results presented in this paper. Then the performance of our bootstrap procedure can be assessed using total variation distances applied to the empirical distribution function and the corresponding empirical distribution function of the residual process (see related limit theory in \cite{stroud1972fixed} and 
\cite{mykland1992asymptotic}). According to \cite{horowitz2003bootstrap} bootstrap resampling is a method for estimating the distribution of an estimator or test statistic by resampling one's data or a model estimated from the data. Under conditions that hold in a variety of econometric applications, the bootstrap provides approximations to distributions of statistics, coverage probabilities of confidence intervals, and rejection probabilities of tests that are more accurate than the approximations of first-order asymptotic distribution theory.



\newpage

\section{Conclusion.}
\label{Section5}

In this paper we contribute to the literature of bootstrap methodologies for nonstationary time series models. Specifically, we study the bootstrapping problem of nonstationary autoregressive processes with predictive regression models with near unit root (persistent) regressors. These are regressors which are near-integrated, that is, close to the unit boundary. We establish the asymptotic validity of the bootstrap based estimators and investigate the finite-sample behaviour of these bootstrap procedures by Monte Carlo experiments. The related theory to the bootstrap aspects of the paper is presented in the \hyperref[appn]{Appendix}. Moreover, the IVX instruments are constructed based on endogenous information therefore satisfying the relevance condition for valid instruments. Therefore, the role of the bootstrap is to mimic the dependence structure in the predictive regression model. A discussion regarding the relevance condition of instrumental variables can be found in the paper of \cite{hall1996judging}. 

Therefore, one of the applications of our methodology is to employ the bootstrap for nonstandard and nonpivotal testing problems. Generally a Wald-type statistic is a pivotal function, as occurs for instance when testing linear restrictions on the coefficients of linear regressions. On the other hand, when testing in the predictive regression model we employ the IVX-Wald test which is robust to the nuisance parameter of persistence. We have investigated two estimation methodologies when bootstrapping nonstationary autoregressive processes with predictive regression models for the purpose of obtaining the bootstrapped distribution of test statistics. On the other hand, bootstraps with exchangeable weights are not so wide spread in the econometrics and statistics literature. Thus considering the random weight bootstrap in the context of nonstationary autoregressive processes is a novel contribution to the literature. 

As future research we aim to investigate the bootstrap validity of the framework that corresponds to the uniform inference methodology in predictive regression. In particular, the quasi restricted likelihood ratio test (QRLRT) shows to have desirabe properties. More precisely, the restricted likelihood has been found to provide a well-behaved likelihood ratio test in the predictive regression model even when the regressor exhibits almost unit root behaviour. Another interesting further research worth mentioning include to consider the bootstrap validity of additional persistence classes or different functional forms for the predictive regression model as proposed in the framework of \cite{duffy2021estimation}. Our proposed bootstrap algorithm can be also employed in different settings as well such that within a sequential monitoring framework for structural breaks in Garch $(p,q)$ models (see, \cite{berkes2004sequential}). Furthermore, although in this paper we consider the construction of Wald-type statistics based on linear restrictions other transformations can be also considered such as non-linear restrictions which are found to have better power performance (see also \cite{heimann1996bootstrapping}).

\begin{small}
\paragraph{Acknowledgements}

I wish to thank Jose Olmo, Tassos Magdalinos and Jean-Yves Pitarakis for helpful discussions during my PhD studies as well as Stathis Paparoditis and Giuseppe Cavaliere. The author of this article acknowledge the use of the IRIDIS High Performance Computing Facility and associated support services at the University of Southampton, in the completion of this work. Financial support from the VC PhD studentship of the University of Southampton is also gratefully acknowledged.
\end{small}

\newpage

\begin{appendix}

\begin{center}
\textbf{APPENDIX}
\end{center}

\section{Technical Proofs}
\label{Section6}

(page S.23 in \cite{georgiev2021extensions})

We have that 
\begin{align}
\frac{1}{ n^{1 + \upgamma_z} } \sum_{t=1}^{ \floor{nr} } Z_{t-1} X_{t-1} = \frac{\omega}{a} \frac{1}{n} \sum_{t=2}^{ \floor{nr} } v_{t-1} \xi_{t-1} - \frac{c}{a} \frac{1}{n^2} \sum_{t=2}^{ \floor{nr} } \xi_{t-2} \xi_{t-1} + o_p(1),  
\end{align}

It then follows by near-integration asymptotics that 
\begin{align*}
\frac{1}{ n^{1 + \upgamma_z} } \sum_{t=1}^{ \floor{nr} } Z_{t-1} X_{t-1} 
&\Rightarrow 
\frac{ \omega^2 }{ a } \left( \int_0^{\pi} J dM_c + \left[ M_v \right]_{\pi} - c \int_0^{\pi} J^2_c (r) \right)
\\
&=
\frac{ \omega^2 }{ a }  \left( \int_0^{\pi} J dM_c + \left[ M_v \right]_{\pi} \right),
\end{align*}

Notice that it holds that $\left[ M_v \right]_{\pi} = \left[ J_c \right]_{\pi}$, that is, the two stochastic processes have the same quadratic variation. Furthermore, it holds that 
\begin{align}
J^2_{c} ( \pi ) - \int_0^{\pi} J_c (r) dJ_c (r) = \int_0^{\pi} J_c (r) dJ_c (r) + \left[  J_c \right]_{\pi}
\end{align}
which holds by the semi-martingale property of $J_c(r)$. Moreover, notice these stochastic processes are monotonically increasing with a continuous quadratic variation function $\left[ M_v \right]( \pi )$ on a compact set $\mathcal{B} (0,1)$, then it is sufficient to show that the asserted weakly convergence of the sample moments to the corresponding stochastic integrals holds point-wise in probability for $\pi \in [0,1]$.

\medskip

Consider the proof from \cite{li2001bootstrapping}. In particular, we are interested to show the invariance principles for the partial sums of $\varepsilon_t^{*}$ and $u_t^{*}$. Thus, first we show that 
\begin{align}
\frac{1}{ \sqrt{T} } \sum_{t=1}^{ \floor{nr} } \varepsilon_t^{*} \Rightarrow B_{ \varepsilon } (r)
\end{align}  
Let EDF denote the empirical distribution function $d_2(.,.)$ be the Mallows metric defined as $d_p( \nu, \mu) = \mathsf{inf} \ \mathbb{E} \left( \norm{ V - U }^p \right)^{ 1 / p}$. Thus, we denote with 
\begin{align}
S_{ \floor{nr} } := \frac{1}{ \sqrt{T} } \sum_{t=1}^{ \floor{nr} } \varepsilon_t^{*} \Rightarrow B_{ \varepsilon } (r), \ \ S_{ \floor{nr} }^{*} := \frac{1}{ \sqrt{T} } \sum_{t=1}^{ \floor{nr} } \varepsilon_t^{*} \Rightarrow B_{ \varepsilon } (r) 
\end{align} 
Thus, we need to show that 
\begin{align}
d_2 ( F, \hat{F}_T ) \to 0
\end{align}


Moreover, we need to establish convergence of finite-dimensional distributions and verify the tightness condition. In particular, for the finite-dimensional convergence, it is sufficient to show that for any finite $k$ and $0 < r_1 < ... < r_k < 1$, if
\begin{align}
V_k = \big[ S_{ \floor{nr_1} }, S_{ \floor{nr_2} } - S_{ \floor{nr_1} } , ...., S_{ \floor{nr_k} } - S_{ \floor{nr_{k-1}} } \big]
\end{align}
and 
\begin{align}
V_k^{*} = \big[ S^{*}_{ \floor{nr_1} }, S^{*}_{ \floor{nr_2} } - S^{*}_{ \floor{nr_1} } , ...., S^{*}_{ \floor{nr_k} } - S^{*}_{ \floor{nr_{k-1}} } \big]
\end{align}
then $d_2 \left( V_k^{*}, V_k \right) \to 0$. Thus, it can be shown that $d_2 \left( V_k^{*}, V_k \right)^2 \leq d_2 \left( F_T, \tilde{F}_T \right)^2 \to 0$.  

For tightness, it suffices to show that there exists a non-decreasing function $\psi$ such that, for almost all sample paths, for $0 \leq r_1 \leq r \leq r_2 \leq 1$, 
\begin{align}
\mathbb{P}^{*} \left( \left| S^{*}_{\floor{nr}} - S^{*}_{\floor{nr_1}} \right| \right)
\end{align}  

\medskip

From the paper: Testing for parameter instability in predictive regression models

\begin{remark}
Notice that considering for example the test statistic $\mathcal{W}^{*}_{ivx}$ we observe that bootstrap invariance principles hold jointly along with stated convergence results. In practise, in terms of convergence in topological spaces, we have joint weak convergence of random measures. The concept is weaker than weak convergence in probability, although it reduces to the latter when the limit distribution is non-random.  
\end{remark}


\section{Smoothing Local-to-moderate unit root theory.}

Consider an autoregressive process with local-to-moderate deviations from unit root of the form $\rho_n = \left( 1 + \frac{c}{K} \right)$, where $K$ converges to infinity with the sample size $n$ and $\rho_n$ approaches unity from the stationary or the explosive side according to the sign of $c$ (see, \cite{Phillips2010smoothing}). 

We consider that such a time series process constitutes of $m$ blocks of $K$ observations with total sample size $n = mK$. Then, partitioning the chronological sequence $\left\{ t = 1,...,n \right\}$ by setting $t = [kj] + k $ for $k \in \left\{ 1,...,K \right\}$ and $j \in \left\{ 0,..., m-1 \right\}$, it is possible to study the asymptotic behaviour of the time series $\left\{ X_t : t = 1,...,n  \right\}$ via the asymptotic properties of the time series $\left\{  X_{ Kj + k} : j = 0,..., m-1, k = 1,..., K \right\}$. The latter representation is particularly useful for revealing the transition from non-stationary to stationary autoregression as the number of blocks $m$ increases. 

More formally, the process with the above characteristics may be written in the form 
\begin{align}
X_t &= \rho_{n,m} X_{t-1} + u_t, \ \ \ u_t \sim _{iid} \left( 0, \sigma^2  \right), \\
\rho_{n,m}  &= \left( 1 + \frac{c}{K} \right) = \left( 1 + \frac{cm}{n} \right)
\end{align}
since we have $m$ blocks with $K$ observations in each block, giving us a total of $n = mK$.

Therefore, under this setting when $m=1$, then the usual local to unity model applies, and when $m \to \infty$, then the moderate deviation theory of PM and GP holds.   

Let $W$ be a standard Brownian motion and $J_c(t) = \displaystyle \int_{0}^t e^{c(t-s)} dW(s)$ be a corresponding OU process. For each $m$, letting 
\begin{align}
\widetilde{W} (t) = \sqrt{m} W \left( \frac{t}{m} \right),
\end{align}   
we observe that W also a standard Brownian motion and we denote by $\widetilde{J}_c(t) = \int_0^t e^{c(t-s)} d \widetilde{W} (s)$ the associated OU process. Thus, for given $m \geq 1$, we may derive a limit theory for the least squares estimate $\hat{\rho}_{n,m}$ of $\rho_{n,m}$ as $n \to \infty$ using earlier results from standard local to unity asympotics.


Thus, using the identities above we obtain 
\begin{align}
\int_0^1 J_{cm} (s) dW(s) 
&= \frac{1}{m} \int_0^m \widetilde{J}_c (s) d \widetilde{W} (s),
\\
\int_0^1 J_{cm} (s)^2 ds 
&= \frac{1}{m^2} \int_0^m \widetilde{J}_c (s)^2  ds.
\end{align} 
the results derived in Phillips  imply that, for fixed $m$ and $n \to \infty$, the asymptotic distribution of the least squares estimator takes the following form 
\begin{align}
n \left( \hat{\rho}_{n,m} - \rho_{n,m} \right) \Rightarrow \frac{ \displaystyle \int_0^1 J_{cm} (s) dW(s) }{ \displaystyle \int_0^1 J_{cm} (s)^2 ds } = m \frac{ \displaystyle \int_0^m \widetilde{J}_c (s) d \widetilde{W}(s)}{ \displaystyle \int_0^m \widetilde{J}_c (s)^2 ds}. 
\end{align}
When $c < 0$, sequential limits with $n \to \infty$ followed by $m \to \infty$ lead to the normal asymptotic theory given in PM and GP:

\begin{align}
\frac{n}{\sqrt{m}} \left( \hat{\rho}_{n,m} - \rho_{n,m} \right) 
&\Rightarrow \frac{ \displaystyle \frac{1}{\sqrt{m}} \int_0^m \widetilde{J}_c (s) d \widetilde{W}(s)}{ \displaystyle \frac{1}{m} \int_0^m \widetilde{J}_c (s)^2 ds} \ \ \text{for fixed} \ m
\\
&= \frac{ \displaystyle \frac{1}{\sqrt{m}}  \sum_{j=1}^m \int_{j-1}^j \widetilde{J}_c (s) d \widetilde{W}(s)}{ \displaystyle \frac{1}{m}  \sum_{j=1}^m \int_{j-1}^j \widetilde{J}_c (s)^2 ds}
\\
& \Rightarrow \frac{ \mathcal{N} \left( 0, - \frac{1}{2c} \right) }{ - \frac{1}{2c} } \equiv \mathcal{N} ( 0, -2c ) \ \ \text{as} \ \  m \to \infty.
\end{align}


\section{Bootstrapping Unstable Autoregressive Processes.}\label{appA}

\subsection{Theoretical Background.}

In this section, we present related asymptotic theory examples as preliminary theory to the framework proposed in this paper. 

Consider the first-order autoregressive process $\left\{ X_t \right\}$, for $t = 1,2,...$,  
\begin{align}
X_t = \beta X_{t-1} + \epsilon_t, \ \ \ X_0 = 0,
\end{align} 
where $\left\{  \epsilon_t \right\}$ are independent $\mathcal{N} (0,1)$ random variables. 

The least squares estimator $\hat{\beta}_n$ of $\beta$, based on a sample of $n$ observations $\left( X_1,..., X_n \right)$, is given by  the following expression 
\begin{align}
\label{estimator}
\hat{\beta}_n = \left( \sum_{t=1}^n X^2_{t-1} \right)^{-1} \left( \sum_{t=1}^n X_t X_{t-1} \right)  
\end{align}

Note that the asymptotic validity of the bootstrap estimator corresponding to $\hat{\beta}_n$ for the stationary case, viz., $| \beta| < 1$, follows from the work of \cite{bose1988edgeworth} and the validity for the explosive case, viz., $|\beta | > 1$ has recently established by \cite{basawa1989bootstrapping}. Both of these papers consider the general case when the distribution of $\left\{ \epsilon_t \right\}$ is not necessarily known. The limit distribution of $\hat{\beta}_n$ in the unstable case $| \beta | = 1$ is known to be nonnormal. It is therefore, of special interest to consider the bootstrap approximation for the distribution of $\hat{\beta}_n$ for the unstable case. 

\subsection{Invalidity of the bootstrap estimator.} 
\

Let $\xi_n = \left( \sum_{t=1}^n X^2_{t-1} \right)^{1/2} \left( \hat{\beta}_n - \beta \right)$, where $\hat{\beta}_n$ is defined as above. It is then, well known that when $\beta = 1$, 
\begin{align}
\label{distribution}
\xi_n \overset{ d }{ \to } \xi = \frac{1}{2} \left[ W^2(1) - 1 \right] \left\{ \int_0^1 W^2 (t) dt \right\}^{-1 / 2} \ \ \text{as} \ \ n \to \infty, 
\end{align}
where $W(t)$ is a standard Wiener process. Then, the corresponding bootstrap sample $\left\{ X_t^{*} \right\}$ is obtained recursively from the following relation 
\begin{align}
X_t^{*} = \hat{\beta}_n X_{t-1}^{*} + \epsilon_t^{*}, \ \ X_0^{*} = 0. 
\end{align}
where $\left\{ \epsilon_t^{*} \right\}$ constitutes a random sample from $\mathcal{N} \left( 0, 1 \right)$. Then, the bootstrap estimator $\hat{\beta}^{*}_n$ of $\beta$ is then defined as in \eqref{estimator} with $X$'s replaced by $X^{*}$'s.


Let $\xi_n^{*} =  \left( \sum_{t=1}^n X^{*2}_{t-1} \right)^{1/2} \left( \hat{\beta}^{*}_n - \beta \right)$ denote the bootstrap version of $\xi_n$. It will be shown that $\xi_n$ and  $\xi_n^{*}$ do not have the same limit distribution, thus invalidating the bootstrap. 

Therefore, we consider a triangular array $\left\{ X_{k,n}, k \leq 1, n \leq 1 \right\}$ satisfying
\begin{align}
\label{model2}
X_{k,n} = b_n X_{k-1, n} + \epsilon_k, \ \ X_0 = 0,
\end{align}   
with independent $\epsilon_k \sim \mathcal{N} \left( 0, 1 \right)$ and where $\left\{ b_n \right\}$ is a sequence of numbers such that $n \left( b_n - 1 \right) \to \gamma$. Let
\begin{align}
\Psi ( \gamma ) = \frac{ \displaystyle \int_0^1 \left( 1 - t + te^{-2 \gamma} \right)^{-1} W(t) dW(t) }{ \displaystyle \left\{ \int_0^1 \left( 1 - t + te^{-2 \gamma} \right)^{-2} W^2(t) dt \right\}^{ 1/2 } }
\end{align}
where $\left\{ W(t) : 0 \leq t \leq 1 \right\}$ is a standard Brownian motion and 
\begin{align}
\label{condition}
H ( \gamma, x ) = \mathbb{P} \left( \Psi ( \gamma ) \leq x \right). 
\end{align}
Then, by Theorem 1 of \cite{chan1987asymptotic1} we have that
\begin{align}
\underset{ n \to \infty }{ \text{lim} } \mathbb{P}_{b_n} \left( \zeta_n \leq x \right) = H ( \gamma, x )  
\end{align}
where 
\begin{align}
\zeta_n = \left( \sum_{k=1}^n X^2_{k-1,n} \right)^{1/2} \left( \frac{ \sum_{k=1}^n X_{k,n}  X_{k-1,n}  }{ \sum_{k=1}^n X^2_{k-1,n}  } - b_n \right)
\end{align}
and where $\mathbb{P}_{b_n}$ indicates the probability distribution induced by the model in \eqref{model2}. Define 
\begin{align}
\label{condition2}
H_n \left(  \hat{b}_n, x \right) = \mathbb{P} \bigg( \xi_n^{*} \leq x   \big| X_1,..., X_n \bigg)
\end{align} 
which is taken to be a regular conditional probability distribution function. Therefore, we define a random measure
\begin{align}
\eta_n ( A ) = \int_A H_n \left(  \hat{b}_n, dx \right). 
\end{align}
Since $H( \gamma, x )$ given by \eqref{condition} is continuous in $\gamma$ for each fixed $x$, we have
\begin{align}
\eta ( A ) = \int_A H \left( \xi^{\prime}, dx \right)
\end{align}
\begin{align}
\xi^{\prime} = \frac{1}{2} \left[ W^2(1) - 1 \right] \left\{ \int_0^1 W^2 (t) dt \right\}^{ -1 } \ \ \text{as} \ \ n \to \infty, 
\end{align}


Therefore, if the bootstrap approximation were valid then along almost all paths $H_n$ given by \eqref{condition2} would converge in distribution to the distribution of $\xi$ given in \eqref{distribution}. However, we have in fact that  
\begin{align}
\label{convergence}
\eta_n \Rightarrow \eta, \ \ \text{as} \ \ n \to \infty.
\end{align}

\begin{proof}
By almost sure representativeness of convergent laws, it is possible to define $\tilde{\beta}_n, n \leq 1$, and $\tilde{\xi}$ with $\hat{\beta}_n = \tilde{\beta}_n$, $\xi^{\prime} =_{d} \tilde{\xi}$ and
\begin{align}
n \left( \tilde{\beta}_n - 1 \right) \overset{ d }{ \to } \tilde{ \xi } \ \ \textit{almost surely} \ \text{as} \ n  \to \infty.
\end{align}
Recall that, 
\begin{align}
n \left( \hat{\beta}_n - 1 \right) \overset{ d }{ \to } \xi^{\prime} \ \text{as} \ n  \to \infty  
\end{align}
Therefore, we have that 
\begin{align}
\label{condition3}
H_n \left(  \tilde{ \beta }_n, x \right) \to H ( \tilde{ \xi }, x ) \ \ \textit{almost surely} \ \text{as} \ n  \to \infty. 
\end{align}

Hence for sets of the form 
\begin{align}
A_j = \bigcup_{i=1}^{ m_j } \big( x_{ij}, y_{ij} \big)
\end{align}
representing a disjoint union of intervals, \eqref{condition3} implies that 
\begin{align}
\label{convergence1}
\bigg( H_n \left( \tilde{ \beta }_n, A_1 \right),..., H_n \left(  \tilde{ \beta }_n, A_k \right) \bigg) \to \bigg( H_n \left( \tilde{ \xi }, A_1 \right),..., H_n \left( \tilde{ \xi }, A_k \right) \bigg)
\end{align}
\textit{almost surely} as $n \to \infty$.

Since, 
\begin{align}
\bigg( \eta_n ( A_1 ), .... , \eta_n ( A_k )  \bigg) =_{d} \bigg( H_n \left( \tilde{ \beta }_n, A_1 \right),..., H_n \left(  \tilde{ \beta }_n, A_k \right) \bigg)
\end{align}
and
\begin{align}
\bigg( \eta ( A_1 ), .... , \eta ( A_k )  \bigg) =_{d} \bigg( H_n \left( \tilde{ \xi }, A_1 \right),..., H_n \left( \tilde{ \xi }, A_k \right) \bigg),
\end{align}
which implies that \eqref{convergence} follows from \eqref{convergence1}.
\end{proof}

The above example, demonstrates that the least squares estimate of the autoregressive coefficient in the case of the unstable autoregression model ($\beta = 1$), induces an invalidity of the bootstrap procedure. A similar invalidity of the bootstrap procedure occurs when considering the distribution of $n \left( \hat{\beta}_n - \beta \right)$ as examined by \cite{chan1987asymptotic1}. In this paper, we aim to demonstrate that in the case of predictive regression model, when we estimate the unknown parameter using the IVX instrumentation procedure, then the limiting distribution of the original estimate is preserved which verifies the validity of the corresponding IVX bootstrap estimate.     

\section{Bootstrapping Explosive Autoregressive Processes.}\label{appB}

\subsection{Preliminary results and the bootstrap estimate.}

First, we review some basic limit results for the explosive case as in  \cite{bose1988edgeworth}. We assume that $| \beta | > 1$ and define with 
\begin{align}
\label{defintion}
U_n = \sum_{t=1}^n \beta^{ - (t - 1 )} \epsilon_t \ \ \text{and} \ \ V_n = \sum_{t=1}^n \beta^{ - (n - t )} \epsilon_t
\end{align}
where $\left\{ \epsilon_t \right\}$ are i.i.d random variables with $\mathbf{E} \left( \epsilon_j  \right) = 0$ and Var$\left( \epsilon_j \right) = \sigma^2$. Moreover, it can be shown that $U_n$ and $V_n$ are identically distributed for each $n$ and there exist random variables $U$ and $V$ such that 
\begin{align}
\left( U_n, V_n \right) \to_{d} \left( U, V \right) \ \ \text{as} \ \ n \to \infty, 
\end{align}  
where $U$ and $V$ are independent and identically distributed. If $\left\{ \epsilon_t \right\}$ are normal, $U$ and $V$ are normal each with mean 0 and variance $\left( 1 - \beta^{-2}  \right)^{-1}$. In the general case, $U$ and $V$ can be represented by 
\begin{align}
U \overset{d}{=} \sum_{t=1}^{\infty} \beta^{-(t-1)} \epsilon_t \overset{d}{=} V,
\end{align}
Note that the common characteristic function of $U$ and $V$ is
\begin{align}
\phi( \tau ) = \prod_{t=1}^{ \infty } \phi_{ \epsilon } \left( \beta^{-(t-1)} \tau \right) 
\end{align}
where $\phi_{\epsilon}( . )$ is the characteristic function of $\epsilon_t$.

Let $\hat{ \beta}_n$ denote the least squares estimate of $\beta$. The following Theorem summarizes the limit distribution of $\hat{ \beta}_n$ for the case $| \beta | > 1$.
\begin{theorem}
\label{theorem.Anderson}
For $\hat{ \beta}_n$ defined in the autoregressive model, we have that for $| \beta | > 1$, 
\begin{align}
\left( \beta^2 - 1 \right)^{-1} | \beta |^n \left( \hat{ \beta}_n - \beta   \right) \to_{d} V \big/ U,
\end{align}
where $U$ and $V$ are defined in \eqref{defintion}.
\end{theorem}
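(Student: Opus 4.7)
The plan is to exploit the explicit representation $X_t=\beta^{t-1}U_t$, where $U_t=\sum_{j=1}^t\beta^{-(j-1)}\epsilon_j$, obtained by iterating the recursion. Since $|\beta|>1$, the variance of the series defining $U$ is a geometric series with ratio $\beta^{-2}<1$, so $U_n\to U$ almost surely and in $L^2$. I would then separately normalise the numerator and denominator of
\[
\hat\beta_n-\beta=\frac{\sum_{t=1}^n X_{t-1}\epsilon_t}{\sum_{t=1}^n X_{t-1}^2}
\]
by the appropriate powers of $\beta$.

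Substituting the representation and changing index $k=n-t$, the denominator becomes
\[
\beta^{-2(n-1)}\sum_{t=1}^n X_{t-1}^2=\beta^{-2}\sum_{k=0}^{n-2}\beta^{-2k}U_{n-k-1}^2\ \overset{p}{\to}\ \frac{U^2}{\beta^2-1}.
\]
This is proved by a truncation argument: the tail $\sum_{k>K}\beta^{-2k}U_{n-k-1}^2$ is uniformly $L^1$-small by the geometric decay of the weights and $\sup_m\mathbb{E}U_m^2<\infty$, while for each fixed $k$, $U_{n-k-1}\to U$ almost surely, so the first $K$ terms converge to $U^2\sum_{k=0}^K\beta^{-2k}$ and then $K\to\infty$ completes the argument.

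For the numerator, the analogous manipulation produces the triangular-array martingale
\[
M_n:=\sum_{t=2}^n\beta^{-(n-t)}U_{t-1}\epsilon_t,\qquad\beta^{-(n-1)}\sum_{t=1}^n X_{t-1}\epsilon_t=\beta^{-1}M_n,
\]
adapted to $\mathcal{F}_t=\sigma(\epsilon_1,\ldots,\epsilon_t)$, since each $U_{t-1}$ is $\mathcal{F}_{t-1}$-measurable and $\epsilon_t$ is centred and independent of $\mathcal{F}_{t-1}$. Its conditional variance equals $\sigma^2\sum_k\beta^{-2k}U_{n-k-1}^2$, which by the denominator step converges in probability to $\sigma^2\beta^2 U^2/(\beta^2-1)$. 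A martingale stable central limit theorem (e.g.\ Hall and Heyde) then delivers the stable convergence $M_n\overset{d}{\to}(\sigma\beta U/\sqrt{\beta^2-1})\mathcal{Z}$ with $\mathcal{Z}\sim\mathcal{N}(0,1)$ defined on an extension of the original probability space and \emph{independent of} $U$. Identifying $V:=\sigma\beta\mathcal{Z}/\sqrt{\beta^2-1}$, which has the prescribed limiting law of $V_n$, one obtains $\beta^{-(n-1)}\sum X_{t-1}\epsilon_t\overset{d}{\to}\beta^{-1}UV$ jointly with the denominator limit, with $U$ and $V$ independent.

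The continuous mapping theorem applied to the ratio (valid because $\mathbb{P}(U=0)=0$) then yields
\[
\beta^{n-1}(\hat\beta_n-\beta)=\frac{\beta^{-(n-1)}\sum X_{t-1}\epsilon_t}{\beta^{-2(n-1)}\sum X_{t-1}^2}\ \overset{d}{\to}\ \frac{\beta^{-1}UV}{U^2/(\beta^2-1)}=\frac{(\beta^2-1)V}{\beta U},
\]
which rearranges to $(\beta^2-1)^{-1}\beta^n(\hat\beta_n-\beta)\overset{d}{\to}V/U$. For $\beta<-1$, the factor $|\beta|^n=(-1)^n\beta^n$ introduces a sign alternation; since the iid pair $(U,V)$ satisfies $V\overset{d}{=}-V$, the ratio $V/U$ is symmetric and the alternation is absorbed in the weak limit. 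The hard part will be upgrading the marginal limit of $M_n$ to a \emph{stable} (or joint) convergence that couples it with the random scaling $U$ inherited from the denominator. The technical device is an $L^2$-control, uniform in $n$, of the truncation error arising from replacing $U_{n-k-1}$ by $U$ in the sum, together with the key structural observation that $U_{n-K-1}$ is independent of the late innovations $\epsilon_{n-K},\ldots,\epsilon_n$ for each fixed $K$; letting $K\to\infty$ after $n\to\infty$ then yields the desired independence of $U$ and $V$ in the limit.
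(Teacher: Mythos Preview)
Your proposal has a genuine gap at the martingale CLT step. The Hall--Heyde stable CLT requires a conditional Lindeberg (asymptotic negligibility) condition: for every $\delta>0$,
\[
\sum_{t=2}^{n}\mathbb{E}\bigl[\xi_{n,t}^{2}\,\mathbf{1}\{|\xi_{n,t}|>\delta\}\,\big|\,\mathcal{F}_{t-1}\bigr]\overset{p}{\longrightarrow}0,
\qquad \xi_{n,t}=\beta^{-(n-t)}U_{t-1}\epsilon_t.
\]
This fails here: the last summand alone is $\xi_{n,n}=U_{n-1}\epsilon_n=O_p(1)$, and more generally for each fixed $k$ the term with $t=n-k$ remains of order one as $n\to\infty$. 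The individual increments are therefore not asymptotically negligible, so the martingale CLT does not apply and no mixed-Gaussian conclusion is warranted. This is not a technicality that can be patched: in the explosive case the limit $V$ has the law of the infinite series $\sum_{j\ge 1}\beta^{-(j-1)}\epsilon_j$, which is Gaussian \emph{only} when the $\epsilon_t$ themselves are Gaussian. Your identification $V=\sigma\beta\mathcal{Z}/\sqrt{\beta^2-1}$ with $\mathcal{Z}\sim\mathcal{N}(0,1)$ is thus incorrect in the general i.i.d.\ setting of the theorem, and the limit $V/U$ you obtain has the wrong distribution whenever the innovations are non-Gaussian.

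The paper does not give its own proof of this statement---it is quoted as the classical Anderson--White result---but the argument underlying it, and the one the paper uses to prove the bootstrap analogue immediately afterwards, bypasses any CLT entirely. One works directly with the \emph{joint} law of $(U_n,V_n)$ and shows, via characteristic functions, that $(U_n,V_n)\Rightarrow(U,V)$ with $U$ and $V$ independent and identically distributed, exploiting precisely the structural fact you note at the end: $U_n$ is determined by the early innovations while $V_n$ is determined by the late ones, with overlap that vanishes geometrically. One then verifies the algebraic approximation
\[
\beta^{-(n-1)}\sum_{t=1}^{n}X_{t-1}\epsilon_t=\beta^{-1}U_nV_n+o_p(1)
\]
by bounding $\sum_t\beta^{-(n-t)}(U_{t-1}-U_n)\epsilon_t$ in $L^2$, and combines this with your (correct) denominator limit and the continuous mapping theorem. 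Your final paragraph already contains the seed of this argument; the fix is to discard the CLT step and run the truncation and independence reasoning directly on the pair $(U_n,V_n)$.
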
 


We now describe the corresponding bootstrap estimate. Let $\hat{\epsilon}_t = X_t - \hat{ \beta}_n X_{t-1}$ and define $\tilde{\epsilon}_t = \hat{\epsilon}_t- n^{-1} \sum_{t=1}^n \hat{\epsilon}_t$, the centered residuals. Denote by $\tilde{F}_n$ the empirical distribution function based on $\left\{ \tilde{\epsilon}_t, t = 1,...,n  \right\}$. Thus, $\tilde{F}_n$ associates mass $n^{-1}$ to each of $\tilde{\epsilon}_t, t =1,....,n$. Now, assuming that $\tilde{F}_n$ is the true distribution, draw a random sample $\left\{ \epsilon^{*}_t, t = 1,...,n  \right\}$ from $\tilde{F}_n$. Therefore, conditionally on $\left( X_1,..., X_n  \right)$, the random variables $\left\{ \epsilon^{*}_t, t = 1,...,n  \right\}$ are i.i.d with distribution function $\tilde{F}_n$. This pseudo-series allow us to construct the bootstrap sample $\left\{ X_t^{*}, t = 1,...,n  \right\}$ recursively by the following expression 
\begin{align}
X_t^{*} = \hat{\beta}_n X_{t-1}^{*} + \epsilon^{*}_t, \ \ \ \ t = 1,...,n, 
\end{align} 
with $X_0 = 0$. The bootstrap least squares estimate is then given by 
\begin{align}
\hat{\beta}_n^{*} = \left( \sum_{t=1}^n X^{*2}_{t-1} \right)^{-1} \left( \sum_{t=1}^n X^{*}_t X^{*}_{t-1} \right)  
\end{align}

The main aim of this section (see, also \cite{bose1988edgeworth}), is to derive the limit distribution of $\hat{\beta}_n^{*}$ given $\left( X_1, X_2,... \right)$ and to show that it is the same as the limit distribution given in Theorem \eqref{theorem.Anderson}.  Before doing that, we present below a useful lemma for the purpose of proving the aforementioned result. 

\begin{lemma}
\label{lemma1}
For the autoregressive model with $\mathbf{E} \epsilon_t = 0$, Var $\epsilon_t = \sigma^2 < \infty$ and $| \beta | > 1$, we have
\begin{align}
\frac{1}{n} \sum_{t=1}^n \left( \tilde{ \epsilon}_t  - \epsilon_t \right) \overset{ a.s }{ \to } 0, \ \ \text{as} \ n \to \infty. 
\end{align}
Moreover, 
\begin{align}
\hat{\beta}_n \overset{ a.s }{ \to } \beta \ \ \text{as} \ n \to \infty. 
\end{align} 
\end{lemma}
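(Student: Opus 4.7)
The plan is to handle the two assertions in order, since they involve rather different techniques.

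For the first assertion, the construction $\tilde{\epsilon}_t = \hat{\epsilon}_t - n^{-1}\sum_{s=1}^n \hat{\epsilon}_s$ forces the identity $\sum_{t=1}^n \tilde{\epsilon}_t = 0$. Therefore
\begin{align*}
\frac{1}{n}\sum_{t=1}^n (\tilde{\epsilon}_t - \epsilon_t) = -\frac{1}{n}\sum_{t=1}^n \epsilon_t,
\end{align*}
and Kolmogorov's strong law of large numbers applied to the i.i.d.\ mean-zero sequence $\{\epsilon_t\}$ delivers the almost-sure convergence to zero. No structural feature of the autoregression is used in this step; it is purely a consequence of centering.

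For the second assertion, the starting point is the standard decomposition
\begin{align*}
\hat{\beta}_n - \beta = \frac{\sum_{t=1}^n X_{t-1}\epsilon_t}{\sum_{t=1}^n X_{t-1}^2}.
\end{align*}
First I would iterate the recursion to obtain $X_t = \sum_{j=1}^t \beta^{t-j}\epsilon_j$ and introduce the rescaled process $Y_t := \beta^{-t}X_t = \sum_{j=1}^t \beta^{-j}\epsilon_j$. Because $|\beta|>1$, one has $\sum_{j=1}^\infty \mathrm{Var}(\beta^{-j}\epsilon_j) = \sigma^2\sum_{j=1}^\infty \beta^{-2j} < \infty$, so Kolmogorov's convergence theorem yields $Y_t \to Y := \sum_{j=1}^\infty \beta^{-j}\epsilon_j$ almost surely, with $Y$ non-degenerate of variance $\sigma^2/(\beta^2-1)$. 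For the denominator, writing $X_{t-1}^2 = \beta^{2(t-1)} Y_{t-1}^2$ and reindexing by $k = n-1-t$ gives
\begin{align*}
\beta^{-2(n-1)} \sum_{t=1}^n X_{t-1}^2 = \sum_{k=0}^{n-1} \beta^{-2k} Y_{n-1-k}^2 \;\longrightarrow\; \frac{Y^2}{1-\beta^{-2}} \quad \text{a.s.,}
\end{align*}
by dominated convergence, using that $Y_t\to Y$ a.s.\ and the geometric weights are summable.

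For the numerator I would exploit that $M_n := \sum_{t=1}^n X_{t-1}\epsilon_t$ is a zero-mean martingale with respect to the natural filtration, whose conditional quadratic variation equals $\sigma^2 \sum_{t=1}^n X_{t-1}^2 = O(\beta^{2n})$ a.s. The martingale strong law of large numbers (see \cite{hall2014martingale}) then yields $M_n = o(\beta^{(1+\delta) n})$ a.s.\ for any $\delta>0$, whence $M_n / \sum_{t=1}^n X_{t-1}^2 = o(\beta^{-(1-\delta)n}) \to 0$ a.s., giving $\hat{\beta}_n \to \beta$ almost surely at the geometric rate $|\beta|^{-n}$. The main obstacle is the non-degeneracy of the almost-sure limit $Y$, needed to guarantee that the denominator is bounded away from zero on a set of full measure; under the continuity and finite-variance hypotheses this follows from a Kolmogorov zero-one argument applied to the tail event $\{Y=0\}$, in the spirit of the explosive bootstrap analysis of \cite{basawa1989bootstrapping}.
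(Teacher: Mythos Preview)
The paper does not actually supply a proof of this lemma; it is stated as a preliminary result in an appendix that summarises the explosive bootstrap framework of \cite{basawa1989bootstrapping}, and the subsequent theorem simply invokes it. So there is no paper proof against which to benchmark your argument.

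On its merits, your proposal is correct for the lemma as written. For the first claim your centering observation is decisive: since $\tilde{\epsilon}_t=\hat{\epsilon}_t-\overline{\hat{\epsilon}}$ one has $\sum_{t=1}^n\tilde{\epsilon}_t=0$ identically, and the assertion reduces to $n^{-1}\sum_{t=1}^n\epsilon_t\to 0$ a.s., which is the SLLN. For the second claim your route via $Y_t=\beta^{-t}X_t\to Y$ a.s.\ and the martingale strong law applied to $M_n=\sum X_{t-1}\epsilon_t$ with predictable variation $\sigma^2\sum X_{t-1}^2$ is standard and sound; the non-degeneracy issue $P(Y=0)=0$ is handled under the paper's standing assumption that the $\epsilon_t$ are continuous (Assumption~\ref{assumption1}), so no additional zero--one argument is really needed.

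One point worth flagging: in the proof of the theorem that follows, what is actually invoked is $n^{-1}\sum_{t=1}^n(\tilde{\epsilon}_t-\epsilon_t)^2\to 0$ a.s., not the unsquared average. This is the form appearing in \cite{basawa1989bootstrapping}, and the lemma as stated in the paper is almost certainly a transcription slip. The squared version does not follow from your centering trick; it requires writing $\tilde{\epsilon}_t-\epsilon_t=(\beta-\hat{\beta}_n)X_{t-1}-\overline{\hat{\epsilon}}$ and controlling $(\hat{\beta}_n-\beta)^2\,n^{-1}\sum X_{t-1}^2$, which is $O_{\text{a.s.}}(n^{-1})$ by the rate results you have already assembled for the denominator and numerator. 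So your machinery covers the intended statement as well, but the argument for the first part would need to be rewritten accordingly.
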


In practise, we denote the empirical distribution of the centred residuals $\tilde{\epsilon}_t$ as $\tilde{F}_t$. Notice that the bootstrap innovations $\left\{ \epsilon_t^{*} \right\}$ are conditionally independent with common distribution $\tilde{F}_T$.


\subsection{Limit distribution of the bootstrap estimate.}

We now state the main result of this section. 

\begin{theorem}
Conditionally on $\left( X_1,..., X_n \right)$ as $n \to \infty$ we have, for $| \beta | > 1$, 
\begin{align}
\left( \hat{ \beta }_n^2  - 1 \right)^{-1} | \hat{ \beta }_n |^{n} \left(  \hat{ \beta }_n^{*} - \hat{ \beta }_n \right) \to_{d} V \big/ U
\end{align}
for almost all sample paths $\left( X_1, X_2, ... \right)$, where $U$ and $V$ are defined in \eqref{defintion}, and $\left( \hat{\beta}_n , \hat{ \beta}_n^{*} \right)$ are defined, above. 
\end{theorem}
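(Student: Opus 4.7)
The plan is to parallel the derivation of Anderson's Theorem \ref{theorem.Anderson} but work inside the bootstrap probability $\mathbb{P}^*$ conditional on the observed sample $(X_1, X_2, \ldots)$. Two enabling facts are already in place: by Lemma \ref{lemma1}, $\hat{\beta}_n \to \beta$ almost surely and the empirical distribution $\tilde{F}_n$ of the centred residuals converges to the innovation law $F$ in a mode (typically the Mallows $d_2$ metric) strong enough to yield convergence of second moments and of characteristic functions. Conditionally on almost every sample path, then, the bootstrap recursion $X_t^* = \hat{\beta}_n X_{t-1}^* + \epsilon_t^*$ generates an explosive AR(1) whose coefficient is arbitrarily close to $\beta$ and whose i.i.d.\ innovations have distribution arbitrarily close to $F$, so the unconditional limit theory should transfer.

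First I would carry out the algebraic decomposition. Writing $\hat{\beta}_n^* - \hat{\beta}_n = \bigl(\sum_{t=1}^n X_{t-1}^* \epsilon_t^*\bigr)\big/\bigl(\sum_{t=1}^n X_{t-1}^{*2}\bigr)$, iterating the recursion gives $X_t^* = \hat{\beta}_n^{t-1} U_t^*$ with $U_t^* = \sum_{s=1}^t \hat{\beta}_n^{-(s-1)} \epsilon_s^*$, and it is convenient to set $V_n^* = \sum_{t=1}^n \hat{\beta}_n^{-(n-t)} \epsilon_t^*$ as the bootstrap analogues of $U_n, V_n$ in \eqref{defintion}. An Anderson-type rearrangement, normalising by the dominant geometric factor $|\hat{\beta}_n|^{2n}/(\hat{\beta}_n^2-1)$ in the denominator and $|\hat{\beta}_n|^n$ in the numerator, yields
\[
\frac{|\hat{\beta}_n|^n}{\hat{\beta}_n^2 - 1} \bigl(\hat{\beta}_n^* - \hat{\beta}_n\bigr) = \frac{U_n^* V_n^* + o_{P^*}(1)}{(U_n^*)^2 + o_{P^*}(1)},
\]
where the $o_{P^*}(1)$ remainders are controlled by a Toeplitz-averaging argument that exploits $U_t^* \to U^*$ as $t \to \infty$ and that only indices near $t = n$ contribute after geometric weighting.

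The probabilistic crux is then to show that conditionally on $(X_1, X_2, \ldots)$, for almost every sample path, $(U_n^*, V_n^*) \Rightarrow (U, V)$, with $U, V$ the independent copies appearing in \eqref{defintion}. I would argue via the conditional characteristic function
\[
\mathbb{E}^*\bigl[e^{i\tau_1 U_n^* + i\tau_2 V_n^*}\bigr] = \prod_{t=1}^n \phi_{\tilde{F}_n}\bigl(\hat{\beta}_n^{-(t-1)} \tau_1 + \hat{\beta}_n^{-(n-t)} \tau_2\bigr).
\]
Splitting the product near $t = \lfloor n/2 \rfloor$, in the lower half the weight $\hat{\beta}_n^{-(n-t)}\tau_2$ is uniformly negligible and the product converges to $\phi_U(\tau_1)$; in the upper half the weight $\hat{\beta}_n^{-(t-1)}\tau_1$ is negligible and the product converges to $\phi_V(\tau_2)$. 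The pointwise convergence $\phi_{\tilde{F}_n} \to \phi_\epsilon$ (from $d_2(\tilde{F}_n, F) \to 0$) together with uniform control on a neighbourhood of zero via the second-moment bound supplied by Lemma \ref{lemma1} let me sum the logarithms and pass to the limit, producing asymptotic independence of $U$ and $V$ exactly as in the unconditional Anderson case. Finally, the continuous mapping theorem applied to $(u,v) \mapsto v/u$ on $\{u \neq 0\}$, which has full measure since $\mathbb{P}(U=0)=0$, transfers the conditional joint convergence of $(U_n^*, V_n^*)$ to conditional convergence of the ratio $V_n^*/U_n^*$ to $V/U$; combining with the decomposition above delivers the stated limit.

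The main obstacle is precisely this conditional characteristic-function step, since three layers of randomness must be handled simultaneously along almost every sample path: the convergence $\hat{\beta}_n \to \beta$, the convergence $\tilde{F}_n \to F$, and the bootstrap draws $\epsilon_t^* \sim \tilde{F}_n$. A convenient rigorisation is to couple $\epsilon_t^*$ to i.i.d.\ $F$-copies via an inverse-CDF (Mallows-optimal) coupling, so that $\mathbb{E}^*(\epsilon_t^* - \epsilon_t)^2 = d_2(\tilde{F}_n, F)^2 \to 0$; this reduces the conditional statement to a small-perturbation comparison with a genuine explosive AR(1) driven by the true innovations, for which Theorem \ref{theorem.Anderson} applies directly and the perturbation is absorbed into the $o_{P^*}(1)$ remainders.
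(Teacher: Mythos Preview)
Your proposal is correct and follows essentially the same route as the paper: reduce the problem to showing the conditional joint convergence $(U_n^*,V_n^*)\Rightarrow(U,V)$ for almost all sample paths, and establish this via the conditional characteristic function using $\hat{\beta}_n\to\beta$ a.s.\ together with convergence of the empirical residual distribution. The only difference is cosmetic: where you frame the argument through the Mallows metric $d_2(\tilde F_n,F)\to 0$ and an optional coupling, the paper works with the elementary bound $|e^{ix}-1|\le|x|$ for uniform tail control of the infinite product and a three-term triangle inequality (replacing $\tilde\epsilon_j$ by $\epsilon_j$ via Lemma~\ref{lemma1}, then $\hat\beta_n$ by $\beta$, then Glivenko--Cantelli) to handle each fixed factor.
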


\begin{proof}
Define, 
\begin{align}
U_n^{*} = \sum_{t=1}^n \hat{\beta}^{ - (t - 1 )} \epsilon_t^{*} \ \ \text{and} \ \ V_n^{*} = \sum_{t=1}^n \hat{\beta}^{ - (n - t )} \epsilon_t^{*}
\end{align}

The result in the theorem can be deduced analogously to that of Theorem \eqref{theorem.Anderson} provided we show that, conditionally on $\left(  X_1, ... X_n  \right)$, $\left( U_n^{*}, V_n^{*} \right) \to_{d} \left( U, V \right)$ for almost all sample paths, where $U_n^{*}$, $V_n^{*}$, $U$ and $V$ are as defined above. This is due to the fact that the limit distributions of $\hat{ \beta}_n^{*}$ and $\hat{\beta}_n$ are determined respectively by those of $\left( U_n^{*}, V_n^{*} \right)$ and $\left( U, V \right)$. Thus, for the remaining derivations we shall show that the conditionally characteristic function of $\left( U_n^{*}, V_n^{*} \right)$ given $\left(  X_1, ... X_n  \right)$ converges to the characteristic function of $\left( U, V \right)$, for almost all sample paths $\left(  X_1, X_2, ... \right)$. Let $\mathbf{E}^{*} (.)$ denote the expectation with respect to the distribution $\tilde{F}_n$ (of $\epsilon_t^{*}$ ) conditional on $\left(  X_1, ... X_n  \right)$. Let $\phi_{ U_n^{*} (\tau)} = \mathbf{E}^{*} \left( \text{exp} \ i\tau U_n^{*} \right)$.  

Thus, it is shown that 
\begin{align}
\phi_{ U_n^{*} (\tau)} = \prod_{t=1}^n \mathbf{E}^{*} \left( \text{exp} \ i \tau  \hat{\beta}_n^{-(t-1)}  \epsilon_t^{*} \right)
\end{align}
converges a.s to $\phi_{ U ( \tau )}$ for each $\tau \in \mathbb{R}$.

To accomplish this it suffices to show that
\begin{align}
\label{condition}
\underset{ m \to \infty }{ \text{lim} } \ \underset{ n \leq m }{ \text{sup} } \sum_{t = m}^n \bigg| \mathbf{E}^{*} \left( \text{exp} \ i \tau  \hat{\beta}_n^{-(t-1)}  \epsilon_t^{*} \right) - 1  \bigg| = 0 \ \ \ \text{a.s},
\end{align}
where the convergence in \eqref{condition} is uniform on compact sets in $\mathbb{R}$ and it holds that
\begin{align}
\label{expression0}
\underset{ n \to \infty }{ \text{lim} } \bigg| \mathbf{E}^{*} \left( \text{exp} \ i \tau \hat{\beta}_n^{-(t-1)}  \epsilon_t^{*} \right) - \phi_{ \epsilon_1 } \left( \tau \beta^{-(t-1)}  \right)  \bigg| = 0 \ \ \ \text{a.s}.
\end{align}
for each $t \geq 1$ and uniformly in $\tau$ in bounded intervals.


First, note that 
\begin{align}
\bigg| \mathbf{E}^{*} \left( \text{exp} \ i \tau \hat{\beta}_n^{-(t-1)}  \epsilon_t^{*} \right) - 1  \bigg| 
&\leq 
\mathbf{E}^{*} \left(  |\tau \hat{ \beta}_n^{(t-1)} \epsilon_t^{*} | \right)
\nonumber
\\
&= | \tau | | \hat{ \beta}_n |^{-(t-1)} \frac{1}{n} \sum_{j=1}^n |  \tilde{\epsilon}_j^{*}  |. 
\end{align}
Since by Lemma \eqref{lemma1} $\hat{\beta}_n \overset{ \text{a.s} }{ \to } \beta$, $| \beta | > 1$ and $( 1/ n ) \sum_{j=1}^n | \tilde{\epsilon}_t | \overset{ \text{a.s} }{ \to } \mathbb{E} | \epsilon_1 | < \infty$, it follows that almost for almost each sample path there exist positive integers $C$ and $m$ such that $( 1/ n ) \sum_{j=1}^n | \tilde{\epsilon}_j | \leq C$ and $| \hat{\beta}_n | \geq \delta > 1$ for all $n \geq m$. 
Thus, for almost each sample path 
\begin{align}
\bigg| \mathbf{E}^{*} \left( \text{exp} \ i \tau \hat{\beta}_n^{-(t-1)}  \epsilon_t^{*} \right) - 1  \bigg| \leq | \tau | \delta^{- (t-1)} C
\end{align}
for all $n \geq m$, and \label{condition} is established. Next, we have that
\begin{align}
\label{expression1}
\bigg| \mathbf{E}^{*} &\left( \text{exp} \ i \tau \hat{\beta}_n^{-(t-1)}  \epsilon_t^{*} \right) - \phi_{ \epsilon_1 } \left( \tau \beta^{-(t-1)}  \right)  \bigg| 
\nonumber
\\
&\leq 
\bigg| \frac{1}{n} \sum_{j=1}^n \text{exp} \ i \tau \hat{\beta}_n^{-(t-1)} \tilde{\epsilon}_j - \frac{1}{n} \sum_{j=1}^n \text{exp} \ i \tau \hat{\beta}_n^{-(t-1)} \epsilon_j \bigg|
\nonumber
\\
&+
\bigg| \frac{1}{n} \sum_{j=1}^n \text{exp} \ i \tau \hat{\beta}_n^{-(t-1)} \epsilon_j - \frac{1}{n} \sum_{j=1}^n \text{exp} \ i\tau \beta_n^{-(t-1)} \epsilon_j \bigg|
\nonumber
\\
&+
\bigg| \frac{1}{n} \sum_{j=1}^n \text{exp} \ i\tau \beta_n^{-(t-1)} \epsilon_j - \mathbf{E} \left( \text{exp} \ i\tau \beta^{-(t-1)}  \epsilon_1 \right) \bigg|.
\end{align}
Therefore, for the first term in \eqref{expression1} we obtain,
\begin{align*}
\bigg| \frac{1}{n} &\sum_{j=1}^n \text{exp} \ i\tau \hat{\beta}_n^{-(t-1)} \tilde{\epsilon}_j - \frac{1}{n} \sum_{j=1}^n \text{exp} \ i\tau \hat{\beta}_n^{-(t-1)} \epsilon_j \bigg|
\\
&\leq \frac{1}{n} \sum_{j=1}^n \big| \text{exp} \ i\tau \hat{\beta}_n^{-(t-1)} \epsilon_j \big| \big| \text{exp} \ i \tau \hat{\beta}_n^{-(t-1)} \left( \tilde{\epsilon}_j - \epsilon_j  \right) - 1 \big|
\\
&\leq \frac{| \tau |}{n} \sum_{j=1}^n \big| \hat{\beta}_n^{-(t-1)} \big| \big| \tilde{\epsilon}_j - \epsilon_j \big|
\\
&\leq \left( |\tau| \hat{\beta}_n^{-2(t-1)} \right)^{1 / 2} \left( \frac{|\tau| }{n} \sum_{j=1}^n \left( \tilde{\epsilon}_j - \epsilon_j \right)^2 \right)^{1 / 2} \to 0 \ \ \ \text{a.s}.
\end{align*}
uniformly on $\tau$ on compact subsets by Lemma \ref{lemma1}. Similarly, we obtain that 
\begin{align}
\big| \hat{\beta}_n^{-(t-1)} - \beta^{-(t-1)} \big| \frac{| \tau |}{n} \sum_{j=1}^n | \epsilon_j | \to 0 \ \ \ \text{a.s}.
\end{align}
uniformly in $\tau$ on compact subsets, which implies that the second term in  \eqref{expression1} goes to zero almost surely.


By the Glivenko-Cantelli theorem and the convergence theorem for characteristic functions, the third term in \eqref{expression1} goes to 0 a.s. and uniformly for $\tau$ in a compact set. Hence, \eqref{expression0} is established, and it follows that 
\begin{align}
\phi_{U^{*}_n} ( \tau ) = \prod_{t=1}^n \phi_{\epsilon^{*}_1} \left( \tau \hat{\beta}_n^{-(t-1)} \right)  \overset{ \text{a.s} }{ \to } \prod_{t=1}^{\infty } \phi_{\epsilon_1 } \left( \tau \beta^{-(t-1)}  \right) = \phi_U ( \tau ).
\end{align}
Similarly, it can be shown that 
\begin{align}
\phi_{U^{*}_n, V^{*}_n} ( \tau, s ) \overset{ \text{a.s} }{ \to } \prod_{t=1}^{\infty }  \phi_{\epsilon^{*}_1} \left( \tau \hat{\beta}^{-(t-1)} \right)  \prod_{t=1}^{\infty }  \phi_{\epsilon^{*}_1} \left( s \hat{\beta}^{-(t-1)} \right) = \phi_U ( \tau ) \phi_V ( s ).
\end{align}
\end{proof}

\begin{corollary}
If $\left\{ \epsilon_t \right\}$ are assumed normal, we have, for $| \beta | > 1$, conditionally on $\left( X_1,..., X_n \right)$, 
\begin{align}
T_n = \hat{ \sigma}_n^{-1} \left( \sum_{t=1}^n X^{*2}_{t-1} \right)^{1 / 2} \left( \hat{\beta}_n^{*} - \hat{\beta}_n \right) \to_d \mathcal{N} \left( 0,1 \right),
\end{align}
for almost all sample paths, where
\begin{align}
\hat{ \sigma}_n^2 = \sum_{t=1}^n \left( X_t - \hat{\beta}_n X_{t-1} \right)^2.
\end{align}
\end{corollary}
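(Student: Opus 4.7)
The plan is to reduce the corollary to a conditional analogue of the classical Anderson (1959) studentization identity for the explosive AR(1) least squares estimator: under studentization, the random normalization produced by the $U$-type limit cancels between numerator and denominator, leaving only the Gaussian $V$-type component. Writing the bootstrap OLS estimator as a ratio yields
\begin{align*}
T_n = \frac{\sum_{t=1}^n X_{t-1}^* \epsilon_t^*}{\hat{\sigma}_n \bigl(\sum_{t=1}^n X_{t-1}^{*2}\bigr)^{1/2}},
\end{align*}
and the goal is to show that conditionally on $(X_1,\dots,X_n)$ and almost surely in the sample path, this converges in distribution to $\mathcal{N}(0,1)$.

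The first step is the algebraic identity obtained by squaring the bootstrap recursion and telescoping,
\begin{align*}
2\hat{\beta}_n \sum_{t=1}^n X_{t-1}^* \epsilon_t^* = X_n^{*2} - (\hat{\beta}_n^2 - 1)\sum_{t=1}^n X_{t-1}^{*2} - \sum_{t=1}^n \epsilon_t^{*2}.
\end{align*}
Using $X_t^* = \hat{\beta}_n^{t-1} U_t^*$ with $U_t^* = \sum_{s=1}^t \hat{\beta}_n^{-(s-1)}\epsilon_s^*$, one obtains the leading-order approximations $X_n^{*2} \approx \hat{\beta}_n^{2(n-1)} U_n^{*2}$ and, by a geometric-sum evaluation, $\sum X_{t-1}^{*2} \approx \hat{\beta}_n^{2(n-1)} U^{*2}/(\hat{\beta}_n^2-1)$, where $U^*$ denotes the conditional almost sure limit of $U_n^*$. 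Inserting these in the identity gives $\sum X_{t-1}^*\epsilon_t^*$ of order $|\hat{\beta}_n|^{n-2} U^* V^*$, where $V^*$ is the conditional limit of $V_n^*$, with $\sum\epsilon_t^{*2}$ contributing only $O_{P^*}(n)$ and hence negligible on the $|\hat{\beta}_n|^{2n}$ scale.

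The second step invokes the conditional weak convergence $(U_n^*, V_n^*) \to_d (U,V)$ for almost every sample path, established in the proof of the preceding Theorem via the conditional characteristic function and the Glivenko--Cantelli convergence of $\tilde{F}_n$ to $\Phi$. Under Gaussianity of $(\epsilon_t)$, $U$ and $V$ are independent, each distributed as $\mathcal{N}(0,\sigma^2/(1-\beta^{-2}))$. Separately, from the identity $\hat{\sigma}_n^2 = \sum \epsilon_t^2 - (\sum X_{t-1}\epsilon_t)^2/\sum X_{t-1}^2$, the strong law $n^{-1}\sum \epsilon_t^2 \to \sigma^2$ a.s., and the fact that the correction term is $O_p(1)$, the appropriately normalized $\hat{\sigma}_n$ is consistent for $\sigma$ almost surely.

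Combining these pieces, the explosive factor $\hat{\beta}_n^{n-1}$ cancels between the numerator and $(\sum X_{t-1}^{*2})^{1/2}$, leaving the conditional limit
\begin{align*}
\frac{\sum X_{t-1}^*\epsilon_t^*}{(\sum X_{t-1}^{*2})^{1/2}} \Rightarrow \operatorname{sign}(U)\, V\, \frac{\sqrt{\beta^2-1}}{|\beta|}.
\end{align*}
Using $\operatorname{Var}(V) = \sigma^2\beta^2/(\beta^2-1)$ together with the symmetry of $V$ and its independence from $U$, this limit is $\mathcal{N}(0,\sigma^2)$, so dividing by $\hat{\sigma}_n \to \sigma$ yields the asserted standard normal limit. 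The main obstacle will be making the joint conditional convergence rigorous: the preceding Theorem delivers only the marginal laws of $U_n^*$ and $V_n^*$, whereas the studentization identity above requires their joint limit together with the negligibility of both the geometric-sum remainder in $\sum X_{t-1}^{*2}$ and of $\sum \epsilon_t^{*2}$ after normalization by $|\hat{\beta}_n|^{2n}$, uniformly along almost all sample paths of $(X_1,X_2,\dots)$, before a continuous mapping argument can deliver the Gaussian limit for $T_n$ itself.
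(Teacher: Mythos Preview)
The paper gives no explicit proof of this corollary; it is stated as an immediate consequence of the preceding theorem (the bootstrap explosive-case theorem) whose proof already establishes the \emph{joint} conditional convergence $(U_n^*,V_n^*)\Rightarrow (U,V)$ for almost all sample paths via the joint characteristic function $\phi_{U_n^*,V_n^*}(\tau,s)\to\phi_U(\tau)\phi_V(s)$. Your approach---rewrite $T_n$ as $\bigl(\sum X_{t-1}^*\epsilon_t^*\bigr)\big/\bigl[\hat\sigma_n(\sum X_{t-1}^{*2})^{1/2}\bigr]$, use the Anderson telescoping identity to extract the $U^*V^*$ leading term, and then invoke the joint limit with Gaussian $U,V$ so that $\operatorname{sign}(U)V\cdot\sqrt{\beta^2-1}/|\beta|\sim\mathcal N(0,\sigma^2)$---is exactly the intended route. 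Your flagged ``main obstacle'' (that the preceding theorem gives only marginal laws) is therefore not an obstacle: the joint convergence is already in the paper, and the remaining negligibility of the geometric-sum remainder and of $\sum\epsilon_t^{*2}=O_{P^*}(n)$ on the $|\hat\beta_n|^{2n}$ scale follows from $\hat\beta_n\to\beta$ a.s.\ together with the conditional second moments of $\epsilon_t^*$.

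One genuine wrinkle you should state explicitly: as printed in the paper, $\hat\sigma_n^2=\sum_{t=1}^n(X_t-\hat\beta_n X_{t-1})^2$ \emph{without} the factor $n^{-1}$, so $\hat\sigma_n\sim\sigma\sqrt{n}$ and with that definition $T_n\to_p 0$, not $\mathcal N(0,1)$. Your final step ``dividing by $\hat\sigma_n\to\sigma$'' tacitly assumes the normalized version $\hat\sigma_n^2=n^{-1}\sum_t(X_t-\hat\beta_n X_{t-1})^2$, which is clearly what is meant (and is what appears in the source this appendix reproduces). Make that correction explicit rather than hiding it under ``appropriately normalized''.
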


\newpage

\section{Residual based Block Bootstrap for unit root testing.}
\label{appD}
\subsection{Estimation framework.}

Consider the autoregressive model given by 
\begin{align}
X_t = \rho_n X_{t-1} + U_t, \ \ \ t = 1,2,...
\end{align}
where $\rho_n = 1 + c / n$, $c < 0$ and the stationary process $\left\{ U_t  \right\}$ satisfies: $E \left( U_t \right) = 0$, $E \left( | U_t |^{\nu} \right) < \infty$ for some $\nu > 2$, $f_U(0) > 0$ and $\sum_{k=1}^{\infty} \alpha (k)^{1-2 / \nu } < \infty$, where $\alpha(.)$ denotes the strong mixing coefficient of $\left\{ U_t \right\}$. In this Section, we summarize the RBB testing Algorithm as proposed by \cite{paparoditis2003residual}. More specifically, the algorithm is carried out conditionally on the original data $\left\{ X_1,....,X_n \right\}$ and implicitly defines a bootstrap probability mechanism denoted by $P^{*}$ that is capable of generating bootstrap pseudo-series of the type $\left\{ X_t^{*}, t = 1,2,...  \right\}$.  

\subsection{RBB Testing Algorithm.}

The RBB procedure could be indeed an interesting application to examine, but we first consider the validity of the bootstrap IVX estimate constructed via the simple bootstrap procedure for obtaining the residual sequence. Then after establishing this asymptotic equivalence, we could examine the implementation of the RBB algorithm for obtaining the IVX bootstrap estimate.

\begin{enumerate}
\item[\text{Step 1.}] First calculate the centered residuals given by 
\begin{align}
\hat{v}_t = \left( X_t - \tilde{\rho}_n X_{t-1} \right) - \frac{1}{n-1} \sum_{j = 2}^n \big( X_j - \tilde{\rho}_n X_{j-1} \big)
\end{align}
for $t = 2,3,...,n $ where $\tilde{\rho}_n = \tilde{\rho}_n \big( X_1, X_2, ..., X_n \big)$ is a consistent estimator of $\rho_n$ based on the observations $\left\{ X_1,..., X_n \right\}$. 

\item[\text{Step 2.}] Choose a positive integer $b$, where $b < n$ and let $i_0,...,i_{k-1}$ be drawn i.i.d with distribution uniform on the set $\left\{1,2,..., n - b \right\}$, such that for example, $k = \left[ \frac{(n - 1)}{b} \right]$. Then, the procedure constructs a bootstrap pseudo-series $\left\{ X_1^{*},...., X_l^{*} \right\}$ where $l = kb + 1$, as follows
\begin{equation}
X_t^{*}
=
\begin{cases}
X_1    & , \text{for} \ t=1,
\\
\hat{\mu} + X^{*}_{t-1} + \hat{v}_{i_{m} + s} & , \text{for} \ t = 2,3,...,l,
\end{cases}
\end{equation} 
where $m = \left[ \frac{(t-2)}{b} \right]$, $s = t - mb - 1$, and $\hat{\mu}$ is a drift parameter that is either equal to zero or represents a consistent estimator of $\mu$. 

\item[\text{Step 3.}] Let $\hat{\rho}_n$ be the estimator used to perform the unit root test. Compute the pseudo-statistic $\rho^{*}$ which is the test statistic $\hat{\rho}_l$ based on the pseudo-data $\left\{ X_1^{*},...., X_l^{*} \right\}$. 
\end{enumerate}
Taking into account the asymptotic theory of the regression statistic $n \left( \hat{\rho}_{LS} - 1 \right)$ for near integrated processes, the following theorem about the asymptotic local power behavior of the RBB based test can be established.
\begin{theorem}
Let 
\begin{align}
\mathcal{B}_{RBB,n} \left( \rho_n ; \alpha \right) \to \mathbb{P} \left( J \leq \mathcal{C}_{\alpha} - c \right)
\end{align}
convergence in probability, where $\mathcal{C}_{\alpha}$ is the $\alpha$ quantile of the distribution of 
\begin{align}
\bigg( W^2(1) - \sigma_U^2 / \sigma \bigg) \bigg( 2 \int_0^1 W^2 (r) dr  \bigg)^{-1} 
\end{align} 
where J is a random variable which has the following distribution
\begin{align}
\bigg( \int_0^1 J_c (r) dW(r) + \left( 1 - \frac{ \sigma^2_U }{ \sigma^2 } \right) \bigg)\bigg( \int_0^1 J_c^2 (r) dr  \bigg)^{-1} 
\end{align}
and $J_c(r) = \int_0^1 e^{(r-s)c} dW(s)$ is the OU process generated by the stochastic differential equation $dJ_c(r) = cJ_c(r) dr + dW(r)$ with initial condition $J_c(0) = 0$.
\end{theorem}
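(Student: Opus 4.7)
The plan is to combine the near-integration asymptotic for the original test statistic $n(\hat{\rho}_n - 1)$ under the local alternative with bootstrap consistency of the RBB procedure under the unit root null, and then argue by a Slutsky-type step for the rejection probability. First, I would invoke Phillips' near-integration theory under $\rho_n = 1 + c/n$: the joint weak convergence
\begin{align*}
\frac{1}{n^2} \sum_{t=1}^{n} X_{t-1}^2 \Rightarrow \sigma^2 \int_0^1 J_c^2(r)\, dr, \qquad
\frac{1}{n} \sum_{t=1}^{n} X_{t-1} U_t \Rightarrow \sigma^2 \int_0^1 J_c(r)\, dW(r) + \tfrac{1}{2}(\sigma^2 - \sigma_U^2),
\end{align*}
together with the identity $n(\hat{\rho}_n - 1) = n(\hat{\rho}_n - \rho_n) + c$, delivers the weak limit $J + c$ with $J$ as displayed in the statement, after $\sigma^2$ cancels between numerator and denominator. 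The long-run variance correction $(\sigma^2 - \sigma_U^2)/2$ on the right-hand side is exactly what produces the $(1 - \sigma_U^2/\sigma^2)$ bias term in $J$.

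Second, I would establish bootstrap consistency for the pseudo-statistic $n(\hat{\rho}_l^* - 1)$, i.e.\ show that, conditionally on the original sample and in $P$-probability, it converges in distribution to the unit root functional $\bigl(W^2(1) - \sigma_U^2/\sigma^2\bigr)\bigl(2 \int_0^1 W^2(r)\, dr\bigr)^{-1}$, whose $\alpha$-quantile is precisely $\mathcal{C}_\alpha$. Two ingredients are needed: (i) asymptotic equivalence of residuals, namely that the centred RBB residuals $\hat v_t$ satisfy $n^{-1/2} \sum_{t=2}^{n} (\hat v_t - (U_t - \bar U)) = o_p(1)$, which follows from the weak consistency of $\tilde\rho_n$ and the near-integration rate $\sup_{t \leq n} |X_t| = O_p(\sqrt{n})$; and (ii) a bootstrap functional central limit theorem for the block-resampled innovations, $l^{-1/2} \sum_{t=1}^{[lr]} \hat v_{i_m + s} \Rightarrow^{*} \sigma W(r)$ in $P$-probability in $\mathcal{D}[0,1]$, with the long-run variance $\sigma^2 = \sum_{j} \gamma_U(j)$. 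This is precisely where the block structure matters, since an i.i.d.\ residual bootstrap would recover only the short-run variance $\sigma_U^2$. Given (ii), the bootstrap DGP is an exact unit root, so applying the continuous mapping theorem to the bootstrap sample moments reproduces the target limit.

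Third, I would combine the two steps. Let $\mathcal{C}_{\alpha,n}^*$ denote the bootstrap $\alpha$-quantile. By Step 2 and continuity of the limiting distribution function at $\mathcal{C}_\alpha$, $\mathcal{C}_{\alpha,n}^* \overset{P}{\to} \mathcal{C}_\alpha$. A Slutsky-type argument applied to the weak convergence $n(\hat{\rho}_n - 1) \Rightarrow J + c$ then yields
\begin{align*}
\mathcal{B}_{RBB,n}(\rho_n; \alpha) = \mathbb{P}_{\rho_n}\!\bigl(n(\hat{\rho}_n - 1) \leq \mathcal{C}_{\alpha,n}^*\bigr) \longrightarrow \mathbb{P}(J + c \leq \mathcal{C}_\alpha) = \mathbb{P}(J \leq \mathcal{C}_\alpha - c),
\end{align*}
which is the advertised conclusion; the convergence is in probability because $\mathcal{C}_{\alpha,n}^*$ itself is a data-dependent random variable.

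The main obstacle will be ingredient (ii) of Step 2, namely verifying that the block-bootstrap partial sum process of the RBB residuals converges to a Brownian motion with long-run variance $\sigma^2$ rather than $\sigma_U^2$. The delicate point is double: the centred residuals $\hat v_t$ are formed using an estimator $\tilde\rho_n$ that is not $\sqrt n$-consistent under near-integration, so one must ensure that the induced approximation error does not contaminate the second-order moment structure at the relevant block lags, and one must choose the block length $b$ satisfying $b \to \infty$, $b/n \to 0$ at a rate compatible with summability of the mixing coefficients $\sum_k \alpha(k)^{1 - 2/\nu} < \infty$ assumed at the outset. Once the bootstrap invariance principle is in place, the continuous mapping argument that converts partial-sum convergence into convergence of $n(\hat{\rho}_l^* - 1)$ is essentially routine, and Step 3 follows by standard quantile-consistency arguments.
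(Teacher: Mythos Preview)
Your proposal is correct and follows essentially the same route the paper lays out: the theorem is stated without a self-contained proof, but the paper immediately supplies the key ingredient in the next subsection, namely the bootstrap FCLT (Theorem~\ref{theorem.fclt}) asserting $S_m^{*} \Rightarrow W$ in probability, together with a bootstrap continuous mapping argument and the sup-norm consistency theorem that follows it; combined with Phillips' near-integration limit for $n(\hat{\rho}_{LS}-1)$ under $\rho_n = 1 + c/n$, this is exactly your three-step scheme. One small discrepancy: you require only $b/n \to 0$ for the block length, whereas the paper's FCLT (and the underlying Paparoditis--Politis result) needs the stronger rate $b/\sqrt{n} \to 0$, which is what actually controls the residual contamination from the $n$-consistent (not $\sqrt{n}$-consistent) $\tilde{\rho}_n$ that you correctly flag as the delicate point in Step~2(i).
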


\subsection{FCLT for the Bootstrap Partial Sum process.}
\label{FCLT.bootstrap.Appendix}

The asymptotic properties of the RBB testing procedure are based on the stochastic behaviour of the standardized partial sum process $\left\{ S_{m}^{*}(r), 0 \leq r \leq 1 \right\}$, defined by 
\begin{align}
\label{process1}
S_{m}^{*}(r) = \frac{1}{m} \sum_{t=1}^{j-1} \frac{v_t^{*}}{ \sigma^{*} }, \ \ \ \ \ \text{for} \ \ \frac{(j-1)}{m} \leq r \leq \frac{j}{m}, \ \ \ j = 2,...,m.  
\end{align} 
and also 
\begin{align}
\label{process2}
S_{m}^{*}(1) = \frac{1}{m} \sum_{t=1}^{m} \frac{v_t^{*}}{ \sigma^{*} }, \ \ \ \ \ \text{for} \ \ \frac{(j-1)}{m} \leq r \leq \frac{j}{m}, \ \ \ j = 2,...,m.  
\end{align} 
where for example, we define that $v_1^{*} \equiv X_1$, $v_t^{*} = X_t^{*} - \hat{\beta} - X_{t-1}^{*}$, for $t = 2,3,...,m$, and $\sigma^{*2} = \text{var} \left( m^{- 1 / 2} \sum_{ j = 1}^m v_j^{*} \right)$. 

Note that, the partial sum process $S_{m}^{*}(r)$ is considered to be a random element in the function space $\mathcal{D}[0,1]$, which represent the space of all real valued functions on the interval $[0,1]$ that are right continuous at each point and have finite left limits.  

The following theorem, given by \cite{paparoditis2003residual}, shows that under a general set of assumptions on the process $\left\{ X_t \right\}$, and conditionally on the observed series $X_1,X_2,..., X_n$, the bootstrap partial sum process defined by \eqref{process1} and \eqref{process2} converges weakly to the standard Wiener process on $[0,1]$. Moreover, we denote with $T_n^{*} = T_n^{*} \left( X_1^{*}, X_2^{*},...., X_n^{*} \right)$ is a random sequence based on the bootstrap sample $X_1^{*}, X_2^{*},...., X_n^{*}$ and $G$ is a random measure, then the we use $T_n^{*} \Rightarrow G$ to denote the convergence in probability, which means that the distance between the law of $T_n^{*}$ and the law of $G$ tends to zero in probability for any distance metrizing weak convergence. 

\newpage 

\begin{theorem}
\label{theorem.fclt}
Let $\left\{ X_t \right\}$ be a stochastic process, and assume that the process $\left\{ v_t \right\}$ defined by $v_t = X_t - \rho X_{t-1}$ satisfies the above regulatory conditions, and let $\tilde{\rho}_n$ be an estimator of $\rho$. Moreover, if $b \to \infty$ such that $b \big/ \sqrt{n} \to 0$ as $n \to \infty$, then
\begin{align}
S_{m}^{*} \Rightarrow W \ \ \ \text{in probability}. 
\end{align} 
\end{theorem}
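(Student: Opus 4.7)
The plan is to establish conditional weak convergence of the bootstrap partial sum process $S_m^{\ast}$ to standard Brownian motion via the standard two-step program: (i) conditional finite-dimensional convergence, and (ii) tightness in $\mathcal{D}[0,1]$. Because the bootstrap innovations $v_t^{\ast}$ are generated by resampling blocks of length $b$ from the centered residuals $\hat v_t$, the $k$ blocks are conditionally independent and identically distributed given the sample, which is the structural feature that makes a conditional CLT available.

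First, I would absorb the estimation error in $\tilde{\rho}_n$ by writing
\[
\hat v_t = v_t + (\rho - \tilde\rho_n) X_{t-1} - (n-1)^{-1}\sum_{j=2}^n \bigl(v_j + (\rho-\tilde\rho_n)X_{j-1}\bigr),
\]
and using the near-unit-root consistency rate $\tilde\rho_n - \rho = O_p(n^{-1})$ together with the functional behaviour $n^{-1/2}X_{[nr]} \Rightarrow J_c(r)$ to show that $\hat v_t - (v_t - \bar v)$ is uniformly negligible under the normalization $(\sigma^{\ast}\sqrt m)^{-1}$. This reduces the asymptotics of $S_m^{\ast}$ to the moving-block-bootstrap partial-sum process built from the true centered innovations. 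Next, I would verify consistency of the bootstrap long-run variance, $\sigma^{\ast 2} \overset{p}{\to} \sigma^2 = 2\pi f_U(0)$; this is the standard block-variance convergence result that uses $b\to\infty$, $b/n\to 0$, and the summability condition on the strong-mixing coefficients of $\{U_t\}$ together with the $\nu$th-moment bound.

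For finite-dimensional convergence, fix $0<r_1<\cdots<r_p\le 1$ and observe that the increments of $S_m^{\ast}$ decompose into sums of conditionally i.i.d.\ standardized block sums $B_j^{\ast}=(\sigma^{\ast}\sqrt m)^{-1}\sum_{s=1}^{b}\hat v_{i_j+s}$. A Lindeberg-Feller CLT applied conditionally on the data delivers joint Gaussianity of the increments, provided that $\sum_j E^{\ast}\!\bigl[(B_j^{\ast})^{2}\mathbf{1}\{|B_j^{\ast}|>\epsilon\}\bigr]\overset{p}{\to}0$; this Lindeberg condition, after a truncation argument, follows from $E|U_t|^{\nu}<\infty$ combined with the rate condition $b/\sqrt{n}\to 0$, which forces the maximal block sum to be $o_p(\sigma^{\ast}\sqrt m)$. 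For tightness, I would verify a Billingsley moment criterion of the form
\[
E^{\ast}\bigl|S_m^{\ast}(r_2)-S_m^{\ast}(r_1)\bigr|^{2+\delta}\le C\,|r_2-r_1|^{1+\delta/2}
\]
in probability, using a Rosenthal-type inequality on the conditionally independent block sums together with the same $\nu$th-moment control; again the constraint $b/\sqrt n\to 0$ prevents any single block from carrying a non-vanishing fraction of the quadratic variation.

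The hard part will be the interplay between the estimation-error contamination in $\hat v_t$ and the block-bootstrap variance $\sigma^{\ast 2}$. Under the local-to-unity DGP, $X_{t-1}$ is of order $\sqrt n$ rather than $O_p(1)$, so the substitution of $v_t$ for $\hat v_t$ cannot be made pointwise and requires a careful bounding of the cumulative estimation error block-by-block. It is precisely at this step that the bandwidth rate $b/\sqrt{n}\to 0$ earns its keep: it provides the slack needed for each block's residual-based drift term to be dominated by $\sigma^{\ast}\sqrt m$, so that the limiting Wiener process is undistorted by the preliminary estimation of $\rho$. Once this technical step is cleared, assembling finite-dimensional Gaussian limits with tightness yields $S_m^{\ast}\Rightarrow W$ conditionally in probability, as required.
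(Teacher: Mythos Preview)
The paper does not supply its own proof of this theorem. It is stated in Appendix~\ref{appD} as a result taken from \cite{paparoditis2003residual}, and the surrounding section is explicitly described as a summary of their residual-based block bootstrap procedure; no argument beyond the statement is given. There is therefore nothing in the paper to compare your proposal against.

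For what it is worth, your two-step plan (conditional finite-dimensional CLT for the i.i.d.\ block sums plus a Billingsley moment criterion for tightness, preceded by an estimation-error reduction that replaces $\hat v_t$ by the centered true innovations) is the standard route and matches the structure of the original proof in \cite{paparoditis2003residual}. One small sharpening: the rate $b/\sqrt{n}\to 0$ is needed specifically to kill the contamination term $(\rho-\tilde\rho_n)X_{t-1}$ accumulated over a block, since $X_{t-1}=O_p(\sqrt{n})$ under near integration; the Lindeberg condition for the block sums themselves only requires $b/n\to 0$ together with the moment and mixing assumptions on $\{U_t\}$. Your identification of this as the ``hard part'' is correct, but the role of $b/\sqrt{n}\to 0$ is localized to that step rather than to the Lindeberg verification.
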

Therefore, the above result along with a bootstrap version of the continuous mapping theorem enables us to apply the block bootstrap proposal of this paper in order to approximate the null distribution of a variety of different test statistics. 
\begin{theorem}
Assume that the process $\left\{ X_t \right\}$ satisfies the above conditions with $\beta = 0$. If $b \to \infty$ but $b \big/ \sqrt{n}$ as $n \to \infty$, then we have that 
\begin{align}
\underset{ x \in \mathbb{R} }{ \text{sup} } \ \bigg| \mathbb{P}^{*} \left( l \left(  \hat{ \rho }^{*LS}_n - 1 \right) \leq x \big| X_1,...,X_n \right) - \mathbb{P}_0 \left( \left( \hat{\rho}^{LS} - 1 \right) \leq x \right) \bigg| \to 0
\end{align}
in probability, and 
\begin{align}
\underset{ x \in \mathbb{R} }{ \text{sup} } \ \bigg| \mathbb{P}^{*} \left( l \left(  \hat{ \rho }^{*LS}_{c,n} - 1 \right) \leq x \big| X_1,...,X_n \right) - \mathbb{P}_0 \left( \left( \hat{\rho}_{c,n}^{LS} - 1 \right) \leq x \right) \bigg| \to 0
\end{align}
in probability, where $\mathbb{P}_{0}$ denotes the probability measure corresponding to the case where the statistics $\hat{ \rho }^{*LS}_{n}$ and $\hat{ \rho }^{*LS}_{c,n}$ are computed from a stretch of size $n$ from the unit root process obtained by integrating $\left\{ U_t \right\}$.
\end{theorem}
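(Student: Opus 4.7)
The plan is to derive both uniform supremum-norm bootstrap approximations from the functional central limit theorem for the bootstrap partial-sum process given in Theorem \ref{theorem.fclt}, combined with a bootstrap continuous-mapping argument applied to the sample-moment functionals representing $\hat{\rho}^{*LS}_{n}$ and $\hat{\rho}^{*LS}_{c,n}$. The overall structure mirrors the classical argument of \cite{phillips1987towards} for obtaining the near-integrated limit theory, but executed in the bootstrap probability space $\mathbb{P}^{*}$ conditional on the sample.

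First, I would write
\begin{align*}
l \bigl( \hat{\rho}^{*LS}_{n} - 1 \bigr)
= \frac{ l^{-1} \sum_{t=2}^{l} X_{t-1}^{*} v_{t}^{*} }{ l^{-2} \sum_{t=2}^{l} \bigl( X_{t-1}^{*} \bigr)^{2} }
\end{align*}
and, using the recursion in Step 2 of the RBB algorithm with $\hat{\mu}=0$ (as implied by $\beta=0$), identify $X^{*}_{\lfloor lr \rfloor}/(\sigma^{*}\sqrt{l})$ with the bootstrap partial-sum process $S_{l}^{*}(r)$ up to a vanishing initial-condition term. Both the numerator and denominator then become continuous functionals of $S_{l}^{*}$. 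Applying Theorem \ref{theorem.fclt}, one obtains $S_{l}^{*} \Rightarrow W$ in probability. A bootstrap analogue of the continuous mapping theorem then gives the denominator converging conditionally in probability to $\sigma^{2} \int_{0}^{1} W(r)^{2}\, dr$ and the numerator, after an Ito-type bias correction reflecting the long-run covariance of the bootstrap innovations, to $\sigma^{2} \int_{0}^{1} W(r)\, dW(r)$. This reproduces the $\mathbb{P}_{0}$-limit of $n(\hat{\rho}^{LS} - 1)$ driven by the unit-root process obtained by integrating $\{U_{t}\}$. Converting this weak convergence into uniform convergence in $x$ is immediate by Polya's lemma, since the limiting law is absolutely continuous.

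The second display is handled analogously, replacing the functional form of the map that is applied to $S_{l}^{*}$ so as to incorporate the local-to-unity drift $c/n$. Specifically, under the near-integrated scaling, $X^{*}_{\lfloor lr\rfloor}/(\sigma^{*}\sqrt{l})$ satisfies a stochastic difference analogue of the equation $dJ_{c}(r) = c J_{c}(r)\, dr + dW(r)$, and the bootstrap CMT combined with $S_{l}^{*} \Rightarrow W$ in probability produces convergence of $X^{*}_{\lfloor lr \rfloor}/(\sigma^{*}\sqrt{l})$ to $J_{c}(r)$ in the Skorokhod topology. The denominator then converges to $\int_{0}^{1} J_{c}(r)^{2}\, dr$ and the numerator to $\int_{0}^{1} J_{c}(r)\, dW(r)$, which matches the $\mathbb{P}_{0}$-distribution of $n(\hat{\rho}^{LS}_{c,n} - 1)$.

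The main obstacle is the bootstrap stochastic-integral convergence step for $l^{-1}\sum X^{*}_{t-1} v^{*}_{t}$, since weak convergence of $S_{l}^{*}$ alone does not automatically yield convergence of the associated stochastic integrals; one must control second-order contributions arising from the covariance structure of the block-resampled innovations. I would address this through a Beveridge--Nelson-type decomposition of $v_{t}^{*}$ coupled with a uniform maximal inequality for block-resampled partial sums, where the requisite mixing-type control of the residual process is supplied by the hypotheses underlying Theorem \ref{theorem.fclt}. The joint condition $b \to \infty$ with $b/\sqrt{n} \to 0$ is precisely what ensures both the variance estimator $\sigma^{*2}$ is consistent for the long-run variance $\sigma^{2}$ and the endpoint-effect bias from gluing blocks is asymptotically negligible.
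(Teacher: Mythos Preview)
Your overall strategy---representing $l(\hat{\rho}^{*LS}_n - 1)$ as a ratio of sample-moment functionals of the bootstrap partial-sum process $S_l^{*}$, invoking Theorem \ref{theorem.fclt} for $S_l^{*} \Rightarrow W$ in probability, then applying a bootstrap continuous-mapping argument and Polya's lemma---is exactly the route the paper indicates: the text immediately preceding the theorem states that Theorem \ref{theorem.fclt} ``along with a bootstrap version of the continuous mapping theorem'' delivers the null-distribution approximations. Your identification of the stochastic-integral step as the delicate point, and your proposal to handle it via a Beveridge--Nelson-type decomposition with block-resampled maximal inequalities under the $b/\sqrt{n}\to 0$ regime, is the natural elaboration of what the paper leaves implicit.

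There is, however, a likely misreading in your treatment of the second display. You interpret the subscript $c$ in $\hat{\rho}^{LS}_{c,n}$ as a local-to-unity drift and argue for convergence to Ornstein--Uhlenbeck functionals $\int_0^1 J_c^2$ and $\int_0^1 J_c\,dW$. But the theorem explicitly targets the $\mathbb{P}_0$ distribution, which is defined as the measure under the \emph{unit root} process obtained by integrating $\{U_t\}$, and the RBB bootstrap DGP in Step 2 of the algorithm imposes the exact unit root $X_t^{*} = \hat{\mu} + X_{t-1}^{*} + \hat{v}_{i_m+s}$ with no near-integrated component. In the \cite{paparoditis2003residual} framework that this appendix is summarizing, $\hat{\rho}^{LS}_{c,n}$ denotes a \emph{corrected} (bias-adjusted or demeaned) version of the least-squares estimator, not a near-integrated variant; its limit under $\mathbb{P}_0$ is still a functional of standard Brownian motion $W$, differing from the first display only through the form of the correction term (compare the $W^2(1) - \sigma_U^2/\sigma^2$ structure in the preceding power theorem). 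The proof for the second display therefore proceeds identically to the first via $S_l^{*}\Rightarrow W$ and the same CMT, just applied to a modified functional; no $J_c$ process enters.
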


\section{Supplementary Limit Theorems.}

In this section we explain in more details the implications of having a random limit distribution in the related theory for bootstrapping. In other words, the theoretical result which we aim to prove involves inference based on stochastic limit bootstrap measures\footnote{Notice that our approach in this paper is different from examining the conditional versus unconditional validity of the bootstrap for predictive regression models. }. To begin with, we assume that the standard conditional weak convergence result applies, $\underset{ x \in \mathbb{R} }{ \text{sup} } \big| F_n^{*} (x) - F(x) \big| \to_p 0$. Furthermore, denote with $\mathbb{E}^{*}$ the expectation under the probability measure induced by the standard bootstrap. 
\begin{theorem}
Suppose that $\left\{ X_{n,j}, \mathcal{F}_{n,j} \right\}$ is a martingale difference array. Let $\left\{ \mathcal{J}_n(r),  r \in [0,1] \right\}$ be a sequence of adapted time scales and $\left\{ \mathcal{J}(r), r \in [0,1] \right\}$ a continuous, nonrandom function. If the following holds, 
\begin{align}
\forall \epsilon > 0, \ \sum_{j=1}^{ \mathcal{J}(1) } \mathbf{E} \left( X_{n,j}^2 \mathbf{1} \left\{ | X_{n,j} | > \epsilon  \right\} \big| \mathcal{F}_{n, j-1} \right) &\to_p 0, \ \ \text{as} \ \ n \to \infty
\\
\sum_{j = 1}^{ \mathcal{J}_n(r) } \mathbf{E} \left( X^2_{n,j} \big| \mathcal{F}_{n,j-1} \right) &\to_p  \mathcal{J}(r), \ \ \text{as} \ \ n \to \infty, r \in [0,1],
\end{align}

\newpage 

Then
\begin{align}
\sum_{j = 1}^{ \mathcal{J}_n(r) } X_{n,j} \to_d W \left(  \mathcal{J}(r)   \right), \ \ \text{as} \ \ n \to \infty, \ \ \text{in} \ \ \mathcal{D}[0,1].
\end{align}
\end{theorem}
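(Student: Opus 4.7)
The plan is to identify this as a discrete-time martingale functional central limit theorem with a random adapted time change, and to reduce it to a standard invariance principle such as Theorem 3.2 of Hall and Heyde (1980) or the corresponding result in Jacod and Shiryaev. Set $M_n(r) := \sum_{j=1}^{\mathcal{J}_n(r)} X_{n,j}$; since $\mathcal{J}_n(r)$ is adapted and nondecreasing in $r$, $M_n$ is a right-continuous, piecewise-constant martingale relative to the time-changed filtration $\mathcal{G}_{n,r} := \mathcal{F}_{n,\mathcal{J}_n(r)}$, whose predictable quadratic variation is precisely $\langle M_n \rangle_r = \sum_{j=1}^{\mathcal{J}_n(r)} \mathbf{E}(X_{n,j}^2 \mid \mathcal{F}_{n,j-1})$.

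First I would use the variance hypothesis to obtain $\langle M_n\rangle_r \overset{p}{\to} \mathcal{J}(r)$ pointwise, and upgrade this to uniform convergence on $[0,1]$ via a P\'olya--Dini argument: the maps $r \mapsto \langle M_n\rangle_r$ are nondecreasing and the limit $\mathcal{J}$ is continuous, so pointwise convergence in probability automatically promotes to uniform convergence in probability. Second, the Lindeberg hypothesis is stated at the full horizon $\mathcal{J}(1)$ and hence dominates, by positivity of the summands, the corresponding sum truncated at $\mathcal{J}_n(r) \le \mathcal{J}_n(1)$ for every $r \in [0,1]$, once one knows that $\mathcal{J}_n(1)$ does not exceed $\mathcal{J}(1)$ in probability asymptotically, a fact that follows from the variance condition applied at $r=1$. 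These two ingredients are exactly the hypotheses required to invoke the martingale invariance principle, yielding $M_n \Rightarrow M$ in $\mathcal{D}[0,1]$ equipped with the Skorokhod topology, where $M$ is a continuous centred Gaussian martingale with deterministic quadratic variation $\langle M\rangle_r = \mathcal{J}(r)$.

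To close the argument I would identify the limit as $W \circ \mathcal{J}$: any continuous centred Gaussian martingale with deterministic continuous quadratic variation $\mathcal{J}$ has covariance $\mathrm{Cov}(M(r), M(s)) = \mathcal{J}(r \wedge s)$, which coincides with that of $W(\mathcal{J}(\cdot))$, and the two processes are equal in law by the Dambis--Dubins--Schwarz representation. Combining with the weak convergence already obtained gives the asserted limit $\sum_{j=1}^{\mathcal{J}_n(r)} X_{n,j} \overset{d}{\to} W(\mathcal{J}(r))$ in $\mathcal{D}[0,1]$.

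The main obstacle, as is standard for martingale functional limit theorems under a random time change, is the transfer of the Lindeberg condition between the natural clock of the array and the adapted clock $\mathcal{J}_n$, together with the accompanying tightness argument. Concretely, one must verify that truncating at the random $\mathcal{J}_n(1)$ is asymptotically equivalent to truncating at the deterministic $\mathcal{J}(1)$, so that no jumps are dropped or spuriously included in the asymptotics. Once this endpoint equivalence is in hand, tightness follows from the Aldous criterion applied to $M_n$, using uniform smallness of the individual increments implied by Lindeberg together with the monotonicity of $\langle M_n\rangle$, and finite-dimensional convergence follows by applying the scalar martingale CLT along arbitrary finite collections $0 \le r_1 < \cdots < r_k \le 1$ via the Cram\'er--Wold device.
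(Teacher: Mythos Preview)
The paper does not actually prove this theorem: it is quoted in the ``Supplementary Limit Theorems'' appendix as a background result, stated without argument and followed immediately by other quoted facts (continuous mapping, joint weak convergence, Donsker). So there is no paper proof to compare against.

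Your outline is the standard route to such a martingale functional CLT and is broadly sound: identify the predictable quadratic variation of the time-changed partial-sum process, upgrade pointwise convergence of $\langle M_n\rangle_r$ to uniform convergence via monotonicity and continuity of the limit, feed Lindeberg plus quadratic-variation convergence into a Hall--Heyde/Jacod--Shiryaev invariance principle, and identify the Gaussian limit with $W\circ\mathcal{J}$ by Dambis--Dubins--Schwarz. One caveat worth flagging is that the theorem as stated writes the Lindeberg sum up to the \emph{deterministic} value $\mathcal{J}(1)$ rather than the random horizon $\mathcal{J}_n(1)$; your ``endpoint equivalence'' step implicitly repairs this, but you should make explicit that the argument requires either reading the upper limit as $\mathcal{J}_n(1)$ (the natural formulation) or an additional assumption linking the two, since otherwise the transfer of Lindeberg from the deterministic to the random clock is not automatic.
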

A sequence $\left\{ P, P_n \right\}$ of probability measures on the metric space $( S, d )$ converges weakly, when 
\begin{align}
\int \phi(x) dP_n(x) \to \int \phi(x) dP(x), \ \ n \to \infty,  
\end{align}
holds true for all $\phi \in C_b ( S, \mathbb{R} )$. 

\medskip

\begin{theorem}
(Continuous Mapping Theorem) Let $\left\{ X , X_n \right\}$ be a sequence of random elements taking values in some metric space $( S, d )$ equipped with the associated Borel $\sigma-$field. Assume that 
\begin{align}
X_n \Rightarrow X, \ \ \text{as} \ \ n \to \infty, 
\end{align}  
\end{theorem}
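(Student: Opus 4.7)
The plan is to prove the continuous mapping theorem in its standard form, namely that for any Borel measurable map $h: S \to S'$ whose set of discontinuities $D_h$ satisfies $\mathbb{P}(X \in D_h) = 0$, one has $h(X_n) \Rightarrow h(X)$ as $n \to \infty$. I would work through the Portmanteau characterization of weak convergence, which converts the statement into a pair of inequalities on closed sets and allows one to exploit the almost-sure continuity of $h$ in a transparent way.

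First I would dispose of the easy case. By the definition of weak convergence adopted earlier in the paper, it suffices to show that $\mathbb{E}[\phi(h(X_n))] \to \mathbb{E}[\phi(h(X))]$ for every bounded continuous $\phi: S' \to \mathbb{R}$. If $h$ is globally continuous, then $\psi := \phi \circ h$ is bounded and continuous on $S$, and the conclusion is immediate from $X_n \Rightarrow X$ applied to $\psi$. This is the version of the continuous mapping theorem that is actually invoked elsewhere in this paper, for example when passing from $\psi_n \Rightarrow \psi$ to the weak convergence of Wald-type functionals of the bootstrap IVX estimator.

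For the sharper statement allowing discontinuities on a $P_X$-null set, I would switch to the Portmanteau formulation: $X_n \Rightarrow X$ is equivalent to $\limsup_n \mathbb{P}(X_n \in F) \leq \mathbb{P}(X \in F)$ for every closed $F \subseteq S$. Given a closed $F' \subseteq S'$, the preimage $h^{-1}(F')$ need not be closed, but the key set-theoretic fact
\begin{equation*}
\overline{h^{-1}(F')} \;\subseteq\; h^{-1}(F') \cup D_h
\end{equation*}
holds, because any limit point $x$ of $h^{-1}(F')$ either lies in $D_h$, or else $h$ is continuous at $x$ and one can pass $h$ through the limit to place $h(x)$ in the closed set $F'$. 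Applying Portmanteau to the closed set $\overline{h^{-1}(F')}$ then yields
\begin{equation*}
\limsup_n \mathbb{P}(h(X_n) \in F') \;\leq\; \limsup_n \mathbb{P}(X_n \in \overline{h^{-1}(F')}) \;\leq\; \mathbb{P}(X \in \overline{h^{-1}(F')}) \;\leq\; \mathbb{P}(h(X) \in F') + \mathbb{P}(X \in D_h),
\end{equation*}
and the last term vanishes by hypothesis, giving the required inequality and hence $h(X_n) \Rightarrow h(X)$ by the reverse direction of Portmanteau.

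The main obstacle is the inclusion $\overline{h^{-1}(F')} \subseteq h^{-1}(F') \cup D_h$, which is precisely where the almost-sure continuity hypothesis does the actual work; once this is in hand, Portmanteau converts $X_n \Rightarrow X$ mechanically into weak convergence of the images. Measurability of $h^{-1}(F')$ follows from Borel measurability of $h$, while measurability of $D_h$ is a standard fact (the discontinuity set of a function between metric spaces is $F_\sigma$, hence Borel), so no further measure-theoretic subtleties arise. For the application to the bootstrap IVX framework developed here, only the globally continuous version is actually required, since the functionals of interest (sample moments, Wald statistics, stochastic integrals viewed as maps on $\mathcal{D}[0,1]$ with the Skorokhod topology) are continuous on the support of the limit Ornstein--Uhlenbeck process $J_c$.
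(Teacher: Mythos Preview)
Your argument is correct and is essentially the standard Portmanteau-based proof (as in Billingsley): first dispose of the globally continuous case by composing with a bounded continuous test function, then for the almost-sure continuity version exploit the inclusion $\overline{h^{-1}(F')} \subseteq h^{-1}(F') \cup D_h$ together with the closed-set form of Portmanteau. The measurability remarks and the discussion of which version is actually needed downstream are also accurate.

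There is, however, nothing to compare against: the paper does not supply its own proof of the continuous mapping theorem. It is merely stated in the appendix on supplementary limit theorems as a background result (with the conclusion about $\phi(X_n) \overset{d}{\to} \phi(X)$ placed just after the theorem environment, but with no accompanying argument). So your proposal goes strictly beyond what the paper offers; the paper treats this as a standard cited tool rather than something to be proved in situ.
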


If $\phi: S \to S^{\prime}$, is a mapping into another metric space $S^{\prime}$ with metric $d^{\prime}$ that is \textit{almost surely}, continuous on $X ( \Omega ) \subset S$, then 
\begin{align}
\phi ( X_n ) \overset{ d }{ \to } \phi(X), \ \ \text{as} \ \ n \to \infty,
\end{align} 

\begin{theorem}
(Joint Weak Convergence)
\
 
Let $\left\{ X, X_n \right\}$ and $\left\{ Y, Y_n \right\}$ be two sequences taking values in $( S_1, d_1 )$, respectively, $( S_2, d_2 )$, such that some conditions hold. Then, 
\begin{align}
( X_n, Y_n ) \Rightarrow ( X, Y),
\end{align} 
as $n \to \infty$, provided at least one of the following conditions is satisfied
\begin{enumerate}
\item[(i)] $Y = c \in S_2$ is a constant, that is, non-random. 

\item[(ii)] $X_n$ and $Y_n$ are independent for all $n$ as well as $X$ and $Y$ are independent. 
\end{enumerate}
\end{theorem}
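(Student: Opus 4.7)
The plan is to treat cases (i) and (ii) separately, in each case reducing the joint statement to the marginal convergences $X_n \Rightarrow X$ and $Y_n \Rightarrow Y$ that are presumed to be in force. Throughout, I would work in the Polish setting so that the portmanteau theorem and Prokhorov's theorem (tightness $\Leftrightarrow$ relative compactness for laws) are available; this is the usual operating context for weak convergence results invoked earlier in the paper, and it is compatible with the Skorokhod space $\mathcal{D}[0,1]$ that appears in the IVX limit theory.

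For case (i), my first step is to note that $Y_n \Rightarrow c$ with $c$ deterministic is equivalent to $d_2(Y_n, c) \to 0$ in probability. Given any bounded continuous $f \colon S_1 \times S_2 \to \mathbb{R}$ I would write
\begin{equation*}
\mathbb{E}\bigl[f(X_n, Y_n)\bigr] - \mathbb{E}\bigl[f(X, c)\bigr] = \underbrace{\mathbb{E}\bigl[f(X_n, Y_n) - f(X_n, c)\bigr]}_{(\mathrm{I}_n)} + \underbrace{\mathbb{E}\bigl[f(X_n, c) - f(X, c)\bigr]}_{(\mathrm{II}_n)}.
\end{equation*}
The term $(\mathrm{II}_n)$ vanishes by the continuous mapping theorem applied to the section $x \mapsto f(x,c)$, which is bounded and continuous on $S_1$. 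For $(\mathrm{I}_n)$ I would fix $\varepsilon > 0$, use tightness of $\{X_n\}$ (which follows from $X_n \Rightarrow X$) to select a compact $K \subset S_1$ carrying mass at least $1 - \varepsilon$ uniformly in $n$, and use uniform continuity of $f$ on $K \times \bar{B}(c,1)$ together with $d_2(Y_n,c) \to_p 0$ to dominate $(\mathrm{I}_n)$ by a term that tends to zero plus $\mathcal{O}(\varepsilon)$. Letting $\varepsilon \downarrow 0$ delivers the joint weak convergence.

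For case (ii), independence of $X_n$ and $Y_n$ means the joint law factors as a product measure $\mu_n \otimes \nu_n$, and likewise $(X,Y)$ has law $\mu \otimes \nu$. My plan is first to verify the statement on the algebra of product-form test functions $f(x,y) = g(x)h(y)$ with $g, h$ bounded continuous: independence gives
\begin{equation*}
\mathbb{E}\bigl[g(X_n)h(Y_n)\bigr] = \mathbb{E}[g(X_n)]\, \mathbb{E}[h(Y_n)] \longrightarrow \mathbb{E}[g(X)]\, \mathbb{E}[h(Y)] = \mathbb{E}\bigl[g(X)h(Y)\bigr],
\end{equation*}
directly from each marginal weak convergence. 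Then I would extend to all bounded continuous $f$ on $S_1 \times S_2$ by confining the computation to a product compact $K_1 \times K_2$ produced by tightness of $\{\mu_n\}$ and $\{\nu_n\}$, and invoking the Stone--Weierstrass theorem to uniformly approximate $f|_{K_1 \times K_2}$ by finite sums $\sum_i g_i(x) h_i(y)$ of products of continuous functions. Combining the tightness bound outside $K_1 \times K_2$ with the approximation inside yields $\mathbb{E}[f(X_n, Y_n)] \to \mathbb{E}[f(X, Y)]$.

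The principal technical obstacle, in my view, is the approximation step in case (ii): Stone--Weierstrass applies cleanly on compact spaces, so the argument must be routed through tight compacts, which in turn relies on the Polishness of $S_1$ and $S_2$. If one wishes to state the theorem in the bare generality of arbitrary metric spaces (as the wording \emph{``such that some conditions hold''} perhaps suggests), then an additional separability hypothesis becomes essentially unavoidable; otherwise Prokhorov's direction used above can fail. Case (i), by contrast, is robust and only needs tightness of $\{X_n\}$, which is automatic from $X_n \Rightarrow X$ in the Polish setting.
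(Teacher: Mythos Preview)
The paper does not actually supply a proof of this theorem: it is listed among the ``Supplementary Limit Theorems'' in the appendix as a background result and is stated without any accompanying argument. So there is no paper proof to compare against.

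That said, your argument is correct and is one of the standard routes. Case~(i) is the abstract Slutsky lemma, and your decomposition into $(\mathrm{I}_n)$ and $(\mathrm{II}_n)$ together with tightness plus uniform continuity on a product compact is exactly how it is usually done (see, e.g., Billingsley). Case~(ii) is likewise standard: checking convergence on separable test functions $g(x)h(y)$ and extending by Stone--Weierstrass on tight compacts is clean and works. Your caveat about needing a Polish (or at least separable) ambient space so that Prokhorov's theorem delivers tightness from marginal convergence is well placed, and it matches the implicit setting of the paper, which throughout works in $\mathcal{D}[0,1]$ with the Skorokhod topology. One minor stylistic remark: in case~(ii) you can bypass Stone--Weierstrass entirely by arguing that $\{\mu_n \otimes \nu_n\}$ is tight on $S_1 \times S_2$ (product of tight compacts), hence relatively compact, and that any subsequential weak limit must have the correct marginals and the product structure, forcing it to equal $\mu \otimes \nu$; but this is a matter of taste rather than a gap.
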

Recall that we are given a metric space $(S, d)$ such that as $\mathcal{D} \left( [0,1], \mathbb{R} \right)$ equipped with the Borel $\sigma-$field and the associated set $\mathcal{P}(S)$ of probability measures. Then, the space $\mathcal{P}(S)$ can be metrized by the Prohorov metric $\pi$ and the convergence with respect to the Prohorov metric is the weak convergence, that is, 
\begin{align}
P_n \Rightarrow P \ \ \text{if and only if} \ \phi \left( P_n, P \right) \to 0. 
\end{align}

In a metric space convergence can be characterized by a sequence criterion: A sequence converges if and only if any subsequence contains a further convergent subsequence. Therefore, the weak convergence of a sequence $\left\{ P_n, P \right\}$ of probability measures can be characterized in the following way: 
\begin{align}
P_n \Rightarrow P \iff \pi ( P_n, P ) \to 0, \ \ \text{as} \ n \to \infty 
\end{align}  
if and only if any subsequence $\left\{ P_{n_k} : k \geq 1 \right\}$
contains a further subsequence $\left\{ P_{n_{k^{\prime}} } : k \geq 1 \right\}$ such that $P_{n_{k^{\prime}} } \Rightarrow P \iff \pi (  P_{n_{k^{\prime}} }, P ) \to 0$, as $k \to \infty$, which is equivalent to $P_{n_{k^{\prime}} } \Rightarrow P$, as $k \to \infty$. 

Further, in any metric space a subset $A$ is relatively compact, that is, has a compact closure, if every subsequence $\left\{ P_n \right\} \subset A$ has a subsequence $\left\{ P_{n_{k^{\prime}} } : k \geq 1    \right\}$ with $P_{n_{k^{\prime}} } \to P^{\prime}$ as $k \to \infty$, where the limit $P^{\prime}$ is in the closure of $A$ is relatively compact. Applied to out setting this means: A subset $A \subset \mathcal{P} ( \mathcal{S} )$ of probability measures has compact closure $\bar{A}$ if and only if every sequence $\left\{ P_n \right\} \subset A$ has a subsequence $\left\{ P_{ n_k } \right\}$ with converges weakly to some $P^{\prime} \in \bar{A}$, such that, $P_{ n_k } \Rightarrow P^{\prime}$ as $k \to \infty$. 

Here the limit $P^{\prime}$ may depend on the subsequence. Therefore, the weak convergence $P_n \Rightarrow P$ as $n \to \infty$. First, one shows that $\left\{ P_n \right\}$ is relatively compact and then one verifies that all possible limits are equal to $P$. A theorem due to Prohorov allows us to relate the compact sets of $\mathcal{P} (S)$ to the compact sets of $S$. This is achieved by the concept of tightness. A subset $A \subset \mathcal{P} (S)$ of probability measures in $\mathcal{P} (S)$ is called tight if for all $\epsilon > 0$ there exists a compact subset $K_{\epsilon} \subset S$ such that 
\begin{align}
P \left( K_{ \epsilon } \right) > 1 - \epsilon, \ \ \text{for all} \ \ P \in A.  
\end{align}

\paragraph{Functional Central Limit Theorems}

\begin{theorem}
(Donsker \textit{i.i.d} case) 

\
Let $\xi_1, \xi_2...$ be a sequence of $\textit{i.i.d}$ random variables with $\mathbb{E} \left( \xi_1 \right) = 0$ and $\sigma^2 = \mathbb{E} \left( \xi_1^2 \right) < \infty$. 
Then, 
\begin{align}
\frac{1}{ \sqrt{T} } \sum_{ t= 1}^{ \floor{ T s } } \xi_t \Rightarrow \sigma B (s), 
\end{align}
as $T \to \infty$, where $B$ denotes the standard Brownian motion and $\Rightarrow$ signifies weak convergence in the Skorohod space $\mathcal{D} \left( [0,1], \mathbb{R} \right)$. 
\end{theorem}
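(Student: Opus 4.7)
The plan is to prove Donsker's invariance principle by the classical two-prong method: first establish convergence of finite-dimensional distributions of the normalized partial-sum process, and then verify tightness of its laws on the Skorokhod space $\mathcal{D}([0,1],\mathbb{R})$. Writing $S_T(s) := \frac{1}{\sigma\sqrt{T}}\sum_{t=1}^{\floor{Ts}} \xi_t$, the claim reduces to $S_T \Rightarrow B$ where $B$ is a standard Brownian motion; Prohorov's theorem in the Polish space $\mathcal{D}([0,1],\mathbb{R})$ tells us that weak convergence follows once we have relative compactness (tightness) together with uniqueness of the subsequential limit, the latter being pinned down by its finite-dimensional laws.

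For the finite-dimensional step, I fix $0 = s_0 < s_1 < \cdots < s_k \le 1$ and examine the increment vector with $j$th coordinate $S_T(s_j) - S_T(s_{j-1}) = \frac{1}{\sigma\sqrt{T}}\sum_{t=\floor{Ts_{j-1}}+1}^{\floor{Ts_j}} \xi_t$. Each coordinate is a normalized sum over roughly $T(s_j - s_{j-1})$ independent, identically distributed summands with mean zero and variance $\sigma^2$, so the univariate Lindeberg-Levy CLT yields marginal convergence to $\mathcal{N}(0, s_j - s_{j-1})$. Disjointness of the index ranges makes these $k$ coordinates mutually independent, and an application of the Cramer-Wold device to arbitrary real linear combinations upgrades marginal convergence to joint convergence towards a vector of independent centered Gaussians with the correct variances. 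Transforming back from increments to cumulative sums by a continuous linear map delivers $(S_T(s_1),\ldots,S_T(s_k)) \Rightarrow (B(s_1),\ldots,B(s_k))$, matching the finite-dimensional laws of $B$.

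The main obstacle is tightness in $\mathcal{D}([0,1],\mathbb{R})$ under the Skorokhod topology, which I would settle via Billingsley's moment criterion: tightness follows from the trivial initial condition $S_T(0) = 0$ together with a bound of the form $\mathbb{E}\bigl[(S_T(t) - S_T(s))^2 (S_T(u) - S_T(t))^2\bigr] \le C (u-s)^{2}$ for $0 \le s \le t \le u \le 1$. Under independence of increments this expectation factorizes into $\mathbb{E}[(S_T(t) - S_T(s))^2]\cdot \mathbb{E}[(S_T(u) - S_T(t))^2] \le (t-s+T^{-1})(u-t+T^{-1})$, which is bounded by $(u-s)^2$ up to a negligible $O(T^{-1})$ correction; the floor-function discretization is absorbed by noting that the jumps of $S_T$ are of order $|\xi_{\floor{Ts}}|/(\sigma\sqrt{T}) = o_p(1)$ by Chebyshev, so the Skorokhod modulus is asymptotically dominated by the uniform modulus and tightness on $\mathcal{D}[0,1]$ can in fact be promoted to tightness in the uniform topology (consistent with the continuity of the limit). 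Having secured finite-dimensional convergence and tightness, Prohorov's theorem ensures every subsequential weak limit agrees with $\sigma B$ on a measure-determining family, so the full sequence $\frac{1}{\sqrt{T}}\sum_{t=1}^{\floor{Ts}}\xi_t \Rightarrow \sigma B(s)$ in $\mathcal{D}([0,1],\mathbb{R})$, which is the stated conclusion.
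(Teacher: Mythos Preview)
The paper does not supply its own proof of this theorem: Donsker's invariance principle is stated in the appendix as a classical background result and is immediately followed by the next theorem, with no argument given. There is therefore nothing in the paper to compare your proposal against; the theorem is simply quoted as known.

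Your proposal itself follows the canonical textbook route --- finite-dimensional convergence via the Lindeberg--L\'evy CLT together with independence of disjoint blocks and the Cram\'er--Wold device, then tightness on $\mathcal{D}[0,1]$ through Billingsley's product-moment criterion --- and is correct in outline. One place where you are a little quick is the treatment of the floor discretization in the tightness bound: the inequality $(t-s+T^{-1})(u-t+T^{-1}) \le C(u-s)^2$ does \emph{not} hold uniformly in $T$ as written, and Billingsley's criterion needs a $T$-free right-hand side. The standard fix, which you allude to but do not spell out, is to split into cases: when $u-s < 1/T$ the indices $\floor{Ts},\floor{Tt},\floor{Tu}$ cannot all be distinct, so one factor of the product vanishes identically; when $u-s \ge 1/T$ one has $T^{-1} \le u-s$ and the bound $(t-s+T^{-1})(u-t+T^{-1}) \le 4(u-s)^2$ follows directly. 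With that detail supplied, the argument is complete.
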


\begin{theorem}
Suppose $\xi_1, \xi_2,...$ satisfies a weak invariance principle. Then, 
\begin{align}
S_T(u) = \frac{1}{ \sqrt{T} } S \left( \floor{Tu} \right) \Rightarrow B(u), 
\end{align}
as $T \to \infty$. 
\end{theorem}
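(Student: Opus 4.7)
The plan is to reduce the continuous-index convergence $T\to\infty$ to the integer-index weak invariance principle that is hypothesised. The key observation is that, for any real $T>0$, writing $n:=\floor{T}$, one has $n\to\infty$ together with $T/n\to 1$, so the normalising constants $\sqrt{T}$ and $\sqrt{n}$ are asymptotically equivalent, and the sampling point $\floor{Tu}$ deviates from $\floor{nu}$ by at most a bounded number of indices. Throughout I would work on the Skorokhod space $\mathcal{D}([0,1],\mathbb{R})$ with the $J_1$ topology.

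First I would write the decomposition
\[
S_T(u) \;=\; \sqrt{n/T}\;\frac{1}{\sqrt{n}} S(\floor{Tu})
\;=\; \sqrt{n/T}\Bigl[\,\tilde S_n(u) \,+\, R_n(u)\,\Bigr],
\]
where $\tilde S_n(u):=n^{-1/2}S(\floor{nu})$ and $R_n(u):=n^{-1/2}\bigl[S(\floor{Tu})-S(\floor{nu})\bigr]$. By the hypothesised weak invariance principle applied along the integer subsequence $n=\floor{T}$, we have $\tilde S_n\Rightarrow B$ in $\mathcal{D}([0,1],\mathbb{R})$. Since the scalar $\sqrt{n/T}\to 1$ deterministically, a Skorokhod-space Slutsky argument (continuous mapping for the product) reduces the theorem to showing that $\sup_{u\in[0,1]}|R_n(u)|\to 0$ in probability.

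Second, I would bound the residual pointwise in $u$. For $u\in[0,1]$,
\[
\bigl|\floor{Tu}-\floor{nu}\bigr| \;\le\; 1 + |(T-n)u| \;\le\; 2,
\]
so $S(\floor{Tu})-S(\floor{nu})$ is a sum of at most two of the $\xi_i$'s with index in $\{n-1,n,n+1\}$. Consequently
\[
\sup_{u\in[0,1]}|R_n(u)| \;\le\; \frac{2}{\sqrt{n}}\max_{1\le i\le n+1}|\xi_i|.
\]
Third, I would deduce that the right-hand side is $o_p(1)$ from the weak invariance principle itself. Since the limit $B$ has almost surely continuous paths, tightness of $(\tilde S_n)$ in the $J_1$ topology forces the size of the largest jump of $\tilde S_n$, namely $n^{-1/2}\max_{i\le n}|\xi_i|$, to converge to zero in probability (this is the standard continuity-of-limit criterion for Skorokhod tightness; see Billingsley). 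Hence $\sup_u|R_n(u)|\xrightarrow{P}0$, and combining with $\tilde S_n\Rightarrow B$ and $\sqrt{n/T}\to 1$ yields $S_T\Rightarrow B$ as $T\to\infty$.

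The main obstacle I expect is the third step: extracting $n^{-1/2}\max_{i\le n+1}|\xi_i|=o_p(1)$ purely from the fact that $(\xi_i)$ "satisfies a weak invariance principle," without invoking additional moment or mixing conditions. The cleanest route is to use that $J_1$-tightness together with almost-sure continuity of the limiting Brownian motion rules out persistent jumps of size bounded away from zero, so the maximal increment of $\tilde S_n$ vanishes in probability; from there the rest of the argument is a routine application of Slutsky's lemma and the continuous mapping theorem on $\mathcal{D}([0,1],\mathbb{R})$.
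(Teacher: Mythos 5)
Your proof is correct, but it takes a genuinely different route from the paper's, stemming from a different reading of the (admittedly vague) hypothesis. The paper reads ``satisfies a weak invariance principle'' as an in-probability strong approximation: on a common probability space one has $\max_{n \leq T}\left|S(n)-B(n)\right| = o_p(\sqrt{T})$; its proof then invokes the Brownian scaling identity $\left\{B(u)\right\}_u \overset{d}{=} \left\{T^{-1/2}B(Tu)\right\}_u$ and bounds $\sup_{u \in [0,1]}\left|S_T(u)-B(u)\right|$ by $T^{-1/2}\max_{n\leq T}\left|S(n)-B(n)\right|$ (plus a modulus-of-continuity term for $B$ between $\floor{Tu}$ and $Tu$ that it silently glosses over), so weak convergence follows from a sup-norm coupling bound. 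You instead read the hypothesis as Donsker-type convergence along integer indices and pass to real $T$ by setting $n=\floor{T}$, combining Slutsky with the observation that $J_1$-tightness plus almost-sure continuity of the limit forces the maximal jump $n^{-1/2}\max_{i\leq n}\left|\xi_i\right|$ to vanish in probability; that jump-functional step is exactly the right device for obtaining negligibility of the remainder without any moment or mixing assumptions, and your argument is sound and self-contained. The trade-off: the paper's coupling argument is shorter and delivers a stronger, uniform in-probability approximation, but it presupposes the coupling; yours works from the weaker, purely distributional hypothesis. Two small slips that do not affect validity: the single summand by which $S(\floor{Tu})$ and $S(\floor{nu})$ can differ has index $\floor{nu}+1$, which ranges over all of $\left\{1,\dots,n\right\}$ as $u$ varies, not over $\left\{n-1,n,n+1\right\}$ as you assert; and since $\floor{Tu}\leq\floor{T}=n$ for $u\in[0,1]$, the maximum over $i\leq n$ already suffices, so it matches precisely the set of jumps of $\tilde S_n$ and no appeal to the index $n+1$ is needed.
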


\begin{proof}
We have that $\left\{ B(u): u \geq 0 \right\}$ is equal in distribution to $\left\{ \frac{1}{ \sqrt{T} } B(Tu) : u \geq 0    \right\}$ for each $T$. Therefore, 
\begin{align}
\underset{ u \in [0,1] }{ \mathsf{sup} } \left| S_T(u) - B(u)   \right| \overset{ d }{ = } \underset{ u \in [0,1] }{ \mathsf{sup} } \left| \frac{1}{ \sqrt{T} } \sum_{t=1}^{ \floor{Tu} } \xi_t - \frac{1}{ \sqrt{T} } B(Tu) \right|. 
\end{align}
Therefore, we can conclude that on a new probability space, 
\begin{align*}
\underset{ u \in [0,1] }{ \mathsf{sup} } \left| S_T(u) - B(u)   \right| 
&\overset{ d }{ = } 
\underset{ u \in [0,1] }{ \mathsf{sup} } \frac{1}{ \sqrt{T} } \left| \sum_{t=1}^{ \floor{Tu} } \xi_t -  B(Tu) \right| 
\\
&=
\frac{1}{ \sqrt{T} } \ \underset{ n \leq T }{ \mathsf{max} } \ \left|  S(n) - B(n) \right| \overset{ p }{ \to } 0
\end{align*}
as $T \to \infty$, which implies that $S_T \Rightarrow B$ as $T \to \infty$, for the original processes. 
\end{proof}

Reference: Nuisance-parameter-free changepoint detection in non-stationary series

\medskip

\paragraph{Proof of Theorem 3}

Let $\left\{ X_t \right\}_{n=1}^{\infty}$ be i.i.d random variables. Then, the bootstrap partial sum process 
\begin{align}
S_n^{\star} (t) 
:= 
\frac{1}{ \sqrt{n} } V_n^{\star} \left( \floor{nt} \right) 
= 
\frac{1}{ \sqrt{n} } \sum_{k=1}^{\floor{nt} } Y_{n,k}^{\star}
\end{align} 
for $t \in [0,1]$ has conditionally on $\left\{ Y_{n,k}  \right\}_{k=1}^n$ the same distribution as $\big( W \left( \nu_n^{\star} (t) \right) \big)_{t \in [0,1]}$ for some standard Wiener process $W$ and
\begin{align}
\nu_n^{\star} (t) = \mathsf{Var} \left[ \frac{1}{ \sqrt{n} } \sum_{k=1}^{\floor{nt} } Y_{n,k}^{\star} \big| \left\{ Y_{n,k} \right\}_{k \leq n} \right].
\end{align} 
Then, if $\delta_n \to$ as $n \to \infty$, we have that $\nu_n^{\star} \to \nu$ uniformly in $t \in [0,1]$ almost surely.

\subsection{Additional Proofs}

Next following the excellent statistical framework of \cite{li2001bootstrapping}  we establish the convergence of finite-dimensional distributions and verify the tightness condition. In particular, for the finite-dimensional convergence, it is sufficient to show that for any finite $k$ and $0 < r_1 < ... < r_k < 1$, if 
\begin{align}
V_k 
&= 
\bigg[ S_{\floor{T r_1 }}, S_{\floor{T r_2 }},..., S_{\floor{T r_{k-1} }}, S_{\floor{T r_k}} \bigg]
\\
V^{*}_k 
&= 
\bigg[  S^{*}_{\floor{T r_1 }}, S^{*}_{\floor{T r_2 }},..., S^{*}_{\floor{T r_{k-1} }}, S^{*}_{\floor{T r_k}}  \bigg]
\end{align} 
then, it holds that $d_2 \left( V^{*}_k , V_k  \right) \to 0$. Following the same arguments as in the literature it can be shown that 
\begin{align}
d_2 \left( V^{*}_k , V_k  \right)^2 \leq d_2 \left( F_T , \widetilde{F}_T \right) \to 0. 
\end{align}
Furthermore, for tightness it suffices to show that there exists a non-decreasing function $\psi$ such that, for almost all sample paths, for $0 \leq r_1 \leq r \leq r_2 \leq 1$, 
\begin{align}
\mathbb{P}^{*} \bigg( \bigg\{ \bigg| S^{*}_{ \floor{T r} } - S^{*}_{ \floor{T r_1} } \bigg| \geq \lambda \bigg\} \bigcap \bigg\{ \bigg| S^{*}_{ \floor{T r_2} } - S^{*}_{ \floor{T r} } \bigg| \geq \lambda \bigg\} \bigg) 
\leq 
\frac{1}{ \lambda^4 } \bigg[ \psi(r_2) - \psi(r_1) \bigg] 
\end{align} 
Notice that $\epsilon_t^{*}$ are \textit{i.i.d} draws from $\left\{ \tilde{\epsilon}_t \right\}$. Then, by Markov's inequality we have that    
\begin{align*}
\mathbb{P}^{*} \bigg( \bigg\{ \bigg| S^{*}_{ \floor{T r} } - S^{*}_{ \floor{T r_1} } \bigg| \geq \lambda \bigg\} \bigcap \bigg\{ \bigg| S^{*}_{ \floor{T r_2} } - S^{*}_{ \floor{T r} } \bigg| \geq \lambda \bigg\} \bigg) 
\leq
\frac{ \mathbb{E}^{*} \bigg| S^{*}_{ \floor{T r} } - S^{*}_{ \floor{T r_1} } \bigg|^2 }{ \lambda^2 } \frac{ \mathbb{E}^{*} \bigg| S^{*}_{ \floor{T r_2} } - S^{*}_{ \floor{T r} } \bigg|^2 }{ \lambda^2 }
\end{align*}  
Notice that the above bound in probability holds because
\begin{align}
\left\{ \int_0^1 B_v(r)^2 dr \right\}^{1/2} \left\{ \int_0^1 B_v(r) dW_{\epsilon} (r) \right\} \equiv \mathcal{N} (0,1). 
\end{align}

\subsubsection{Unit Root bootstrap tests for AR(1) models}

Consider the autoregressive model
\begin{align}
X_t = \beta X_{t-1} + u_t, \ \ \ X_0 = 0,
\end{align}
Then, $\hat{\beta}_n$ is the OLS estimator of $\beta$ based on a sample of $n$ observations $( X_1,..., X_n )$. Although the OLS estimator $\hat{\beta}_n$  is consistent, its limit distribution is different for the three possible cases: stationary, unstable and explosive. In particular, for the unstable case such that $\beta = 1$, is is known that the risk of the normalized error of the estimator given by 
\begin{align}
K_n := \frac{1}{ \sigma_u } \left( \sum_{t=1}^n X_{t-1}^2 \right)^{1/2} \left( \hat{\beta}_n - 1 \right)
\end{align} 
converges weakly to 
\begin{align}
K := \frac{1}{2} \bigg[ W^2(1) - 1 \bigg] \left\{ \int_0^1 W^2 (t) dt \right\}^{- 1 / 2}, \ \ \ \text{as} \ \  n \to \infty 
\end{align}
where $W(t)$ is the standard Brownian motion on $[0,1]$. 

Therefore, in the particular paper the authors prove that their resampling algorithm is asymptotically correct under the null hypothesis, $H_0$, in the sense that it converges weakly to the limit distribution for almost all samples $( X_1,..., X_n )$.

\subsubsection{A bootstrap invariance principle}

The study of the asymptotic behaviour of the bootstrap least squares estimate relies on a bootstrap invariance principle, that is, a functional central limit theorem for a stochastic process constructed from the sequence of partial sums corresponding to the bootstrap resamples. 

Consider the sequence of partial sums 
\begin{align}
S_{n,0}^{*} = 0, \ \  S_{n,k}^{*} \sum_{j=1}^k \varepsilon_{n,j}^*, \ \ \ k = 1,...,n, \ n \in \mathbb{N}.
\end{align}
Then a sequence of continuous-time processes such that $\left\{ Y^*_n(s) : s \in [0,1], n \in \mathbb{N} \right\}$

\begin{itemize}

\item Thus, for the predictive regression model since the parameter of interest is the coefficient $\beta$ such that $H_0: \beta = 0$. Therefore, in other words in the spirit of Jeganthan (1980) we have proved that the nonstandard problem of testing for structural break in predictive regression models at an unknown location, has a limiting distribution with a discontinuity (a critical point) in the case of persistent predictors or more specifically when the exponent rate of persistence equals to one.   

\end{itemize}

Furthermore, we introduce some precise bootstrap terminology on bootstrap convergence that we need. Let $Y = \left( Y_1, Y_2,..., Y_n \right)$ be a random sample from a distribution $G$ and let $\uptau ( Y ; G )$ be the statistic of interest. A general goal of bootstrap resampling is to approximate the distribution function $\mathbb{P} \big( \uptau ( Y ; G ) \leq x \big)$ of the statistic  $\uptau ( Y ; G )$ by using the distribution function 
\begin{align}
\mathbb{P} \bigg( \uptau ( Y^{*} ; G ) \leq x \bigg)  \ \ \ \text{of} \ \ \ \uptau ( Y^{*} ; G ) , 
\end{align} 
where $Y^{*} = \left( Y^{*}_1, Y^{*}_2,..., Y^{*}_n \right)$ is the bootstrap resample.

Therefore, if $\uptau ( Y ; G )$ converges weakly to a distribution $S(G)$, it is enough to show that $\uptau ( Y^{*} ; G )$ converges weakly to $S(G)$ for almost all samples $(Y_1, Y_2, )$ or to establish that the distance between the law of $\uptau ( Y^{*} ; G )$ and the law of $S(G)$ tends to zero in probability for any distance metrising weak convergence. Therefore, our first lemma establishes the weak convergence of the finite dimensional distributions of the processes $Y^{*}_n (s)$ for almost all samples $( X_1,..., X_n )$. 

\medskip

\begin{lemma}
Conditionally on $( X_1,..., X_n )$ and for almost all sample paths we have that $( X_1, X_2, ... )$, $\big( Y^{*}_1(s_1), Y^{*}_2(s_2),..., Y^{*}_n(s_d) \big)$ converges weakly to $\big( W(s_1),..., W(s_d) \big)$ as $n \to \infty, \forall \  ( s_1,..., s_d ) \in [0,1]^d$. Furthermore, the tightness of our sequence of stochastic processes is implemented by the following results. 
\end{lemma}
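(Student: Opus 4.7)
The plan is to reduce the multivariate statement to a univariate conditional central limit theorem via the Cramér-Wold device, and then verify the Lindeberg-Feller conditions pathwise with respect to the bootstrap probability measure $\mathbb{P}^{*}$.

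First, I would fix an arbitrary point $(s_1,\ldots,s_d)\in[0,1]^d$ and constants $(a_1,\ldots,a_d)\in\mathbb{R}^d$, and rewrite the linear combination $\sum_{i=1}^{d} a_i Y_{n}^{*}(s_i)$ as a single weighted partial sum of the form $n^{-1/2}\sum_{j=1}^{n} c_{n,j}\,\varepsilon_{n,j}^{*}$, where the deterministic weights $c_{n,j}=\sum_{i=1}^{d} a_i\,\mathbf{1}\{j\le \floor{n s_i}\}$ are uniformly bounded by $\sum_i|a_i|$ and depend only on $j/n$ and the chosen time points. Conditional on the original sample $(X_1,\ldots,X_n)$, the bootstrap innovations $\{\varepsilon_{n,j}^{*}\}_{j=1}^{n}$ are i.i.d.\ draws from the centered empirical distribution $\tilde F_n$.

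Second, I would establish the convergence of the conditional variance. A direct computation gives
\begin{equation*}
\mathsf{Var}^{*}\!\left(\frac{1}{\sqrt{n}}\sum_{j=1}^{n} c_{n,j}\,\varepsilon_{n,j}^{*}\right)
= \left(\frac{1}{n}\sum_{j=1}^{n} c_{n,j}^{2}\right)\hat\sigma_{n}^{2},
\qquad
\hat\sigma_{n}^{2} := \frac{1}{n}\sum_{t=1}^{n} \tilde\varepsilon_{t}^{2}.
\end{equation*}
By Lemma \ref{lemma1} together with the strong law applied to $\varepsilon_{t}^{2}$, one has $\hat\sigma_{n}^{2}\to\sigma^{2}$ almost surely. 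A Riemann-sum argument shows that $n^{-1}\sum_{j=1}^{n} c_{n,j}^{2}\to \sum_{i,k} a_i a_k (s_i\wedge s_k)$, which matches the variance of the corresponding linear combination of $(W(s_1),\ldots,W(s_d))$ up to the factor $\sigma^{2}$. Absorbing $\sigma^{2}$ into the normalization recovers the limit covariance structure of Brownian motion.

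Third, I would verify the conditional Lindeberg condition: for each $\epsilon>0$,
\begin{equation*}
\frac{1}{n}\sum_{j=1}^{n}\mathbb{E}^{*}\!\left[c_{n,j}^{2}(\varepsilon_{n,j}^{*})^{2}\,\mathbf{1}\{|c_{n,j}\varepsilon_{n,j}^{*}|>\epsilon\sqrt{n}\}\right]
\;\le\; C \cdot \frac{1}{n}\sum_{t=1}^{n}\tilde\varepsilon_{t}^{2}\,\mathbf{1}\{|\tilde\varepsilon_{t}|>\epsilon\sqrt{n}/C\}\;\longrightarrow\;0
\end{equation*}
almost surely, where $C=\sum_i |a_i|$. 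The right-hand side vanishes by a uniform integrability argument: under the moment bound on $\{\varepsilon_t\}$ from Assumption \ref{assumption1} and the Glivenko-Cantelli theorem, the empirical truncated second moment of the centered residuals converges almost surely to the corresponding truncated moment of the innovation law, which itself tends to zero as $n\to\infty$. Applying the Lindeberg-Feller CLT conditionally on the sample and then invoking the Cramér-Wold device delivers the asserted joint weak convergence to the Gaussian vector $(W(s_1),\ldots,W(s_d))$ along almost every sample path.

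The main obstacle is the pathwise verification of the Lindeberg condition under $\mathbb{P}^{*}$: one must transfer control from the true innovation distribution to the empirical distribution of residuals $\tilde\varepsilon_t$ uniformly in the tails, which requires both a strengthening of the moment assumption beyond $\mathbb{E}\varepsilon_1^{2}<\infty$ and a careful almost-sure Glivenko-Cantelli argument on truncated moments. Once this is in place, the Gaussian limit and the covariance matching pin down the finite-dimensional limit uniquely; tightness (treated in the following results mentioned in the statement) then upgrades finite-dimensional convergence to weak convergence of the full process $Y_{n}^{*}(\cdot)$ in $\mathcal{D}([0,1])$ equipped with the Skorokhod topology.
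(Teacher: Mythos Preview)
Your approach is correct but differs from the one the paper adopts. The paper, following \cite{li2001bootstrapping} (and ultimately Bickel--Freedman), handles the finite-dimensional distributions via the Mallows metric $d_2$: it forms the increment vector $V_k^{*}=\big(S^{*}_{\floor{nr_1}},\,S^{*}_{\floor{nr_2}}-S^{*}_{\floor{nr_1}},\ldots\big)$ and bounds $d_2(V_k^{*},V_k)^2\le d_2(F_T,\tilde F_T)^2\to 0$ almost surely. Since convergence in $d_2$ implies weak convergence (plus convergence of second moments), this delivers the fidis in one stroke, without ever isolating a Lindeberg condition. Your Cram\'er--Wold plus Lindeberg--Feller route is the more classical CLT argument; it is equally valid and arguably more transparent about where the Gaussian limit comes from, but it trades the single metric inequality for an explicit tail computation.

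One correction to your self-identified ``main obstacle'': you do not need moments beyond $\mathbb{E}\varepsilon_1^2<\infty$ for the conditional Lindeberg condition. Your bound $\frac{C^2}{n}\sum_{t}\tilde\varepsilon_t^{2}\mathbf{1}\{|\tilde\varepsilon_t|>\epsilon\sqrt{n}/C\}$ already tends to zero a.s.\ under second moments alone: replace $\tilde\varepsilon_t$ by $\varepsilon_t$ using Lemma~\ref{lemma1}, fix $M>0$, note that eventually $\epsilon\sqrt{n}/C>M$, and apply the SLLN to obtain $\limsup_n\le \mathbb{E}\big[\varepsilon_1^2\mathbf{1}\{|\varepsilon_1|>M\}\big]$, which vanishes as $M\to\infty$. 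So the obstacle you flag is not genuine, and in fact this is precisely where the Mallows-metric route buys economy: the inequality $d_2(V_k^{*},V_k)\le d_2(F_T,\tilde F_T)$ encodes both the variance matching and the tail control simultaneously.
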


Next, we establish the bootstrap invariance principle. 

\medskip

\begin{proposition}
Let $\left\{ \epsilon_n : n \in \mathbb{N} \right\}$ be a sequence of residuals. Define with $\widehat{F}_n$ to be the empirical distribution of $\hat{\epsilon}_t := \epsilon_t - \frac{ 1 }{ n } \sum_{j=1}^n \epsilon_t$, for $t = 1,...,n$ and let  $\hat{\epsilon}^{*}_{n,t}$ for $t = 1,...,n$ be independent random variables with distribution $\widehat{F}_n$. Furthermore, define with $\big\{ Y_n^{*}(s) : s \in [0,1] \big\}$ for $n \in \mathbb{N}$. Then, $Y_n^{*}$ converges weakly to $W$ in $C[0,1]$ almost surely as $n \to \infty$, where $W$ is the standard one-dimensional Brownian motion on $[0,1]$.  
\end{proposition}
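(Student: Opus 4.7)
The strategy is to assemble the bootstrap invariance principle from the two classical ingredients of Prohorov's theorem: (i) convergence of finite-dimensional distributions, which the preceding lemma already supplies, and (ii) tightness of the sequence $\{Y_n^*\}_{n \in \mathbb{N}}$ in $C[0,1]$ along almost every sample path $(\epsilon_1, \epsilon_2, \ldots)$.

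First I would record the conditional moment structure of the resampling scheme. Conditional on the sample, the variables $\hat{\epsilon}^*_{n,1}, \ldots, \hat{\epsilon}^*_{n,n}$ are i.i.d.\ draws from $\widehat{F}_n$ with $\mathbb{E}^*[\hat{\epsilon}^*_{n,1}] = 0$ by construction of the centering, and with conditional variance $\hat{\sigma}_n^2 := \mathbb{E}^*[\hat{\epsilon}^{*2}_{n,1}] = n^{-1}\sum_{t=1}^n \hat{\epsilon}_t^2$. The strong law applied path-wise to $(\epsilon_t)$ gives $\hat{\sigma}_n^2 \to \sigma^2$ almost surely, while Glivenko-Cantelli delivers $\widehat{F}_n \to F$ almost surely in the Kolmogorov metric. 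These two facts control the conditional moments that drive all subsequent estimates and, combined with the Lindeberg condition $n^{-1}\sum_{t=1}^n \hat{\epsilon}_t^2 \mathbf{1}\{|\hat{\epsilon}_t| > \delta \sqrt{n}\} \to 0$ a.s.\ (immediate from uniform integrability under finite variance), reproduce conditionally the finite-dimensional Gaussian limits already given by the lemma.

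Next I would verify Billingsley's moment criterion for tightness. By conditional independence of the bootstrap innovations and a Rosenthal or Marcinkiewicz-Zygmund inequality, one obtains the moment bound
\[
\mathbb{E}^*\Big[\big|Y_n^*(r) - Y_n^*(r_1)\big|^2 \cdot \big|Y_n^*(r_2) - Y_n^*(r)\big|^2\Big] \le C \, \hat{\sigma}_n^4 \Big(\tfrac{\lfloor nr_2 \rfloor - \lfloor nr_1 \rfloor}{n}\Big)^{2},
\]
valid for $0 \le r_1 \le r \le r_2 \le 1$ provided $\mathbb{E}|\epsilon_1|^{4} < \infty$, an assumption that is standing in the paper's framework. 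Since $\hat{\sigma}_n^2$ converges almost surely to the finite constant $\sigma^2$, the right-hand side is dominated by $C'(r_2 - r_1)^2$ for all $n$ large, along almost every sample path. Billingsley's criterion (\cite{billingsley2013convergence}, Theorem~13.5) then yields tightness of $\{Y_n^*\}$ in $D[0,1]$ almost surely; because the candidate limit $W$ has continuous paths, tightness in $D$ transfers to tightness in $C[0,1]$.

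Combining tightness with the finite-dimensional convergence of the preceding lemma gives, via Prohorov's theorem, the asserted bootstrap invariance principle $Y_n^* \Rightarrow W$ in $C[0,1]$ almost surely. \textbf{The main obstacle} is to upgrade the various probabilistic convergences from in-probability statements (standard in the bootstrap literature) to almost-sure statements along the sample path; this is what the strong form of Glivenko-Cantelli and the SLLN, together with a uniform integrability argument for the truncated fourth-moment term, are used for. A subsidiary difficulty arises if the $\epsilon_t$ must themselves be replaced by the OLS residuals of the near-unit-root AR(1) fit: one then needs a path-wise bound on the approximation error $\hat{\epsilon}_t - \epsilon_t$ that does not contaminate the limits above, which follows from Lemma~\ref{lemma1} applied conditionally on the near-integrated regressor.
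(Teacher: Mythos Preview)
Your proposal is correct and follows essentially the same route as the paper: the paper assembles the result from the preceding lemma on finite-dimensional convergence together with Billingsley's tightness criterion (the paper writes the same fourth-moment/Markov bound $\mathbb{P}^*(|S^*_{\lfloor nr\rfloor}-S^*_{\lfloor nr_1\rfloor}|\ge\lambda,\,|S^*_{\lfloor nr_2\rfloor}-S^*_{\lfloor nr\rfloor}|\ge\lambda)\le\lambda^{-4}[\psi(r_2)-\psi(r_1)]$ and factors it via conditional independence), then invokes Prohorov. Your write-up is, if anything, more careful than the paper's sketch in flagging the almost-sure versus in-probability upgrade and the need for a fourth-moment assumption to make the Billingsley bound work.
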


\subsubsection{Asymptotic behaviour of the bootstrap statistic}

Define the following statistic
\begin{align}
Z_n^{*} := \frac{1}{ \widehat{\sigma}_n } \left( \sum_{t=1}^n X_{n, t-1}^{*2} \right)^{1 / 2} \left( \widehat{\beta}_n^{*} - 1   \right),
\end{align}
to be the bootstrap version of $Z_n$ under $\beta = 1$. We derive the limiting distribution of $Z_n^{*}$ in the theorem below. In order to prove the particular result we need to employ the following lemma. 

\bigskip

\begin{lemma}
Define with 
\begin{align}
r_n^{*} 
= 
\frac{1}{n} \sum_{i=1}^n Y_n^{*2} \left( \frac{i}{n} \right) - \int_0^1 Y_n^{*2} (s) ds.
\end{align}
Then $r_n^{*}$ converges in probability to $0$ as $n \to \infty$, conditionally on $\left( X_1,..., X_n \right)$, and for almost all sample paths $\left( X_1, X_2,... \right)$. 
\end{lemma}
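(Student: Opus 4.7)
The plan is to exploit the piecewise-constant structure of the bootstrap partial-sum process $Y_n^{*}$ inherited from the construction of $S_m^{*}$ earlier in the appendix. With the right-continuous cadlag convention, $Y_n^{*}$ is constant on each subinterval $[(i-1)/n, i/n)$ with value $Y_n^{*}((i-1)/n)$, so the Lebesgue integral reduces to a finite sum,
\begin{equation*}
\int_0^1 Y_n^{*2}(s) \, ds = \sum_{i=1}^n \int_{(i-1)/n}^{i/n} Y_n^{*2}(s) \, ds = \frac{1}{n} \sum_{i=0}^{n-1} Y_n^{*2} \left( \frac{i}{n} \right).
\end{equation*}

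Subtracting this identity from the right-endpoint Riemann sum in the definition of $r_n^{*}$ then telescopes completely,
\begin{equation*}
r_n^{*} = \frac{1}{n} \sum_{i=1}^n Y_n^{*2} \left( \frac{i}{n} \right) - \frac{1}{n} \sum_{i=0}^{n-1} Y_n^{*2} \left( \frac{i}{n} \right) = \frac{1}{n} \left\{ Y_n^{*2}(1) - Y_n^{*2}(0) \right\} = \frac{1}{n} Y_n^{*2}(1),
\end{equation*}
using $Y_n^{*}(0) = 0$ by construction of the bootstrap partial sum.

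To close, I would invoke the bootstrap invariance principle from the Proposition immediately preceding this lemma, which delivers $Y_n^{*} \Rightarrow W$ in $C[0,1]$ conditionally on $(X_1, X_2, \ldots)$ for almost all sample paths. The continuous mapping theorem applied to the evaluation functional at $s=1$ then yields $Y_n^{*}(1) \overset{d^{*}}{\to} W(1) \sim \mathcal{N}(0,1)$ almost surely, so that $Y_n^{*2}(1) = \mathcal{O}_{P^{*}}(1)$ in bootstrap probability for almost every sample path. Dividing by $n$ therefore gives $r_n^{*} \overset{P^{*}}{\to} 0$ almost surely, which is the claim.

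The only real subtlety is aligning the cadlag convention for $Y_n^{*}$ with the right-endpoint Riemann sum so that the telescoping collapses cleanly; should a piecewise-linear interpolation be adopted instead, one would have to bound the local oscillation of $Y_n^{*2}$ on each subinterval via Cauchy--Schwarz and appeal to tightness of $Y_n^{*}$ in $C[0,1]$. No such refinement is needed under the step-function construction used throughout the paper, and no additional moment conditions on the bootstrap residuals beyond those of the preceding Proposition are required.
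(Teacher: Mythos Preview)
The paper states this lemma without proof; it is invoked only as an auxiliary result for the subsequent theorem on the limiting distribution of $Z_n^{*}$. Consequently there is no proof in the paper to compare against, and your argument must be assessed on its own merits.

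Your telescoping identity is correct under the step-function convention and yields the exact relation $r_n^{*}=n^{-1}Y_n^{*2}(1)$, from which the conclusion follows immediately via the bootstrap invariance principle. The one point of friction is that the Proposition you invoke asserts weak convergence of $Y_n^{*}$ in $C[0,1]$, not $D[0,1]$; this is the usual signal that $Y_n^{*}$ is the \emph{piecewise-linear} interpolant of the bootstrap partial sums, not the cadlag step process. Under that reading your clean telescoping does not hold exactly, and the honest route is the one you sketch in your closing paragraph: on each subinterval $[(i-1)/n,i/n]$ the integral of the square of the linear interpolant equals $\tfrac{1}{3n}\{a_i^2+a_ib_i+b_i^2\}$ with $a_i=Y_n^{*}((i-1)/n)$, $b_i=Y_n^{*}(i/n)$, so
\[
r_n^{*}=\frac{1}{3n}\sum_{i=1}^n (b_i-a_i)(2b_i+a_i),
\]
and since $b_i-a_i$ is the $i$th normalised bootstrap increment (of order $n^{-1/2}$ in $P^{*}$-probability) while $\max_i|a_i|\vee|b_i|=\mathcal{O}_{P^{*}}(1)$ by tightness, Cauchy--Schwarz gives $r_n^{*}=\mathcal{O}_{P^{*}}(n^{-1/2})$. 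You flag this correctly; I would simply promote that remark from a caveat to the main argument, since the $C[0,1]$ setting is what the surrounding text actually uses.
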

Our goal is to show that $Z_n^{*}$ converges weakly to $Z$ as $n \to \infty$ almost surely and so this bootstrap resampling approaches properly the correct limiting distribution. 

\begin{theorem}
Under the AR(1) model with $\beta = 1$, $Z_n^{*}$ defined in (2.3) converges weakly to Z as $n \to \infty$ for almost all sample $\left( X_1, X_2, ..., \right)$, where $Z$ is defined in (1.2). 
\end{theorem}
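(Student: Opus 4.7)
The plan is to reduce $Z_n^{*}$ to a continuous functional of the bootstrap partial-sum process $Y_n^{*}$ and then invoke the bootstrap invariance principle together with the continuous mapping theorem. First, using the algebraic identity
\[
\sum_{t=1}^{n} X_{t-1}^{*}\bigl(X_t^{*}-X_{t-1}^{*}\bigr) \;=\; \tfrac12\!\left[X_n^{*2} - \sum_{t=1}^n \bigl(X_t^{*}-X_{t-1}^{*}\bigr)^{2}\right],
\]
which is valid for the bootstrap recursion $X_t^{*}=\hat\beta_n X_{t-1}^{*}+\varepsilon_t^{*}$, I would rewrite the numerator $(\sum X_{t-1}^{*2})^{1/2}(\hat\beta_n^{*}-1)$ of $Z_n^{*}$ in terms of the endpoint $X_n^{*}$, the sum of squared innovations $\sum\varepsilon_t^{*2}$, and a drift term proportional to $(\hat\beta_n-1)\sum X_{t-1}^{*2}$. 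Rescaling by the appropriate powers of $n$ and $\hat\sigma_n$, every term then becomes a continuous functional of $Y_n^{*}$ and of the scalar residual moment $n^{-1}\sum\varepsilon_t^{*2}$.

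Second, the bootstrap invariance principle (the proposition preceding the theorem) gives $Y_n^{*}\Rightarrow W$ in $C[0,1]$ conditionally on $(X_1,X_2,\ldots)$ for almost every sample path, and in particular the endpoint functional yields $Y_n^{*}(1)\xrightarrow{d^{*}} W(1)$. The preceding lemma, namely $r_n^{*}\to 0$ in $P^{*}$-probability, upgrades the Riemann sum $n^{-1}\sum_{i=1}^{n}Y_n^{*2}(i/n)$ to the integral $\int_0^1 Y_n^{*2}(s)\,ds$, which by continuous mapping converges conditionally in distribution to $\int_0^1 W^{2}(s)\,ds$. A conditional law of large numbers for the i.i.d.\ bootstrap innovations (justified by Glivenko--Cantelli applied to the empirical distribution of centred residuals and by almost sure consistency of $\hat\sigma_n^{2}\to\sigma_u^{2}$) gives $n^{-1}\sum\varepsilon_t^{*2}\to\sigma_u^{2}$ in $P^{*}$-probability. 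Combining these through the continuous mapping theorem applied to the functional $(x,y)\mapsto\tfrac12(x^{2}-1)/y^{1/2}$ yields
\[
Z_n^{*}\; \xrightarrow{d^{*}}\; \frac{\tfrac12\bigl[W^{2}(1)-1\bigr]}{\left(\int_0^{1} W^{2}(s)\,ds\right)^{1/2}}\;=\;Z,
\]
for almost every sample path, which is the desired conclusion.

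The main obstacle will be establishing the joint conditional weak convergence of $\bigl(Y_n^{*}(1),\int_0^{1}Y_n^{*2}\bigr)$ and simultaneously neutralising the drift contribution $(\hat\beta_n-1)\sum X_{t-1}^{*2}$, which is $O_{P^{*}}(1)$ conditionally on the sample and does not vanish pointwise. The self-normalised ratio structure of $Z_n^{*}$ should absorb this drift exactly by an Ito-type identity, since $J_c$ and $J_0=W$ satisfy the same quadratic variation and the continuous-mapping functional depends only on endpoint and quadratic variation data. Verifying this rigorously will require either a Skorokhod-representation argument on a common probability space, or a direct expansion of $\hat\beta_n^{*}-1$ combined with the consistency $\hat\beta_n\to 1$ almost surely in order to control the drift contribution uniformly in the bootstrap measure. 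Once this is done, the finite-dimensional-distributions-plus-tightness machinery already used to prove the invariance principle delivers the joint limit, and the theorem follows from one further application of continuous mapping.
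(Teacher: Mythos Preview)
Your overall strategy---express $Z_n^{*}$ as a continuous functional of the bootstrap partial-sum process $Y_n^{*}$, invoke the bootstrap invariance principle $Y_n^{*}\Rightarrow W$ in $C[0,1]$, use the lemma $r_n^{*}\to 0$ to pass from the Riemann sum to $\int_0^1 Y_n^{*2}$, and finish with the continuous mapping theorem---is exactly the route the paper sets up via its Proposition and the lemma immediately preceding the theorem.

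There is, however, a genuine gap in your identification of the bootstrap data-generating mechanism. You work with the recursion $X_t^{*}=\hat\beta_n X_{t-1}^{*}+\varepsilon_t^{*}$ and then spend the second half of the proposal worrying about the resulting drift $(\hat\beta_n-1)\sum X_{t-1}^{*2}$. In the setting of this theorem the bootstrap is \emph{null-imposed}: the pseudo-series is generated as $X_t^{*}=X_{t-1}^{*}+\varepsilon_t^{*}$, so that $X_t^{*}$ is literally the partial sum $\sum_{j\le t}\varepsilon_j^{*}$ and $Y_n^{*}(s)$ is (up to the scaling by $\hat\sigma_n\sqrt{n}$) the bootstrap sample path itself. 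There is therefore no drift term at all, and your algebraic identity reduces directly to $\sum X_{t-1}^{*}\varepsilon_t^{*}=\tfrac12\bigl[X_n^{*2}-\sum\varepsilon_t^{*2}\bigr]$, from which the desired limit follows immediately by continuous mapping.

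Your hope that, had the drift been present, ``the self-normalised ratio structure of $Z_n^{*}$ should absorb this drift exactly'' is in fact false, and this is not a technicality. Under the $\hat\beta_n$-based recursion with $n(\hat\beta_n-1)\to c$ along the given sample path, the It\^o identity you allude to yields the conditional limit
\[
Z_n^{*}\ \xrightarrow{d^{*}}\ \frac{\tfrac12\bigl[J_c^2(1)-1\bigr]}{\bigl(\int_0^1 J_c^2(s)\,ds\bigr)^{1/2}},
\]
which has the same \emph{algebraic form} as $Z$ but with $J_c$ in place of $W$. Since $c$ is a nondegenerate random variable (the Dickey--Fuller limit of $n(\hat\beta_n-1)$), this conditional limit is itself random and does not coincide with the law of $Z$. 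That is precisely the content of the Basawa--Mallik--McCormick--Reeves--Taylor invalidity result reproduced earlier in the same appendix. The quadratic-variation argument fails because the functional $\tfrac12(x(1)^2-1)/(\int x^2)^{1/2}$ depends on the full path $x$, not merely on its quadratic variation. Once you replace $\hat\beta_n$ by $1$ in the bootstrap DGP, the drift vanishes and your first two paragraphs constitute a complete proof.
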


Consider the following statistic 
\begin{align}
\mathcal{T}_n 
:= 
\frac{1}{ s_n } \left( \sum_{t=1}^n X_{n, t-1}^{2} \right)^{1 / 2} \left( \widehat{\beta}_n - 1   \right),
\end{align}
where 
\begin{align}
s_n^2 = \frac{1}{n} \sum_{t=1}^n \left( X_t - \widehat{\beta}_n X_{t-1} \right)^2. 
\end{align}

In particular, \cite{phillips1987time} showed that, under $H_0$, $\widehat{\beta}_n$ converges in probability to 1, as $n \to \infty$ and $\mathcal{T}_n$ converges weakly to 
\begin{align}
\mathcal{T} 
:= 
\frac{\sigma}{ 2 \sigma_u } \left\{ W^2(1) - \frac{ \sigma_u^2 }{ \sigma^2 } \right\} \left\{ \int_0^1 W^2(t) dt \right\}^{- 1 / 2}, \ \ \text{as} \ \ n \to \infty.
\end{align} 
with $\sigma_u^2 = \underset{ n \to \infty }{ \mathsf{lim} } \frac{1}{n} \sum_{t=1}^n \mathbb{E} \left( u_t^2 \right)$.

\newpage 

\subsubsection{Bootstrap Inference in Cointegrating Regressions}

Consider the following system studied by  \cite{psaradakis2001bootstrap} 
\begin{align}
y_t &= \boldsymbol{\beta}^{\top} \boldsymbol{x}_t  + u_t,
\\
\boldsymbol{x}_t &= \boldsymbol{x}_{t-1} + \boldsymbol{v}_t
\end{align}
where $\boldsymbol{\beta}$ is $d-$dimensional parameter vector and $\boldsymbol{\eta}_t = \left( u_t, \boldsymbol{v}_t^{\top}   \right)^{\top} \in \mathbb{R}^{d+1}$ is a strictly stationary and ergodic random process with zero mean, finite covariance matrix, and continuous spectral density matrix $f_{ \boldsymbol{\eta} }(.)$ which is positive definite at zero. For the above system we are interested in conducting inference about $\boldsymbol{\beta}$ based on the FM-OLS and CCR estimators. 

In order to define these estimators we consider the following matrices
\begin{align}
\boldsymbol{\Lambda} = \sum_{j=1}^{\infty} \mathbb{E} \left( \boldsymbol{u}_t \boldsymbol{u}_{t+j}^{\top} \right) \ \ \ \text{and} \ \ \ \boldsymbol{\Sigma} = \mathbb{E} \left(  \boldsymbol{u}_t \boldsymbol{u}_{t}^{\top} \right) 
\end{align}
and decompose the long-run covariance matrix $\boldsymbol{\Omega} = 2 \pi f_{ \boldsymbol{\eta} }(0)$ of the innovation sequence $\left\{  \boldsymbol{\eta} \right\}$ as
\begin{align}
\boldsymbol{\Omega} = \boldsymbol{\Lambda}^{\top} + \boldsymbol{\Delta}, \ \ \ \text{where} \ \ \  \boldsymbol{\Delta} =  \boldsymbol{\Lambda} + \boldsymbol{\Sigma} 
\end{align} 
such that 
\begin{align}
\boldsymbol{\Omega} =  
\begin{pmatrix}
\omega_{11} & \boldsymbol{\omega}_{21}^{\top}
\\
\boldsymbol{\omega}_{21} & \boldsymbol{\Omega}_{22}
\end{pmatrix}, \ \ \ \ \ \
\boldsymbol{\Delta} =  
\begin{pmatrix}
\delta_{11} & \boldsymbol{\delta}_{12}
\\
\boldsymbol{\delta}_{12}^{\top} & \boldsymbol{\Delta}_{22}
\end{pmatrix}, 
\ \ \ \
\boldsymbol{\Delta}_{2} = \left( \boldsymbol{\delta}_{12}^{\top}, \boldsymbol{\Delta}_{22}^{\top} \right)^{\top}
\end{align}
Then, the FM-OLS estimator is defined as below
\begin{align}
\tilde{\boldsymbol{\beta}} = \left( \sum_{t=1}^n \boldsymbol{x}_t  \boldsymbol{x}_t^{\top} \right) \left( \sum_{t=1}^n \boldsymbol{x}_t \tilde{y}_t - n \tilde{\boldsymbol{\delta}} \right),
\end{align}
where we have that 
\begin{align}
\tilde{y}_t 
&= 
y_t - \hat{\boldsymbol{\omega}}_{21}^{\top} \hat{\boldsymbol{\Omega}}^{-1}_{22} \Delta \boldsymbol{x}_t, \ \ \ \Delta \boldsymbol{x}_t = \big( \boldsymbol{x}_t - \boldsymbol{x}_{t-1} \big)
\\
\tilde{\boldsymbol{\delta}}
&= 
\left( \hat{\boldsymbol{\delta}}_{12}^{\top}, \hat{\boldsymbol{\Delta}}_{22}^{\top} \right) \left( 1 , -  \hat{\boldsymbol{\omega}}_{21}^{\top} \hat{\boldsymbol{\Omega}}^{-1}_{22} \right)^{\top} \equiv \hat{\boldsymbol{\delta}}_{12}^{\top} - \hat{\boldsymbol{\omega}}_{21}^{\top} \hat{\boldsymbol{\Omega}}^{-1}_{22} \hat{ \boldsymbol{\Delta}}_{22}^{\top}
\end{align}
where $\hat{\boldsymbol{\Delta}}$ and $\hat{\boldsymbol{\Omega}}$ are consistent estimators of $\boldsymbol{\Delta}$ and $\boldsymbol{\Omega}$ respectively. Therefore, testing hypotheses regarding the individual elements of $\boldsymbol{\beta}$ are based on fully modified t-type statistics which are constructed based on the estimator $\tilde{\boldsymbol{\beta}}$ and the diagonal elements of the covariance matrix given by 
\begin{align}
\tilde{\boldsymbol{V}} 
= 
\left( \hat{\omega}_{11} - \hat{\boldsymbol{\omega}}_{21}^{\top} \hat{\boldsymbol{\Omega}}_{22}^{-1} \hat{\boldsymbol{\omega}}_{21} \right) \left( \sum_{t=1}^n \boldsymbol{x}_t \boldsymbol{x}_t^{\top} \right)^{-1}.
\end{align}
Notice that the nonsingularity of $\boldsymbol{\Omega}$ is equivalent to
\begin{align}
\omega_{11.2} = \omega_{11} - \omega_{12} \boldsymbol{\Omega}_{22}^{-1} \omega_{21}
\end{align}


Next we consider the canonical cointegrating regression estimator of $\boldsymbol{\beta}$ proposed by \cite{park1992canonical}, defined as below
\begin{align}
\bar{\boldsymbol{\beta}}
= 
\left( \sum_{t=1}^n \bar{\boldsymbol{x}}_t \bar{\boldsymbol{x}}_t^{\top} \right)^{-1} \left( \sum_{t=1}^n \bar{\boldsymbol{x}}_t^{\top} \bar{y}_t \right),
\end{align}   
Therefore, in the sequel we take $\hat{\boldsymbol{\Sigma}} = n^{-1} \sum_{t=1}^n \hat{\boldsymbol{u}}_t \hat{\boldsymbol{u}}_t^{\top}$ and estimate the matrices $\boldsymbol{\Omega}$ and $\boldsymbol{\Delta}$ nonparametrically using Parzen lag window, a plug-in bandwidth estimator, and the least-squares residuals $\left\{ \hat{\boldsymbol{u}}_t \right\}_{t=1}^n$ prewhitened via a first-order autoregression.

\paragraph{Bootstrap Algorithm}

Our bootstrap methodology is based on the assumption that $\left\{ \boldsymbol{\eta}_t \right\}$ admits an AR$( \infty )$ representation such that 
\begin{align}
\boldsymbol{\eta}_t = \sum_{j=1}^{ \infty } \boldsymbol{\varphi}_j \boldsymbol{\eta}_{t-j} + \boldsymbol{\varepsilon}_t,
\end{align}
where $\left\{ \boldsymbol{\varepsilon}_t \right\}$ is a zero-mean white-noise process and an absolutely summable sequence of matrices $\big\{  \boldsymbol{\varphi}_j \big\}$ that satisfies the following regularity condition 
\begin{align}
\mathsf{det} \left( \boldsymbol{I}_p - \sum_{j=1}^{
\infty} \boldsymbol{\varphi}_j  z^j \right) \neq 0 \ \ \ \forall  |z | \leq 1.
\end{align}
The particular condition is an assumption that has been extensively used in the cointegration literature and which is satisfied by a large class of random processes, including a causal and invertible multivariate ARMA processes. Then, the idea is to approximate the above linear process representation of the innovation sequence of the predictive regression model by a finite-order autoregressive model for a suitably proxy for $\boldsymbol{u}_t$ and use this to generate bootstrap observations which is the common practise when implementing residual-based bootstrap schemes.  

Consider testing the following hypothesis using the bootstrap procedure
\begin{align}
\mathcal{H}_0: \beta_{i} = \beta_{i,0} \ \ \ \text{versus} \ \ \  \mathcal{H}_1: \beta_{i} \neq \beta_{i,0}
\end{align}

Then, the bootstrap procedure is described as below:


\begin{itemize}

\item[\textbf{Step 1.} ] Calculate the residuals such that 
\begin{align}
\boldsymbol{\eta}^{o}_t = \left( u_t^{o}, \Delta \boldsymbol{x}_{t}^{\top} \right)^{\top} \equiv \left(  y_t - \boldsymbol{\beta}^{o \top} \boldsymbol{x}_t , \Delta \boldsymbol{x}_{t}^{\top} \right)^{\top}, \ \ \text{for} \ \ t = 1,...,n 
\end{align}
where $\boldsymbol{\beta}^{o}$ denotes the FM-OLS or the CCR estimator of $\boldsymbol{\beta}$ obtained under the null hypothesis which implies that the $i-$th element of the parameter vector $\left[ \boldsymbol{\beta}^{o} \right]_i \equiv \beta_{i,0}$.

\item[\textbf{Step 2.} ]  For a fixed positive integer $p$, obtain the estimates $\left\{ \hat{\boldsymbol{\Phi}}_1,..., \hat{\boldsymbol{\Phi}}_p \right\}$ of the coefficient matrices of a $(d+1)-$variate AR$(p)$ model for the innovation sequence $\left\{ \boldsymbol{\eta}_t^{o} \right\}_{t=1}^n$. In particular, these may be Yule-Walker estimates, least-squares estimates or Burg-type estimates. Furthermore, define with $\left\{ \boldsymbol{e}_t \right\}_{t=p+1}^n$ to be the corresponding residuals.    

\item[\textbf{Step 3.} ] For a given positive integer $r$, take a random sample $\left\{ \boldsymbol{e}^{\star}_t \right\}_{t=-r+1}^n$ from the empirical distribution that puts mass $( n - p )^{-1}$ on each of the centered residuals such that   
\begin{align}
\bar{\hat{\boldsymbol{e}}}_t 
= 
\left( \hat{\boldsymbol{e}}_t - \frac{1}{n - p} \sum_{ t = p+1}^n \boldsymbol{e}_t \right), \ \ \ \text{for} \ t = p +1,...,n 
\end{align}
Then, construct a bootstrap noise series $\left\{ \boldsymbol{\eta}_t^{\star} \right\}_{t = - r + 1}^n$ via the recursion below
\begin{align}
\boldsymbol{\eta}_t^{\star} = \sum_{ j = 1}^p \hat{\boldsymbol{\Phi}}_j \boldsymbol{\eta}_{t-j}^{\star} + \boldsymbol{e}_t^{\star}, \ \ \ t = - r + 1,...,n,
\end{align}
where $\boldsymbol{\eta}_t^{\star} = \boldsymbol{0}$ for $t \leq - r$. 

\item[\textbf{Step 4.} ] Generate bootstrap replicates $\left\{ \boldsymbol{x}_t^{\star} \right\}_{t=1}^n$ and $\left\{ \boldsymbol{y}_t^{\star} \right\}_{t=1}^n$ by setting 
\begin{align}
\boldsymbol{x}_t^{\star} = \boldsymbol{x}_{t-1}^{\star} + \boldsymbol{v}_t^{\star} \ \ \ \text{and} \ \ \ y_t^{\star} = \boldsymbol{\beta}^{o \top} \boldsymbol{x}_t^{\star} + u_t^{\star},  \ \ \ \text{for} \ t = 1,...,n
\end{align}
where we have partitioned the innovation sequence $\boldsymbol{\eta}_t^{\star}$ such that $\boldsymbol{\eta}_t^{\star} = \left( u_t^{\star}, \boldsymbol{v}_t^{\star \top} \right)^{\top}$ and set $\boldsymbol{x}_0^{\star} = \boldsymbol{0}$.

\item[\textbf{Step 5.} ] Estimate $\boldsymbol{\beta}$ by FM-OLS or CCM using the bootstrap sample $\left\{ \left( y_t^{\star}, \boldsymbol{x}_t^{\star} \right) \right\}_{t=1}^n$ and compute the value of the associated $t-$statistic for $\mathcal{H}_0: \beta_i = \beta_{i,0}$, say $\mathcal{T}^{\star} ( \beta_{i,0} )$. 

\item[\textbf{Step 6.} ] Repeat steps 3-5 independently $B$ times to obtain sa sample $\big\{ \mathcal{T}_b^{\star} ( \beta_{i,0} ) \big\}_{b=1}^B$ of $\mathcal{T} ( \beta_{i,0} )$ values.

\item[\textbf{Step 7.} ] Approximate the null sampling distribution of  $\mathcal{T} ( \beta_{i,0} )$ by the empirical distribution of $\big\{ \mathcal{T}_b^{\star} ( \beta_{i,0} ) \big\}_{b=1}^B$. The bootstrap p-value of the observed realization $\hat{\mathcal{T}} ( \beta_{i,0} )$ of $\mathcal{T} ( \beta_{i,0} )$ is given by 
\begin{align}
p-value = \frac{1}{B} \sum_{ b = 1}^B \mathbf{1} \left\{ \big| \mathcal{T}_b^{\star} ( \beta_{i,0} ) \big|  \leq  \left| \hat{\mathcal{T}} ( \beta_{i,0} ) \right|  \right\} 
\end{align}

\end{itemize}

\medskip

\begin{remark}
The bootstrap procedure described above uses the residuals obtained under the null hypothesis based on the Studentized difference between $\beta_{i,0}$ and an estimate of $\beta_i$. In particular, bootstrap tests based on such a resampling scheme have generally been found to have good small-sample size and power properties.Furthermore, for the implementation of the sieve bootstrap a necessary step is the selection of an appropriate value for the order $p$ of the approximating autoregression for the innovation sequence $\boldsymbol{\eta}_t$.   
\end{remark}

\begin{remark}
Notice that we are particularly interested to obtain a good enough autoregressive approximation for tests to have the correct empirical Type I error probability. More precisely, as it has been shown in the literature, the closure with respect to certain metrics of the class of stationary linear processes and AR$( \infty )$ processes is fairly large. In particular, this implies that the sieve bootstrap is likely to work reasonably well even when the data-generating mechanism does not belong to the AR$( \infty )$ family.  
\end{remark}

\newpage

In other words, the empirical distribution of $n^{- 1 / 2} \mathcal{T}_b^{\star} ( \beta_{i,0} )$ yields a conservative estimate of the empirical distribution of $n^{- 1 / 2} \mathcal{T} ( \beta_{i,0} )$ as $n, B \to \infty$. Although, we do not consider formal cross-validation methods in this paper to evaluate the performance of the bootstrap procedure such methodologies are discussed in the paper of \cite{efron1983estimating}. Furthermore, our goal is to investigate whether our bootstrap implementation indeed leads to an asymptotically valid test. In particular, this can be seen by examining whether the bootstrap consistently estimates the desired quantile.    
 
\subsubsection{Unit root and Cointegrating limit theory }

Consider the cointegrating system as below
\begin{align}
\boldsymbol{y}_t &= \boldsymbol{A} x_t + \boldsymbol{u}_t,
\\
\boldsymbol{x}_t &= \boldsymbol{x}_{t-1} + \boldsymbol{v}_t
\end{align}
where $\boldsymbol{y}_t$ is an $m-$dimensional vector and $\boldsymbol{x}_t$ is $K-$dimensional such that $\boldsymbol{\eta}_t = \left( \boldsymbol{u}_t^{\prime}, \boldsymbol{v}_t^{\prime}  \right)^{\prime}$ is an $(m + K)-$dimensional vector of innovations and $\boldsymbol{A}$ is an $(m \times K)$ matrix of cointegrating coefficients.  

The mixture process of the limit theory is given by 
\begin{align}
\mathsf{vec} \left\{ n \left( \hat{ \boldsymbol{A} }^{+} - \boldsymbol{A}  \right) \right\} \Rightarrow \mathcal{MN} \left( \boldsymbol{0},  \left( \int_0^1 \boldsymbol{B}_{\uptau}^{+} \boldsymbol{B}_{\uptau}^{+ \prime} \right)^{-1} \otimes \boldsymbol{\Omega}_{yy.x}   \right),
\end{align}
where $\hat{A}^{+}$ is the FM regression estimator with a conditional long-run covariance matrix of $u_t$ given $v_t$ expressed as below
\begin{align}
\boldsymbol{\Omega}_{yy.x} := \boldsymbol{\Omega}_{yy} - \boldsymbol{\Omega}_{yx} \boldsymbol{\Omega}_{xx}^{-1} \boldsymbol{\Omega}_{xy}
\end{align}
Then, the FM regression estimator has the explicit form given by
\begin{align}
\hat{ \boldsymbol{A} }^{+} = \left( \hat{ \boldsymbol{Y} }^{+ \prime} \boldsymbol{X} - n \hat{ \boldsymbol{\Delta} }^{+}_{yx} \right) \times \left( \boldsymbol{X}^{\prime} \boldsymbol{X} \right)^{-1}
\end{align}
where 
\begin{align}
\boldsymbol{X} = \big[ x_1^{\prime},..., x_n^{\prime} \big], \ \ \    \hat{ \boldsymbol{Y} }^{+} = \big[ \hat{y}_1^{+ \prime},..., \hat{y}_n^{+ \prime} \big]^{\prime} \in \mathbb{R}^{ n \times m} 
\end{align}
such that 
\begin{align}
\hat{\boldsymbol{y}}_t^{+} = \boldsymbol{y}_t - \hat{ \boldsymbol{\Omega} }_{yx}  \hat{ \boldsymbol{\Omega} }_{xx}^{-1} \Delta \boldsymbol{x}_t,  \ \ \ \text{and} \ \ \ \boldsymbol{\Delta}_{yx}^{+} = \boldsymbol{\Delta}_{yx} - \boldsymbol{\Omega}_{yx} \boldsymbol{\Omega}_{xx}^{-1} \boldsymbol{\Delta}_{xx}      
\end{align}
where $\hat{ \boldsymbol{\Omega} }_{yx}  \hat{ \boldsymbol{\Omega} }_{xx}^{-1}$ and  $\hat{\boldsymbol{\Delta}}_{yx}^{+}$ are consistent estimates of $\boldsymbol{\Omega}_{yx}  \hat{ \boldsymbol{\Omega} }_{xx}^{-1}$ and  $\boldsymbol{\Delta}_{yx}^{+}$ respectively, which may be constructed in the familiar fashion using semiparametric lag kernel methods with residuals from a preliminary cointegrating least squares regression.

\newpage

\begin{proof}
Setting with $u_{y.xt} = u_{yt} - \Omega_{yx} \Omega_{xx}^{-1} \Delta x_t$ and $U_{y.x} = \big[ u_{y.x1}^{\prime},..., u_{y.xn}^{\prime} \big]^{\prime}$ as the corresponding data matrix, we have that 
\begin{align}
\hat{\boldsymbol{y}}_t^{+} 
= 
\boldsymbol{y}_t - \hat{ \boldsymbol{\Omega} }_{yx}  \hat{ \boldsymbol{\Omega} }_{xx}^{-1} \Delta \boldsymbol{x}_t 
=
\boldsymbol{A} \boldsymbol{x}_t + \boldsymbol{u}_{0.xt} 
\end{align}

\end{proof}

In terms of asymptotic theory we have that 
\begin{align}
\xi_{ y . xn}^{+} (s) := \xi_{ yn } (s) - \boldsymbol{\Omega}_{yx} \boldsymbol{\Omega}_{xx}^{-1} \xi_{ xn}(s) \Rightarrow \boldsymbol{B}_{y.x} (s) \equiv BM \big( \boldsymbol{\Omega}_{y.xx}  \big)
\end{align}

\begin{remark}
Notice that the unit root limit theory given by expressions (17) and (18) of PM (2009) involves the demeaned process $B^{\mu}(s)$ although there is no intercept in the regression. In particular, the demeaning effects arises because as shown in expression (16), in the direction of the initial condition, the time series is dominated by a component that behaves like a constant. 
\end{remark}
Therefore, the sample moment matrix is no longer asymptotically singular such that 
\begin{align}
\frac{1}{n^2} \sum_{t=1}^n \boldsymbol{x}_{t-1} \boldsymbol{x}_{t-1}^{\prime} \Rightarrow \int_0^1 J_c^{*} (r) J_c^{* \prime} (r)
\end{align}

Related theory regarding the weak convergence to the matrix stochastic integral is presented by \cite{phillips1988weak}. We briefly summarize the main results below. We define the partial sum process such that $S_t = \sum_{ j=1}^t u_j$ and construct the following random element of $D^n [0,1]$ such that 
\begin{align}
X_T (r) = T^{- 1 / 2} S_{ \floor{Tr} } = T^{- 1 / 2} S_{j-1}, \ \ \ \ \text{with} \ \ \frac{(j-1)}{T} \leq r \leq \frac{j}{T}.
\end{align}
Then as $T \to \infty$, we have that 
\begin{align}
X_T (r) \Rightarrow B(r), 
\end{align}
where $B(r)$ is vector Brownian motion on $C^n[0,1]$, with covariance matrix $\Omega$. Therefore, one major time series application is to the theory of regression for integrated processes. In particular, $\left\{ u_t \right\}_{ t = 1 }^{\infty}$ is generated by a linear process such as a finite-order stationary and invertible vector ARMA model then $y_t$ is known as an integrated process of order one. Furthermore, we are often interested in the asymptotic behaviour of statistics from linear least-squares regressions with integrated processes. Then, from the first-order vector autoregression of $y_t$ on $y_{t-1}$ we obtain the regression coefficient matrix such that 
\begin{align}
\hat{A} = \left( \sum_{t=1}^T y_t y_{t-1}^{\prime} \right) \left( \sum_{t=1}^T y_{t-1} y_{t-1}^{\prime} \right)^{-1}. 
\end{align} 

\newpage 

The asymptotic behaviour of $\hat{A}$ is described by a corresponding functional Brownian motion. To be more precise consider standardized deviations of $\hat{A}$ about $I_n$ such that
\begin{align}
T \left( \hat{A} - I_n \right) = \left( \frac{1}{T} \sum_{t=1}^T u_t y_{t-1}^{\prime} \right) \left( \frac{1}{T^2} \sum_{t=1}^T y_{t-1} y_{t-1}^{\prime} \right)^{-1}. 
\end{align}   
Therefore, to obtain the asymptotic behaviour of the above statistic we write the sample second moment $\left( \frac{1}{T^2} \sum_{t=1}^T y_{t-1} y_{t-1}^{\prime} \right)$ as a quadratic functional of the random element $X_T(r)$, at least up to a term of $o_p(1)$. That is, 
\begin{align}
\frac{1}{T^2} \sum_{t=1}^T y_{t-1} y_{t-1}^{\prime} = \int_0^1 X_T(r) X_T(r)^{\prime} dr + o_p(1)  
\end{align}
Then, by an application of the continuous mapping theorem we can establish that 
\begin{align}
\frac{1}{T^2} \sum_{t=1}^T y_{t-1} y_{t-1}^{\prime} 
\Rightarrow \int_0^1 B_{(r)} B_{(r)}^{\prime} dr, \ \ \ \text{as} \ \ T \to \infty.  
\end{align}
Therefore, in a similar way the following limit result holds
\begin{align}
\frac{1}{T} \sum_{t=1}^T u_t y_{t-1}^{\prime} \Rightarrow \int_0^1 dB(r) B(r)^{\prime} 
\end{align}

Furthermore, the paper of \cite{Phillips2010bootstrapping}\footnote{The particular monograph in the words of the author is: "\textit{Dedicated to Phoebus Dhrymes whose advanced textbooks in econometrics have trained and educated generations of econometricians and applied researchers}".}, examines the asymptotics of the  bootstrap implementation for integrated time series data. See also the framework presented by \cite{parker2006unit}. Although in this paper we focus on the implementation of our proposed bootstrap algorithm for nonstationary autoregressive processes in predictive regressions models, when these are assumed to follow local-to-unity processes, the asymptotic validity of the bootstrap predictability tests is based on the fact that these employ a partial sum process to generate the bootstrap samples under the null, thereby ensuring that the test conforms to a unit root limit distribution.  

In particular, \cite{Phillips2010bootstrapping}  consider the continuous moving block bootstrap for a unit root process such as $x_t = x_{t-1} + u_t$ where $t = 1,...,n$ where $u_t$ is a linear process, to obtain the bootstrap sample $\left\{ x^{cb*}_{t}  \right\}_{t=1}^n$. Then, in order to construct a unit root testing based on the continuous moving block bootstrap the limit distribution of the serial correlation coefficients is obtained such that 
\begin{align}
\widehat{\rho}^{cb *} =  \frac{ \displaystyle \frac{1}{ n^2 } \sum_{t=1}^n x_t^{cb *}  x_{t-1}^{cb *} }{ \displaystyle \frac{1}{ n^2 } \sum_{t=1}^n  x_{t-1}^{cb * 2} }    
\end{align}  

\newpage

In particular, the denominator can be shown to have the following limit
\begin{align}
\frac{1}{ n^2 } \sum_{t=1}^n  x_{t-1}^{cb * 2} \overset{ d^{*} }{ \to } \int_0^1 B(r)^2 dr, 
\end{align}
However, as it has been established by \cite{Phillips2010bootstrapping} although the continuous moving block bootstrap proposed by \cite{paparoditis2001unit} is consistent under the null hypothesis that the sample data is I(1), which can be used to construct estimates of the finite sample distribution of unit root tests that involve semiparametric corrections for serial correlation, the particular approach does not produce consistent estimates of the null unit root under the alternative hypothesis. Therefore, we focus on the bootstrap procedure of \cite{paparoditis2003residual} that use residual-based partial summation methods when constructing the bootstrap sample.

\pgfmathdeclarefunction{gauss}{2}{%
  \pgfmathparse{1/(#2*sqrt(2*pi))*exp(-((x-#1)^2)/(1.4*#2^2))}%
}

\begin{center}

\begin{tikzpicture}
\begin{axis}[
  no markers, domain=0:12, samples=100,
  axis lines*=left,  
  every axis y label/.style={at=(current axis.above origin),anchor=south},
  every axis x label/.style={at=(current axis.right of origin),anchor=west},
  height=5cm, width=16cm,
  xtick={4,8.0}, 
  ytick=\empty,
  enlargelimits=false, clip=false, axis on top,
  grid = major
  ]
  
  \addplot [fill=blue!20, draw=none, domain=5:7.96] {gauss(4,1.2)} \closedcycle;

  \addplot [very thick,blue!50!black] {gauss(4,1.2)};
  \addplot [very thick,orange!50!black] {gauss(8.0,1.2)};

\end{axis}
\label{fig}
\end{tikzpicture}

\end{center}

The above Figure illustrates the bootstrap-based test and the alternative hypothesis, such that $\mathbb{H}_0 : \beta \neq 0$. The power of the bootstrapped predictability test is determined by the mean and the variance of the limiting distribution under the alternative hypothesis. Therefore, if the mean of the limit distribution under the alternative hypothesis is sufficiently large, then the bootstrap-based test successfully rejects the null hypothesis with high probability.

\newpage 

\section{Efficient Tests}

Although not based on formal optimality criterion, comparing the various testing methodologies from the literature, the procedure of Ng and Perron (2001) has an asymptotic local power function that is indistinguishable from the near-effcients tests of \cite{elliott1996efficient} and superior power than the corresponding test based on the OLS demeaning considered in \cite{stock1994unit} and \cite{perron1991continuous}. One therefore conclude that it is largely the quasi-GLS method of demeaning that brings about this power advantage over the standard OLS demeaned unit root tests. 

Therefore, rather than focusing on the performance of predictive tests in the strongly dependent case that are driven by a formal asymptotic optimality property, we explore whether, and if so in what settings, using quasi-GLS demeaning of the persistent predictor can deliver statistical tests with good power. Indeed this is the case, as it is shown in Theorem 1 below, under strong persistence the limiting distribution of $\mathcal{T}$ features a component which is a weighted combination of two distributions, the first of which is the local alternative limit distribution of the OLS-demeaned Dickey-Fuller statistic and the second is a standard normal. Specifically, the Dickey-Fueller component dominates the standard normal component when the degree of endogeneity $\left| \rho_{xy} \right|$ is large. Where the degree of endogeneity is small the reverse holds and so here we might not necessarily expect to see any gains from using a test based on quasi-GLS dmeaning the predictor. Thus, an exploration of of the asymptotic local power functions of asymptotically conservative implementations - needed to account for the dependence on $c$ and $\rho_{xy}$ under the null - of the tests shows that the quasi-GLS demeaning of the persistent predictor can indeed deliver power gains relative to $\mathcal{T}$ for moderate to large $\rho_{xy}$ in the strongly persistent case. 

Furthermore, the authors discuss a t-ratio from a variant of the standard predictive regression (using a predictive regression model with a single predictor and a model intercept) where the OLS demeaned returns and regressed on the quasi-GLS demeaned lagged predictor. Specifically, in the case of strongly persistent predictor, we have proposed a feasible method for obtaining (conservative) asymptotic critical values of the test statistics. Moreover, an analysis of the asymptotic local power functions of the resulting (asymptotically) conservative tests in the case where the predictor is strongly persistent showed that these vary considerably with the endogeneity correlation parameter. The authors conclude that under the assumption of a common order of persistence, it may be possible to generalize their proposed statistical testing approach to accommodate multiple predictors. Lastly, the authors mention that investigating this possibility and how well it works in practice compared to the other tests mentioned above is beyond the scope of the present paper but would constitute an interesting topic for further research. 

Semiparametric efficiency has long been an essential issue in the econometric literature, where a common situation is to treat the innovation density as the nonparametric component. The theory for the standard locally asymptotically normal (LAN) or locally asymptotically mixed normal (LAMN) experiment has been well developed. The main idea follows the insight of Stein (1956) that the semiparametric efficiency bound is given by the lowest aming all the bounds of parametric submodels embedding the unknown true model, and the corresponding submodel is then called least-favorable.  

\newpage 

Unfortunately, this so-called least-favorable parametric submodel is not straightforwardly extended to some nonstandard limit such as the case of LABF experiments. In particular, the main challenge of the proposed framework is that the so-called least-favorable parametric submodel approach cannot be easily extended in the case of nonstandard statistical problem such as unit root testing and their asymptotic limits such as the locally asymptotically Brownian functional experiments. In particular, it is difficult to conduct statistical inference due to the fact that in those cases the least favorable submodel is not easily computationally tractable based on the conventional score-projecting method.

The particular challenge arises due to the inability of the conventional statistical theory to be directly applicable in this scenario. Specifically, by the Neyman-Pearson lemma, as any invariant test statistic must be a function of maximal invariant test function, the likelihood ratio of the maximal invariant defines the power envelope for all invariant tests. Consequently, the obtained maximal invariant likelihood ratio of the maximal invariant test statistic defines the power envelope for all invariant tests. 

Now in the case of testing methodologies in which the statistical procedure is allowed to employ a hybrid approach then it achieves a favourable test performance to the conventional method due to the fact that the testing methodology is allowed to change according to the degree of endogeneity, which is affecting the applicability of the maximal invariant principle. In other words, even though the current approach does not replace the particular statistical principle it manages to mimic its asymptotic behaviour thereby providing a good approximation of the true power envelope function under the alternative hypothesis. However, even in the novel approach of JM (Ecta, 2006) the authors mention that the proposed theory even though is developed based on related parameter restrictions and optimality criteria the optimality theory applies only in the case when the Gaussianity assumption is not violated which in a way is too restrictive since the applicability of the theory is based on the underline distributional assumption of the innovation sequence used for the data generating mechanism. More recently, the proposed statistical testing approach of MP (2022) can accommodate robust inference regardless of the assumption on the distribution of the innovation sequence, thereby allowing a distribution-free approach due to the nature of the proposed estimator that is robust across the spectrum of stationarity and nonstationarity.

\newpage 

\section{R Coding Procedures}

\subsection{Bootstrap for IVX Estimation under structural breaks (indicative)}

\begin{verbatim}

##################################################
# Step : Estimate bootstrap sup-Wald IVX statistic 
##################################################
n <- NROW(Xt)
y.t.star.matrix <- matrix( 0, nrow = n, ncol = B )
    
for ( b in 1:B )
{
model1     <- lm( Yt  ~ Xlag - 1) 
u.hat      <- as.matrix( as.vector( residuals( model1 ) ) )
kappa.t    <- as.matrix( rnorm( n, 0 , 1 ) )
u.hat.star <- u.hat * kappa.t
y.t.star.matrix[ ,b] <- as.matrix( u.hat.star ) 
}
    
bootstrap.Wald.IVX.matrix <- matrix( 0, nrow = B, ncol = 1 )

output <- foreach ( j = 1:10 , .combine = 'c' ) %do% 
{
    
y.t <- y.t.star.matrix[ ,j]  
estimation.Wald.IVX.function  <- estimation_Wald_IVX_function( Yt = y.t, Xt = Xt, 
Xlag = Xlag, pi0 = 0.15 ) 
sup.Wald.IVX.statistic        <- as.numeric( estimation.Wald.IVX.function )
bootstrap.Wald.IVX.matrix[j , 1] <- sup.Wald.IVX.statistic
    
return( bootstrap.Wald.IVX.matrix ) 
    
}# end dopar

bootstrap.IVX.statistic <- as.matrix( sort(  bootstrap.Wald.IVX.matrix) )
boot.IVX.statistic      <- bootstrap.IVX.statistic[(0.95*B),1]
sup_Wald_IVX_matrix_bootstrap[i ,1] <- boot.IVX.statistic 

if (  sup_Wald_IVX_statistic  > boot.IVX.statistic )
{
  empirical_size  <-  ( empirical_size + 1 ) 
}
    
\end{verbatim}

\end{appendix}

\newpage 

\bibliographystyle{apalike}

\footnotesize{
\bibliography{myreferences1}}

\end{document}